\theoremstyle{plain}
\newtheorem{theorem}{Theorem}[section]
\newtheorem{lemma}[theorem]{Lemma}
\theoremstyle{remark}
\newtheorem{definition}{Definition}
\def\cR{\mathbb{R}}
\def\N{\mathbb{N}}
\def\rP{\mathbb{P}}
\def\rE{\mathbb{E}}
\def\reg{\mathop{\rm reg}}
\def\diam{\mathop{\rm diam}}
\def\new{\mathop{\rm new}}
\def\ddim{\mathop{\rm ddim}}
\def\Lip{\mathop{\rm Lip}}
\def\av{\mathop{\rm av}}
\def\sym{\mathop{\rm sym}}
\def\opt{\mathop{\rm opt}}
\def\T{\mathop{\rm T}}
\def\L{\mathop{\rm L}}
\def\e{\mathop{\rm e}}
\def\Pr{\mathop{\rm Pr}}
\def\Id{\mathop{\rm Id}}
\def\intr{\mathop{\rm int}}
\def\argmin{\mathop{\rm arg\, min}}
\def\hx{\hat{x}}
\def\B{{\mathcal B}}
\def\F{{\mathcal F}}
\def\P{{\mathcal P}}
\def\K{{\mathcal K}}
\def\M{{\mathcal M}}
\def\S{{\mathcal S}}
\def\balpha{{\boldsymbol \alpha}}
\def\bpi{{\boldsymbol \pi}}
\def\by{{\bf y}}
\def\ba{{\bf a}}
\def\bx{{\bf x}}
\def\bz{{\bf z}}
\def\tx{{\tilde x}}
\def\ta{{\tilde a}}
\def\hx{{\hat x}}
\def\sPr{{\mathsf{Pr}}}
\def\sX{{\mathsf X}}
\def\sS{{\mathsf S}}
\def\sA{{\mathsf A}}
\def\sZ{{\mathsf Z}}
\def\sE{{\mathsf E}}
\def\sU{{\mathsf U}}
\def\e{\mathsf{e}}
\def\g{\mathsf{g}}
\def\m{\mathsf{m}}
\def\fC{\mathsf{C}}
\def\Cp{\mathsf{Cp}}
\def\L{\mathsf{L}}
\def\K{\mathsf{K}}
\def\R{\mathsf{R}}
\def\bp{{\bar p}}
\def\bc{{\bar c}}
\def\ha{{\hat a}}
\def\hx{{\hat x}}
\def\tbx{\tilde{{\bf x}}}
\def\tba{\tilde{{\bf a}}}
\def\talpha{\tilde{\alpha}}
\def\halpha{\hat{\alpha}}
\def\balpha{\bar{\alpha}}
\def\hkappa{\hat{\kappa}}
\def\bpi{{\boldsymbol \pi}}
\def\rL{\mathbb{L}}
\def\a{\mathsf{a}}
\def\b{\mathsf{b}}
\def\c{\mathsf{c}}
\begin{document}

\begin{frontmatter}
\title{Linear Mean-Field Games with Discounted Cost}
\runtitle{Linear MFGs with Discounted Cost}

\begin{aug}
\author[A]{\fnms{Naci}~\snm{Saldi}\ead[label=e1]{naci.saldi@bilkent.edu.tr}}
\address[A]{Department of Mathematics,
Bilkent University\printead[presep={,\ }]{e1}}
\end{aug}

\begin{abstract}
In this paper, we introduce discrete-time linear mean-field games subject to an infinite-horizon discounted-cost optimality criterion. The state space of a generic agent is a compact Borel space.  At every time, each agent is randomly coupled with another agent via their dynamics and one-stage cost function, where this randomization is generated via the empirical distribution of their states (i.e., the mean-field term). Therefore, the transition probability and the one-stage cost function of each agent depend linearly on the mean-field term, which is the key distinction between classical mean-field games and linear mean-field games. Under mild assumptions, we show that the policy obtained from infinite population equilibrium is $\varepsilon(N)$-Nash when the number of agents $N$ is sufficiently large, where $\varepsilon(N)$ is an explicit function of $N$. Then, using the linear programming formulation of MDPs and the linearity of the transition probability in mean-field term, we formulate the game in the infinite population limit as a generalized Nash equilibrium problem (GNEP) and establish an algorithm for computing equilibrium with a convergence guarantee.
\end{abstract}

\begin{keyword}[class=MSC]
\kwd[Primary ]
91A16 , 91A15, 91A10
\kwd[; secondary ] 93E20
\end{keyword}

\begin{keyword}
Linear mean-field games, discounted cost, generalized Nash equilibrium problems
\end{keyword}

\end{frontmatter}
\tableofcontents

\section{Introduction}\label{sec1}

This paper introduces linear mean-field games, which are discrete-time stochastic dynamic games with one-stage cost and transition probability that are linear with respect to the empirical distribution of the states (i.e., mean-field term). Specifically, at each time step, a generic agent is randomly coupled with another agent, where the randomization is generated via a mean-field term. The standard method for analyzing these game models is to look at the problem's infinite population limit to obtain an approximate Nash equilibrium. This idea was first introduced to deal with continuous-time differential games with a large number of agents interacting via a mean-field term in the works of \cite{HuMaCa06} and \cite{LaLi07}. We refer the reader to \cite{HuCaMa07,TeZhBa14,Hua10,BeFrPh13,Ca11,CaDe13,GoSa14,MoBa16} for studies on continuous-time classical mean-field games with different models and cost functions.

In this paper, our goal is to obtain approximate Nash equilibria for linear mean-field games by considering a stationary infinite population limit. In the finite-agent setting, we first show that the \emph{stationary} or \emph{oblivious} mean-field equilibrium (see \cite{WeBeRo05,WeBeRo08}), which is the equilibrium notion in the infinite population limit, is approximately Nash. In particular, we can tell how stationary mean-field equilibrium is close to the true Nash equilibrium by looking at the number of players. Then, by formulating the stationary infinite population game as a generalized Nash equilibrium problem (GNEP), we establish an algorithm to compute stationary mean-field equilibrium.

In classical stationary mean-field games, a generic agent models the collective behavior of other agents (\cite{WeBeRo05}) as a time-homogeneous distribution, and therefore faces a  Markov decision process (MDP) with a constraint on the stationary distribution of the state. In this model, both one-stage cost and transition probability depend, in general, nonlinearly on the infinite population limit of the mean-field term, which is the key distinction between linear mean-field games and classical mean-field games. The stationary mean-field equilibrium consists of a policy and a distribution that satisfy the Nash certainty equivalence (NCE) principle (\cite{HuMaCa06}). This principle requires that the policy should be optimal under a given distribution, which is supposed to be the stationary infinite population limit of the mean-field term, and that when the generic agent applies this policy, the resulting stationary distribution of the agent's state must be the same as this distribution.  Under quite mild assumptions, the existence of stationary mean-field equilibrium can be proved via Kakutani's fixed point theorem. Moreover, it can be established that when the number of agents is large enough, the policy in stationary mean-field equilibrium is an approximate Nash equilibrium for a finite-agent setting (\cite{AdJoWe15}).

In the literature for classical mean-field games, an algorithm is established in \cite{WeBeRo10} for computing oblivious equilibrium in a stationary mean-field industry dynamics model. \cite{AdJoWe15} consider a stationary mean-field game model with a countable state-space under an infinite-horizon discounted-cost criterion.  \cite{HuMa19} study stationary mean-field games with binary action space, demonstrating the existence and uniqueness of the stationary mean-field equilibrium. \cite{WeLi22} consider stationary mean-field games with a continuum of states and actions, and establishes a novel uniqueness result for stationary mean-field equilibrium. \cite{GoMoSo10} study both stationary and non-stationary mean-field games with a finite state space over a finite horizon and establishes the existence and uniqueness of the mean-field equilibrium for both cases. The references \cite{ElLiNi13,MoBa15,NoNa13,MoBa16-cdc} consider discrete-time mean-field games with state dynamics that is linear in state, action, and mean-field term. While the state dynamics in the latter case is linear in the mean-field term, the corresponding transition probability can still be nonlinear in the mean-field term, making it distinct from the current game model being considered.

The previous studies reviewed only establish the existence and uniqueness of the mean-field equilibrium, but do not provide an algorithm with a guarantee of convergence to compute it, with the exception of the model with linear state dynamics and two other papers \cite{WeBeRo10,AnKaSa20}. This work investigates this problem for linear mean-field games and proposes an algorithm that formulates the game as a GNEP, proving the convergence of the algorithm to the stationary mean-field equilibrium.

\subsection{Contributions}

\begin{itemize}
\item[1.] In this paper, we introduce a novel mean-field game model that is called linear mean-field games. In this model, agents randomly interact with each other, and the probability of this random interaction is given by the mean-field term. As a result, both the one-stage cost and the transition probability depend linearly on the mean-field term. This is, in general, not true in classical mean-field games, even for models with linear state dynamics. 
\item[2.] In Lemma~\ref{lemma1}, it is shown that an equilibrium policy in the infinite population limit is Lipschitz continuous. Similarly, Lemma~\ref{lemma6} demonstrates that the best-response policy to the infinite population equilibrium in the finite-agent setting is also Lipschitz continuous. To obtain this result, it is proved in Lemma~\ref{lemma3} and Lemma~\ref{lemmaa-new} that the optimal value function of the agent that computes the best-response policy is Lipschitz continuous. These results do not impose any conditions on the set to which the best-response policy belongs, unlike in previous work such as \cite{SaBaRa18}, where the best-response policy is assumed to be a component of the set of Markov policies that use only local state information.
\item[3.] Using Lemma~\ref{lemma1} and Lemma~\ref{lemma6}, Theorem~\ref{theorem3} proves that the policy in the infinite population equilibrium is $\varepsilon(N)$-Nash when the number of agents $N$ is sufficiently large, where $\varepsilon(N)$ is an explicit function of $N$. In the literature, there exist various results related to both continuous and discrete time, as well as static classical mean-field games, that establish a relationship between the rate at which $\varepsilon(N)$ approaches zero as $N$ increases. Some examples of papers that explore this relationship include \cite{MoBa16,MoBa16-cdc,MoBa18,MoBa19,Bas18}. Our result is in line with those found in the literature, but with the possibility of having more relaxed conditions. This is achievable due to the linearity of the one-stage cost and transition probability with respect to the mean-field term. 
\item[4.] We use the linear programming formulation of MDPs and the linearity of the transition probability in the mean-field term to express the game in the infinite population limit as a GNEP, for which there exists a large body of literature on computing equilibrium solutions. By adapting one of these methods to our problem, we develop an algorithm for computing equilibrium in the infinite population limit with a convergence guarantee.   
\end{itemize}

\smallskip

\noindent\textbf{Notation.} 
For a metric space $\sE$, we let $C_b(\sE)$ denote the set of all bounded continuous real functions on $\sE$, $\P(\sE)$ denote the set of all Borel probability measures on $\sE$, and $\B(\sE)$ denote the collection of Borel sets. For any $\sE$-valued random element $x$, ${\cal L}(x)(\,\cdot\,) \in \P(\sE)$ denotes the distribution of $x$. A sequence $\{\mu_n\}$ of measures on $\sE$ is said to converge weakly to a measure $\mu$ if $\int_{\sE} g(e) \, \mu_n(de)\rightarrow\int_{\sE} g(e) \, \mu(de)$ for all $g \in C_b(\sE)$. The set of probability measures $\P(\sE)$ is endowed with the Borel $\sigma$-algebra induced by weak convergence. In this paper, $\|\cdot\|$ denotes the Euclidean norm and $\|\cdot\|_{\infty}$ denotes the $l_{\infty}$-norm. The notation $v\sim \nu$ means that the random element $v$ has distribution $\nu$. Unless otherwise specified, the term ``measurable" will refer to Borel measurability.

\section{Preliminary Definitions and Results}\label{sec1}

In this section, we provide definitions used throughout the paper and present some preliminary lemmas that are necessary to prove the main results. These lemmas are generally easy to prove, but we include their proofs for completeness.

For each $M\geq1$, let us define the mean-field term function $\e$ from $\sX^M$ to $\P(\sX)$ as follows, where $\sX$ is the compact state space of the mean-field game
$$
\e[\,\cdot\,|\,\bx] := \frac{1}{N} \sum_{i=1}^M \delta_{x_i}(\,\cdot\,)
$$
We also define the following set $\P_{M}(\sX):=\e[\,\cdot\,|\,\sX^M]$, which is the image of $\sX^M$ under $\e$ into $\P(\sX)$.
For any set $\{a_1,\ldots,a_R\}$, let $\S_{\{a_1:a_R\}}$ denote the permutation group of this set; that is, $\S_{\{a_1:a_R\}}$ is the set of all permutations of $\{a_1,\ldots,a_R\}$. For any two distributions $\mu,\nu$ on a metric space $\sZ$ with metric $d_{\sZ}$, let $\Cp(\mu,\nu)$ denote the set of all couplings of $\mu,\nu$ in $\P(\sZ\times\sZ)$; that is, $\xi \in \Cp(\mu,\nu)$ if  $\xi(\cdot\,\times\sX) = \mu$ and $\xi(\sX \times\cdot\,) = \nu$. In view of this, Wasserstein distance of order~1 between $\mu$ and $\nu$ is defined as follows  \cite{Vil09}
$$
W_1(\mu,\nu) := \inf\{\rE_{\xi}[d_{\sZ}(x,y)]: \xi \in \Cp(\mu,\nu)\}
$$
Define $\bc_{\max}:=\sup_{(x,a,z) \in \sX\times\sA\times\sX} \bc(x,a,z)$, where $\bc$ is the non-negative one-stage cost function of the mean-field game. Introduce the following function class
\begin{align*}
\F := \left\{g:\sX\rightarrow[0,K]; \|g\|_{\Lip}\leq 1, \, g(y_*)=0\right\}
\end{align*}
where $y_*\in \sX$ is some arbitrary fixed point and $K\geq \max\{\bc_{\max},\diam(\sX)\}$. For any $\varepsilon>0$, we define the $\varepsilon$-covering number of $\F$ as follows
\begin{align*}
\N(\varepsilon,\F) := \inf \left\{ N: \exists \{g_1,\ldots,g_N\} \subset \F \,\, \text{such that} \, \inf_{i=1,\ldots,N} \|g-g_i\|_{\infty} < \varepsilon, \, \forall g \in \F \right\}
\end{align*}
Since $\F$ is totally bounded in $C_b(\sX)$ with respect to the sup-norm by Arzela–Ascoli theorem \cite[Theorem 2.4.7]{Dud89}, $\N(\varepsilon,\F)$ is finite for all $\varepsilon>0$. For some cases, one can obtain an upper bound on $\N(\varepsilon,\F)$ in terms of $\varepsilon$, $K$, diameter of $\sX$, and dimension of $\sX$ (see \cite[Lemma 6]{GoKoRo17} and \cite[Lemma 4.2]{GoKoRo16}). Indeed, by \cite[Lemma 6]{GoKoRo17}, we have
$$
\N(\varepsilon,\F)  \leq \left(\frac{8K}{\varepsilon}\right)^{\N\left(\frac{\varepsilon}{8K},\sX\right)}
$$
where $\N\left(\frac{\varepsilon}{8K},\sX\right)$ is $\frac{\varepsilon}{8K}$-covering number of $\sX$ with respect to its metric $d_{\sX}$. If $\sX$ is doubling space with doubling dimension $\ddim(\sX)$ (see \cite[p. 107]{GoKoRo16} for the definition of doubling spaces), we also have 
$$
\N\left(\frac{\varepsilon}{8K},\sX\right) \leq \left(\frac{16K \diam(\sX)}{\varepsilon}\right)^{\ddim(\sX)}
$$
This implies that 
$$
\N(\varepsilon,\F)  \leq \left(\frac{8K}{\varepsilon}\right)^{\left(\frac{16K \diam(\sX)}{\varepsilon}\right)^{\ddim(\sX)}}
$$
If $\sX$ is a subset of $d$-dimensional Euclidean space, then doubling dimension of $\sX$ is $O(d)$ \cite[p. 107]{GoKoRo16}. Hence, in this case we have the following upper bound for $\N(\varepsilon;\F)$:
$$
\N(\varepsilon,\F)  \leq \left(\frac{8K}{\varepsilon}\right)^{\left(\frac{16K \diam(\sX)}{\varepsilon}\right)^{O(d)}}
$$

Note that Wasserstein distance of order 1 between any probability measures $\mu,\nu$ on $\sX$ can also be expressed  by duality as follows \cite{Vil09}
\begin{align*}
W_1(\mu,\nu) = \inf_{\|g|_{\Lip}\leq 1} \left|\int_{\sX} g(x) \, \mu(dx) - \int_{\sX} g(x) \, \nu(dx) \right|
\end{align*}
In the definition above, without loss of generality, we can add the following additional conditions on $g$ that do not change the result: (i) $g\geq0$ and (ii) $g(y_*)=0$. In view of these additional conditions, we can bound the sup-norm of $g$ as follows
$$
\sup_{x\in\sX} g(x) = \sup_{x\in\sX} g(x) - g(y_*) \leq \sup_{x\in\sX} d_{\sX}(x,y_*) \leq \diam(\sX)
$$
Hence any $g$ with $\|g\|_{\Lip}\leq 1$ satisfying conditions (i) and (ii) is an element of $\F$. Therefore, we can write
\begin{align}\label{wasserstein}
W_1(\mu,\nu) = \inf_{g \in \F} \left|\int_{\sX} g(x) \, \mu(dx) - \int_{\sX} g(x) \, \nu(dx) \right|
\end{align}
This definition of $W_1$ on probability measures is used to prove some of the important results in the paper. 

We now prove a series of results about $W_1$ that are necessary for establishing our main theorems. Note that for any $M\geq1$, the metric space $\sX^M$ is endowed with the following metric $d_{\av}(\bx,\by):=\frac{1}{M} \sum_{i=1}^M d_{\sX}(x_i,y_i)$ in the remainder of the paper.

\begin{lemma}\label{result1}
$W_1(\e[\,\cdot\,|\,\bx],\e[\,\cdot\,|\,\bz]) \leq d_{\av}(\bx,\by)$ for all $\bx,\bz \in \sX^M$. 
\end{lemma}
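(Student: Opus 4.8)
The plan is to exhibit an explicit coupling of the two empirical measures and to read the bound off directly from the definition of $W_1$ as an infimum over couplings. Fix $\bx=(x_1,\dots,x_M)$ and $\bz=(z_1,\dots,z_M)$ in $\sX^M$, and define the ``diagonal'' measure $\xi:=\frac{1}{M}\sum_{i=1}^M \delta_{(x_i,z_i)}\in\P(\sX\times\sX)$. First I would verify that $\xi\in\Cp(\e[\,\cdot\,|\,\bx],\e[\,\cdot\,|\,\bz])$: projecting $\xi$ onto the first coordinate gives $\frac{1}{M}\sum_{i=1}^M\delta_{x_i}=\e[\,\cdot\,|\,\bx]$, and projecting onto the second coordinate gives $\frac{1}{M}\sum_{i=1}^M\delta_{z_i}=\e[\,\cdot\,|\,\bz]$, so both marginal conditions in the definition of $\Cp$ hold.

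Then, since $W_1$ is an infimum over all couplings, evaluating the transport objective at this particular $\xi$ yields
$$
W_1(\e[\,\cdot\,|\,\bx],\e[\,\cdot\,|\,\bz])\;\le\;\rE_{\xi}\bigl[d_{\sX}(x,y)\bigr]\;=\;\frac{1}{M}\sum_{i=1}^M d_{\sX}(x_i,z_i)\;=\;d_{\av}(\bx,\bz),
$$
which is exactly the asserted inequality. (Here the coordinates in the statement should of course be $\bx,\bz$ rather than $\bx,\by$.)

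There is essentially no obstacle here: the only thing to notice is that pairing the $i$-th coordinate of $\bx$ with the $i$-th coordinate of $\bz$ gives a legitimate coupling of the two empirical measures. As an alternative one could instead invoke the Kantorovich--Rubinstein duality~\eqref{wasserstein}: for every $g$ with $\|g\|_{\Lip}\le1$ one has $\bigl|\int_{\sX} g\,d\e[\,\cdot\,|\,\bx]-\int_{\sX} g\,d\e[\,\cdot\,|\,\bz]\bigr|=\frac{1}{M}\bigl|\sum_{i=1}^M\bigl(g(x_i)-g(z_i)\bigr)\bigr|\le\frac{1}{M}\sum_{i=1}^M d_{\sX}(x_i,z_i)=d_{\av}(\bx,\bz)$, so \eqref{wasserstein} gives $W_1(\e[\,\cdot\,|\,\bx],\e[\,\cdot\,|\,\bz])\le d_{\av}(\bx,\bz)$ as well. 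I would prefer the primal coupling argument since it is fully self-contained.
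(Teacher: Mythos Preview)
Your proof is correct and follows essentially the same approach as the paper: both construct the diagonal coupling $\xi=\frac{1}{M}\sum_{i=1}^M\delta_{(x_i,z_i)}$ and read off the bound from the definition of $W_1$ as an infimum over couplings. Your presentation is in fact slightly more detailed, since you explicitly verify the marginal conditions and note the typo $\by\leftrightarrow\bz$ in the statement.
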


\begin{proof}
Fix any $\bx,\bz$. Then $\xi(\,\cdot\,):= \frac{1}{M}\sum_{i=1}^M \delta_{(x_i,z_i)}(\,\cdot\,)$ is a coupling of $\e[\,\cdot\,|\,\bx],\e[\,\cdot\,|\,\bz]$ (not necessarily optimal coupling that achieves $W_1(\e[\,\cdot\,|\,\bx],\e[\,\cdot\,|\,\bz])$). Hence
$$
W_1(\e[\,\cdot\,|\,\bx],\e[\,\cdot\,|\,\bz]) \leq \rE_{\xi} [d_{\sX}(x,y)] = \frac{1}{M} \sum_{i=1}^M d_{\sX}(x_i,y_i)
$$
\end{proof}

\begin{lemma}\label{result2}
If $\mu_1,\ldots,\mu_M,\nu_1,\ldots,\nu_M \in \P(\sX)$, then 
$$
W_1(\mu_1\otimes\cdots\otimes\mu_M,\nu_1\otimes\cdots\otimes\nu_M) \leq \frac{1}{M} \sum_{i=1}^M W_1(\mu_i,\nu_i)
$$
\end{lemma}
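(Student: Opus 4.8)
The plan is to construct an explicit, generally suboptimal, coupling of the two product measures by taking the product of optimal couplings of the marginal pairs, and then to exploit that $\sX^M$ is equipped with the \emph{average} metric $d_{\av}$ to extract the factor $\frac{1}{M}$.

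First I would use compactness of $\sX$: then $\sX\times\sX$ is compact and $d_{\sX}$ is continuous on it, so for each $i$ the infimum defining $W_1(\mu_i,\nu_i)$ is attained at some $\xi_i\in\Cp(\mu_i,\nu_i)$. (If one prefers to avoid invoking attainment, fix $\varepsilon>0$, take couplings $\xi_i$ with $\rE_{\xi_i}[d_{\sX}(x,y)]\le W_1(\mu_i,\nu_i)+\varepsilon$, and let $\varepsilon\downarrow 0$ at the very end.) Next I would set $\xi:=\xi_1\otimes\cdots\otimes\xi_M\in\P\big((\sX\times\sX)^M\big)$ and identify $(\sX\times\sX)^M$ with $\sX^M\times\sX^M$ through the coordinate rearrangement $\big((x_1,y_1),\ldots,(x_M,y_M)\big)\mapsto\big((x_1,\ldots,x_M),(y_1,\ldots,y_M)\big)$. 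A short Fubini computation then shows that, under this identification, the first marginal of $\xi$ is $\mu_1\otimes\cdots\otimes\mu_M$ and its second marginal is $\nu_1\otimes\cdots\otimes\nu_M$, so $\xi$ is a coupling of the two product measures.

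Since $W_1$ on $\P(\sX^M)$ is the infimum of $\rE[d_{\av}(\bx,\by)]$ over all couplings, evaluating at this particular $\xi$ yields
\begin{multline*}
W_1(\mu_1\otimes\cdots\otimes\mu_M,\nu_1\otimes\cdots\otimes\nu_M)\le \rE_{\xi}\big[d_{\av}(\bx,\by)\big]\\
=\frac{1}{M}\sum_{i=1}^M \rE_{\xi}\big[d_{\sX}(x_i,y_i)\big]=\frac{1}{M}\sum_{i=1}^M \rE_{\xi_i}\big[d_{\sX}(x_i,y_i)\big],
\end{multline*}
where the middle equality is the definition of $d_{\av}$ together with linearity of expectation, and the last equality uses that under $\xi$ the $i$-th coordinate pair $(x_i,y_i)$ has law $\xi_i$. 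Since $\rE_{\xi_i}[d_{\sX}(x_i,y_i)]=W_1(\mu_i,\nu_i)$ for each $i$, the claimed bound follows.

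I do not expect a serious obstacle here: the argument is essentially ``a product of couplings is a coupling'' combined with linearity of expectation. The only mildly delicate points are the bookkeeping in the identification $(\sX\times\sX)^M\cong\sX^M\times\sX^M$ and the accompanying Fubini verification that $\xi$ has the asserted product marginals, together with the observation that $\sX^M$ carries the averaged metric $d_{\av}$ rather than the summed metric — which is precisely what produces the $\frac{1}{M}$ in the statement.
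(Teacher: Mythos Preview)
Your proposal is correct and follows essentially the same approach as the paper: restrict the infimum over couplings of the product measures to product couplings $\xi_1\otimes\cdots\otimes\xi_M$ with $\xi_i\in\Cp(\mu_i,\nu_i)$, then use the definition of $d_{\av}$ and linearity of expectation to split the cost into a sum of marginal transport costs. The paper's version is slightly terser (it leaves the infimum in place rather than selecting optimal or $\varepsilon$-optimal couplings), but the argument is the same.
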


\begin{proof}
By definition, we have
\begin{align}
&W_1(\mu_1\otimes\cdots\otimes\mu_M,\nu_1\otimes\cdots\otimes\nu_M) \nonumber \\
&= \inf \left\{\rE_{\xi}[d_{\av}(\bx,\by)]: \xi \in \Cp(\mu_1\otimes\cdots\otimes\mu_M,\nu_1\otimes\cdots\otimes\nu_M) \right\} \nonumber \\
&\leq \inf \left\{\rE_{\xi_1\otimes\cdots\otimes\xi_M}[d_{\av}(\bx,\by)]: \xi_i \in \Cp(\mu_i,\nu_i) \, \text{for all} \, i=1,\ldots,M \right\} \nonumber \\
&=\frac{1}{M} \sum_{i=1}^M  \inf \left\{\rE_{\xi_i}[d_{\sX}(x_i,y_i)]: \xi_i \in \Cp(\mu_i,\nu_i)\right\} \nonumber \\
&=:\frac{1}{M} \sum_{i=1}^M W_1(\mu_i,\nu_i)\nonumber 
\end{align}
\end{proof}

\begin{lemma}\label{result3}
$W_1(\e[\,\cdot\,|\,\bx],\e[\,\cdot\,|\,\bz])  = \inf_{\sigma \in \S_{\{1:M\}}} \frac{1}{M} \sum_{i=1}^M d_{\sX}(x_i,z_{\sigma(i)})$ for all $\bx,\bz \in \sX^M$. 
\end{lemma}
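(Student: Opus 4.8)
The plan is to prove the two inequalities separately, the point being that the Wasserstein distance between two uniform empirical measures on $M$ points is the value of a transportation linear program whose feasible set is a rescaling of the Birkhoff polytope.

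\emph{The upper bound.} First I would exhibit, for each $\sigma \in \S_{\{1:M\}}$, the explicit coupling $\xi_\sigma := \frac1M\sum_{i=1}^M \delta_{(x_i,\,z_{\sigma(i)})}$. Its first marginal is $\frac1M\sum_i\delta_{x_i} = \e[\,\cdot\,|\,\bx]$, and since $\sigma$ is a bijection its second marginal is $\frac1M\sum_i\delta_{z_{\sigma(i)}} = \frac1M\sum_j\delta_{z_j} = \e[\,\cdot\,|\,\bz]$, so $\xi_\sigma \in \Cp(\e[\,\cdot\,|\,\bx],\e[\,\cdot\,|\,\bz])$. Hence $W_1(\e[\,\cdot\,|\,\bx],\e[\,\cdot\,|\,\bz]) \le \rE_{\xi_\sigma}[d_{\sX}(x,y)] = \frac1M\sum_{i=1}^M d_{\sX}(x_i,z_{\sigma(i)})$, and taking the infimum over $\sigma$ finishes this direction.

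\emph{Reduction of the reverse inequality to a linear program.} Any $\xi \in \Cp(\e[\,\cdot\,|\,\bx],\e[\,\cdot\,|\,\bz])$ is supported on the finite set $\{x_1,\dots,x_M\}\times\{z_1,\dots,z_M\}$. Writing $n_x(v)$ and $n_z(w)$ for the multiplicities of $v$ in $\bx$ and of $w$ in $\bz$, I would associate to $\xi$ the $M\times M$ matrix $P$ with $P_{ij} := \xi(\{(x_i,z_j)\})/(n_x(x_i)\,n_z(z_j))$. Summing over $j$ and grouping those indices by the value of $z_j$ gives $\sum_j P_{ij} = \frac1{n_x(x_i)}\sum_{w}\xi(\{(x_i,w)\}) = \frac1{n_x(x_i)}\cdot\frac{n_x(x_i)}{M} = \frac1M$, and symmetrically $\sum_i P_{ij} = \frac1M$, so $MP$ is doubly stochastic; the same grouping shows $\sum_{i,j} P_{ij}\,d_{\sX}(x_i,z_j) = \sum_{v,w}\xi(\{(v,w)\})\,d_{\sX}(v,w) = \rE_\xi[d_{\sX}(x,y)]$. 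Conversely, any $P\ge0$ with $MP$ doubly stochastic yields, via $\xi_P := \sum_{i,j}P_{ij}\,\delta_{(x_i,z_j)}$, a coupling of $\e[\,\cdot\,|\,\bx]$ and $\e[\,\cdot\,|\,\bz]$ with transport cost $\sum_{i,j}P_{ij}\,d_{\sX}(x_i,z_j)$. Combining the two, $W_1(\e[\,\cdot\,|\,\bx],\e[\,\cdot\,|\,\bz])$ equals the minimum of the linear functional $P\mapsto \sum_{i,j}P_{ij}\,d_{\sX}(x_i,z_j)$ over the (compact) polytope $\{P\ge0: MP \text{ doubly stochastic}\}$.

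\emph{Conclusion via Birkhoff--von Neumann.} By the Birkhoff--von Neumann theorem this polytope is the convex hull of the points $\frac1M\Pi_\sigma$, $\sigma\in\S_{\{1:M\}}$, where $\Pi_\sigma$ is the permutation matrix of $\sigma$; since a linear functional on a polytope attains its minimum at an extreme point, the minimum above equals $\min_\sigma \frac1M\sum_i d_{\sX}(x_i,z_{\sigma(i)})$, and because $\S_{\{1:M\}}$ is finite this is $\inf_{\sigma}\frac1M\sum_i d_{\sX}(x_i,z_{\sigma(i)})$. Together with the first inequality this proves the claim. I expect the only subtle point to be the bookkeeping with repeated entries in $\bx$ or $\bz$: a coupling then no longer corresponds to a unique matrix, and the normalization by $n_x$ and $n_z$ above is exactly what pins down a well-defined feasible $P$ with the right marginals and cost; everything else is routine.
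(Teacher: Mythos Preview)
Your proof is correct and follows essentially the same route as the paper: both identify the transport cost of an arbitrary coupling with a linear functional over (a rescaling of) the Birkhoff polytope and invoke the Birkhoff--von Neumann theorem to reduce to permutation couplings. Your explicit multiplicity normalization $P_{ij}=\xi(\{(x_i,z_j)\})/(n_x(x_i)\,n_z(z_j))$ is in fact more careful than the paper's argument, which builds the doubly stochastic matrix via the conditional probabilities $A^{\xi}_{j,i}=\xi(x_j\mid z_i)$ and tacitly treats the entries of $\bx$ and $\bz$ as distinct.
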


\begin{proof}
Let $\xi \in \Cp(\e[\,\cdot\,|\,\bx],\e[\,\cdot\,|\,\bz])$. Hence
\begin{align*}
\int_{\sX\times\sX} d_{\sX}(x,y) \, \xi(dx,dy) &= \int_{\sX\times\sX} d_{\sX}(x,y) \, \xi(dx|y) \, \xi(dy) \\
&= \frac{1}{M} \sum_{i=1}^M \int_{\sX} d_{\sX}(x,z_i) \, \xi(dx|z_i)
\end{align*}
Note that for all $z_i$, the support of $\xi(\,\cdot\,|z_i)$ must be a subset of $\{x_1,\ldots,x_M\}$. This implies that the last term is equal to the following:
\begin{align*}
\frac{1}{M} \sum_{i=1}^M \sum_{j=1}^M  d_{\sX}(x_j,z_i) \, \xi(x_j|z_i) \\
\end{align*}
Define the following matrix $A^{\xi} \in \cR^{M\times M}$ as follows: $A^{\xi}_{j,i} := \xi(x_j|z_i)$ for all $j,i=1,\ldots,M$. Then, for any $i=1,\ldots,M$, we have
\begin{align*}
\sum_{j=1}^M A^{\xi}_{j,i} = \sum_{j=1}^M \xi(x_j|z_i) = 1
\end{align*}
Moreover, for any $j=1,\ldots,M$, we have
\begin{align*}
\sum_{i=1}^M A^{\xi}_{j,i} &= \sum_{i=1}^M \xi(x_j|z_i) \, \xi(z_i) \, \frac{1}{\xi(z_i)} \\
&= M \sum_{i=1}^M \xi(x_j,z_i)  = M \xi(x_j) = 1
\end{align*}
Hence, $A^{\xi}$ is a doubly stochastic matrix. By Birkhoff–von Neumann theorem \cite[Theorem 2.1.6]{BaRa97}, for some $k\geq1$ and $\lambda_1,\ldots,\lambda_k \in (0,1)$, $\sum_{l=1}^k \lambda_l = 1$, we can write 
$$
A^{\xi} = \sum_{l=1}^k \lambda_k \, A^l
$$
where $A^l$ is a permutation matrix which realizes some permutation $\sigma_l \in \S_{\{1:M\}}$, for each $l=1,\ldots,k$.  Hence, we have 
\begin{align*}
\int_{\sX\times\sX} d_{\sX}(x,y) \, \xi(dx,dy) &= \frac{1}{M} \sum_{i=1}^M \sum_{j=1}^M  d_{\sX}(x_j,z_i) \, \xi(x_j|z_i) \\
&= \frac{1}{M} \sum_{i=1}^M \sum_{j=1}^M \sum_{l=1}^k d_{\sX}(x_j,z_i) \, \lambda_k \, A^l_{j,i} \\
&= \sum_{l=1}^k \lambda_k \left\{\frac{1}{M} \sum_{i=1}^M d_{\sX}(x_j,z_{\sigma_l(j)}) \right\} \\
&\geq \inf_{\sigma \in \S_{\{1:M\}}} \frac{1}{M} \sum_{i=1}^M d_{\sX}(x_i,z_{\sigma(i)})
\end{align*}
Since $\xi \in \Cp(\e[\,\cdot\,|\,\bx],\e[\,\cdot\,|\,\bz])$ is arbitrary and $\frac{1}{M} \sum_{i=1}^M \delta_{(x_i,z_{\sigma(i)})} \in \Cp(\e[\,\cdot\,|\,\bx],\e[\,\cdot\,|\,\bz])$ for any $\sigma \in \S_{\{1:M\}}$, this completes the proof.
\end{proof}

\begin{lemma}\label{result4}
$W_1(\e[\,\cdot\,|\,\bx],\e[\,\cdot\,|\,\bz])  \leq \frac{M-1}{M} W_1(\e[\,\cdot\,|\,\bx_{\{2:M\}}],\e[\,\cdot\,|\,\bz_{\{2:M\}}]) + \frac{1}{M} d_{\sX}(x_1,z_1)$ for all $\bx,\bz \in \sX^M$, where $\bx_{\{2:M\}}:=(x_2,\ldots,x_M)$. 
\end{lemma}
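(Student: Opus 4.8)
The natural route is to invoke the permutation characterization of $W_1$ between empirical measures from Lemma~\ref{result3}, applied once at level $M$ and once at level $M-1$. Since $\S_{\{1:M\}}$ is finite, the infimum in Lemma~\ref{result3} is attained, so there is a permutation $\tau \in \S_{\{2:M\}}$ with
\[
W_1(\e[\,\cdot\,|\,\bx_{\{2:M\}}],\e[\,\cdot\,|\,\bz_{\{2:M\}}]) = \frac{1}{M-1} \sum_{i=2}^M d_{\sX}(x_i, z_{\tau(i)}).
\]

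\textbf{Key steps.} First I would extend $\tau$ to a permutation $\sigma \in \S_{\{1:M\}}$ by declaring $\sigma(1) = 1$ and $\sigma(i) = \tau(i)$ for $i = 2,\ldots,M$; this is a bijection of $\{1,\ldots,M\}$ precisely because $\tau$ is a bijection of $\{2,\ldots,M\}$. Next, by Lemma~\ref{result3} applied at level $M$ (using that it is an infimum over all of $\S_{\{1:M\}}$, in particular over this one $\sigma$),
\[
W_1(\e[\,\cdot\,|\,\bx],\e[\,\cdot\,|\,\bz]) \leq \frac{1}{M} \sum_{i=1}^M d_{\sX}(x_i, z_{\sigma(i)}) = \frac{1}{M} d_{\sX}(x_1, z_1) + \frac{1}{M} \sum_{i=2}^M d_{\sX}(x_i, z_{\tau(i)}).
\]
Finally I would rewrite the last sum as $\frac{M-1}{M} \cdot \frac{1}{M-1} \sum_{i=2}^M d_{\sX}(x_i, z_{\tau(i)})$ and substitute the expression for $W_1(\e[\,\cdot\,|\,\bx_{\{2:M\}}],\e[\,\cdot\,|\,\bz_{\{2:M\}}])$ from the first display, which yields exactly the claimed bound.

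\textbf{Main obstacle.} There is no serious difficulty here; the only point requiring a word of care is the existence of the minimizing permutation $\tau$, which is immediate from finiteness of the symmetric group, and the observation that forcing $\sigma$ to fix the index $1$ can only increase the average cost, so the inequality goes in the right direction. One could alternatively give a coupling-based proof — glue the diagonal coupling $\delta_{(x_1,z_1)}$ on the first coordinate to an optimal coupling of $\e[\,\cdot\,|\,\bx_{\{2:M\}}]$ and $\e[\,\cdot\,|\,\bz_{\{2:M\}}]$ with the appropriate mixture weights $\tfrac1M$ and $\tfrac{M-1}{M}$ — but the permutation argument via Lemma~\ref{result3} is cleaner and reuses machinery already established.
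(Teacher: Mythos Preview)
Your proposal is correct and follows essentially the same route as the paper: both apply Lemma~\ref{result3} at levels $M$ and $M-1$ and pass from $\S_{\{1:M\}}$ to the subset of permutations fixing the index~$1$. The only cosmetic difference is that the paper keeps the infimum over $\S_{\{2:M\}}$ throughout rather than first selecting a minimizing $\tau$, but the argument is identical.
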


\begin{proof}
Fix any $\bx,\bz \in \sX^M$. Then, by Lemma~\ref{result3}, we have 
\begin{align*}
W_1(\e[\,\cdot\,|\,\bx],\e[\,\cdot\,|\,\bz])  &= \inf_{\sigma \in \S_{\{1:M\}}} \frac{1}{M} \sum_{i=1}^M d_{\sX}(x_i,z_{\sigma(i)}) \\
&\leq \inf_{\sigma \in \S_{\{2:M\}}} \frac{1}{M} \left\{ d_{\sX}(x_1,z_1) + \sum_{i=2}^M d_{\sX}(x_i,z_{\sigma(i)}) \right\} \\
&= \frac{1}{M} d_{\sX}(x_1,z_1) + \inf_{\sigma \in \S_{\{2:M\}}} \frac{1}{M}\sum_{i=2}^M d_{\sX}(x_i,z_{\sigma(i)}) \\
&= \frac{1}{M} d_{\sX}(x_1,z_1) + \frac{M-1}{M} W_1(\e[\,\cdot\,|\,\bx_{\{2:M\}}],\e[\,\cdot\,|\,\bz_{\{2:M\}}])
\end{align*}
\end{proof}

\begin{lemma}\label{result5}
For any $x \in \sX,\bx_{\{2:N\}},\by_{\{2:N\}} \in \sX^{N-1}$
$$W_1\left(\frac{1}{N}\delta_x+\frac{1}{N}\sum_{i=2}^N \delta_{x_i},\frac{1}{N}\delta_x+\frac{1}{N}\sum_{i=2}^N \delta_{y_i}\right) = \inf_{\sigma \in \S_{\{2:N\}}} \frac{1}{N} \sum_{i=2}^N d_{\sX}(x_i,y_{\sigma(i)})$$ 
\end{lemma}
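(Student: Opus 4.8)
The plan is to recognize both measures in the statement as mean-field terms and then invoke Lemma~\ref{result3}. Set $\bx := (x,x_2,\dots,x_N) \in \sX^{N}$ and $\bz := (x,y_2,\dots,y_N) \in \sX^{N}$, so that $\frac1N\delta_x + \frac1N\sum_{i=2}^N\delta_{x_i} = \e[\,\cdot\,|\,\bx]$ and $\frac1N\delta_x + \frac1N\sum_{i=2}^N\delta_{y_i} = \e[\,\cdot\,|\,\bz]$, and $\bx,\bz$ agree in their first coordinate. Writing the coordinates of $\bz$ as $z_1=x$ and $z_i=y_i$ for $i\geq 2$, Lemma~\ref{result3} with $M=N$ yields
$$
W_1\!\left(\tfrac1N\delta_x+\tfrac1N{\textstyle\sum_{i=2}^N}\delta_{x_i},\ \tfrac1N\delta_x+\tfrac1N{\textstyle\sum_{i=2}^N}\delta_{y_i}\right) \;=\; \inf_{\sigma \in \S_{\{1:N\}}} \frac1N\Big( d_{\sX}(x,z_{\sigma(1)}) + \sum_{i=2}^N d_{\sX}(x_i,z_{\sigma(i)}) \Big),
$$
so the task reduces to showing that this infimum over all of $\S_{\{1:N\}}$ coincides with the infimum over the subclass of permutations fixing the index $1$, for which the first summand is $d_{\sX}(x,x)=0$ and the remainder is exactly $\frac1N\sum_{i=2}^N d_{\sX}(x_i,y_{\sigma(i)})$ with $\sigma$ ranging over $\S_{\{2:N\}}$.

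One inequality is immediate: every $\sigma \in \S_{\{2:N\}}$ extends (by fixing $1$) to a member of $\S_{\{1:N\}}$ whose associated cost is precisely $\frac1N\sum_{i=2}^N d_{\sX}(x_i,y_{\sigma(i)})$, so the left-hand side is at most $\inf_{\sigma \in \S_{\{2:N\}}} \frac1N\sum_{i=2}^N d_{\sX}(x_i,y_{\sigma(i)})$. (Equivalently, $\frac1N\delta_{(x,x)} + \frac1N\sum_{i=2}^N\delta_{(x_i,y_{\sigma(i)})}$ is an admissible coupling realizing this value.)

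For the reverse inequality the idea is to show that any permutation can be replaced, without increasing its cost, by one that fixes $1$; this ``surgery'' step is the only place where a small argument is needed. Take $\sigma \in \S_{\{1:N\}}$ with $\sigma(1) = j \neq 1$, and let $k$ be the unique index with $\sigma(k) = 1$ (necessarily $k\neq 1$; the subcase $k=j$, i.e.\ $\sigma$ transposing $1$ and $j$, is permitted). Let $\hat\sigma$ agree with $\sigma$ except that $\hat\sigma(1):=1$ and $\hat\sigma(k):=j$; then $\hat\sigma \in \S_{\{1:N\}}$ fixes $1$. Passing from $\sigma$ to $\hat\sigma$ replaces the two summands $d_{\sX}(x,y_j)$ and $d_{\sX}(x_k,x)$ by $d_{\sX}(x,x)+d_{\sX}(x_k,y_j)=d_{\sX}(x_k,y_j)$ and leaves all other summands untouched, so by the triangle inequality $d_{\sX}(x_k,y_j) \leq d_{\sX}(x_k,x)+d_{\sX}(x,y_j)$ the total cost does not increase. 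Hence the infimum over $\S_{\{1:N\}}$ equals the infimum over permutations fixing $1$, namely $\inf_{\sigma \in \S_{\{2:N\}}} \frac1N\sum_{i=2}^N d_{\sX}(x_i,y_{\sigma(i)})$, which completes the proof. I do not expect any genuine difficulty here beyond a little care in the bookkeeping when $k=j$, which the displayed substitution handles uniformly.
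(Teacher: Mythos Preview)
Your proof is correct and follows essentially the same approach as the paper: both invoke Lemma~\ref{result3} to express the $W_1$ distance as an infimum over $\S_{\{1:N\}}$ and then use the triangle inequality to show that any permutation not fixing the index~$1$ can be replaced by one that does without increasing the cost. The only cosmetic difference is that the paper phrases this last step as a proof by contradiction, whereas you give the direct ``surgery'' argument.
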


\begin{proof}
By Lemma~\ref{result3}, we have 
\begin{align*}
W_1\left(\frac{1}{N}\delta_x+\frac{1}{N}\sum_{i=2}^N \delta_{x_i},\frac{1}{N}\delta_x+\frac{1}{N}\sum_{i=2}^N \delta_{y_i}\right) = \hspace{-5pt} \inf_{\sigma \in \S_{\{1:N\}}} \frac{1}{N} \left( d_{\sX}(x,y_{\sigma(1)}) + \sum_{i=2}^N d_{\sX}(x_i,y_{\sigma(i)}) \right)
\end{align*}
where we let $y_1=x$. If the claim is not true, then there exists $\tilde{\sigma}\in \S_{\{1:N\}}$ such that $\tilde{\sigma}(1)=i\neq1$ and $\tilde{\sigma}(j)=1$ ($j\neq1$) for some $i,j$, and 
\begin{align*}
\frac{1}{N} \left( d_{\sX}(x,y_i) + \sum_{i=2 \atop i\neq j}^N d_{\sX}(x_i,y_{\tilde{\sigma}(i)})  + d_{\sX}(x_j,x)\right) < \inf_{\sigma \in \S_{\{2:N\}}} \frac{1}{N} \sum_{i=2}^N d_{\sX}(x_i,z_{\sigma(i)})
\end{align*}
The last bound implies that 
\begin{align*}
\frac{1}{N} \left( d_{\sX}(x,y_i) + \sum_{i=2 \atop i\neq j}^N d_{\sX}(x_i,y_{\tilde{\sigma}(i)})  + d_{\sX}(x_j,x)\right) <  \frac{1}{N}  \left( \sum_{i=2 \atop i\neq j}^N d_{\sX}(x_i,y_{\tilde{\sigma}(i)})  + d_{\sX}(x_j,y_i)\right) 
\end{align*}
But this is a contradiction as $ d_{\sX}(x,y_i) +  d_{\sX}(x_j,x) \geq d_{\sX}(x_j,y_i)$. 
\end{proof}

We complete this preliminary section by proving two important results that are needed in the sequel.

\begin{lemma}\label{result6}
Let $\bp:\sX\times\sX\rightarrow\P(\sX)$ be a transition probability. Let $\by \in \sX^N$ be given. Suppose that $\{z_i\}_{i=1}^N$ are i.i.d. with common distribution $\e[\,\cdot\,|\,\by]$. Moreover, for all $i=1,\ldots,N$, let $x_i \sim \bp(\,\cdot\,|y_i,z_i)$. Then, for any continuous $g:\sX\rightarrow\cR$, we have 
\begin{align*}
&\rE\left[ \left| \int_{\sX} g(x) \, \e[dx|\bx] - \int_{\sX\times\sX\times\sX} g(x) \, \bp(dx|z,y) \, \e[dz|\by] \, \e[dy|\by] \right| \right]^2  \\
&\leq \frac{1}{N^2} \sum_{i=1}^N \left\{\rE[g(x_i)^2] - \rE[g(x_i)]^2 \right\}
\end{align*}
\end{lemma}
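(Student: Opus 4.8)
The plan is to read the triple integral appearing in the lemma as the expectation of the random average $\int_{\sX} g(x)\,\e[dx|\bx] = \frac1N\sum_{i=1}^N g(x_i)$, to reduce the $L^1$-deviation to a variance by Jensen's inequality, and then to evaluate that variance using independence of $x_1,\dots,x_N$. Note first that, since $\sX$ is compact and $g$ is continuous, $g$ is bounded, so every expectation and variance below is finite.

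First I would identify the mean. Write $\bar g := \int_{\sX\times\sX\times\sX} g(x)\,\bp(dx|z,y)\,\e[dz|\by]\,\e[dy|\by]$. Conditioning on $z_i$ and using $z_i\sim\e[\,\cdot\,|\,\by] = \frac1N\sum_{j=1}^N\delta_{y_j}$ gives $\rE[g(x_i)] = \rE\bigl[\int_{\sX} g(x)\,\bp(dx|y_i,z_i)\bigr] = \frac1N\sum_{j=1}^N\int_{\sX} g(x)\,\bp(dx|y_i,y_j)$; averaging over $i$ and observing that the two resulting empirical averages over indices are precisely the two $\e[\,\cdot\,|\,\by]$-integrations (which commute, being with respect to the same measure) yields $\rE\bigl[\frac1N\sum_{i=1}^N g(x_i)\bigr] = \bar g$, i.e.\ $\rE\bigl[\int_{\sX} g(x)\,\e[dx|\bx]\bigr] = \bar g$. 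Jensen's inequality applied to $t\mapsto t^2$ then gives
$$\rE\Bigl[\Bigl|\int_{\sX} g(x)\,\e[dx|\bx] - \bar g\Bigr|\Bigr]^2 \le \rE\Bigl[\Bigl(\tfrac1N\sum_{i=1}^N g(x_i) - \bar g\Bigr)^2\Bigr] = \mathrm{Var}\Bigl(\tfrac1N\sum_{i=1}^N g(x_i)\Bigr).$$

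To finish I would argue that $x_1,\dots,x_N$ are independent: although the $z_i$ share the common law $\e[\,\cdot\,|\,\by]$, they are drawn independently, $\by$ is deterministic, and conditionally on $(z_1,\dots,z_N)$ the law of $x_i$ depends on $z_i$ alone — formally one may take $x_i = F(y_i,z_i,U_i)$ for a measurable $F$ and i.i.d.\ auxiliary randomizers $U_i$, so that the pairs $(z_i,U_i)$ are i.i.d. Consequently the variance of the sum is additive: $\mathrm{Var}\bigl(\frac1N\sum_{i=1}^N g(x_i)\bigr) = \frac1{N^2}\sum_{i=1}^N\mathrm{Var}(g(x_i)) = \frac1{N^2}\sum_{i=1}^N\{\rE[g(x_i)^2]-\rE[g(x_i)]^2\}$, which is exactly the claimed right-hand side; combined with the previous display this proves the lemma. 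The only genuine obstacle is this independence point: the temptation is to regard the $x_i$ as correlated through the shared mean-field term and to carry cross-covariance terms, whereas in fact the common term enters only through i.i.d.\ draws, so the estimate needs nothing beyond Jensen and additivity of variance — indeed the sole inequality used is Jensen.
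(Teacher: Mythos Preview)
Your proof is correct and follows essentially the same approach as the paper's: both apply Jensen's inequality to pass from the squared $L^1$-deviation to the variance of $\frac{1}{N}\sum_{i=1}^N g(x_i)$, and both then exploit that the cross terms vanish because $x_i$ and $x_j$ are independent (you argue independence directly via the representation $x_i=F(y_i,z_i,U_i)$, while the paper verifies $\rE[g(x_i)g(x_j)]=\rE[g(x_i)]\rE[g(x_j)]$ by conditioning on $(z_i,z_j)$ and unwinding --- the same fact, just written out). Your identification $\bar g=\rE\bigl[\frac{1}{N}\sum_i g(x_i)\bigr]$ is exactly what the paper establishes when it observes $\rE[g(x_i)]=\int_{\sX} g(x)\,p(dx|y_i,\e[\,\cdot\,|\,\by])$.
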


\begin{proof}
By Jensen's inequality, we have 
\begin{align}
&\rE\left[ \left| \int_{\sX} g(x) \, \e[dx|\bx] - \int_{\sX\times\sX\times\sX} g(x) \, \bp(dx|z,y) \, \e[dz|\by] \, \e[dy|\by] \right| \right]^2 \nonumber  \\
&\leq \rE\left[ \left| \int_{\sX} g(x) \, \e[dx|\bx] - \int_{\sX\times\sX\times\sX} g(x) \, \bp(dx|z,y) \, \e[dz|\by] \, \e[dy|\by] \right|^2 \right] \nonumber \\
&=  \rE\left[ \left| \frac{1}{N} \sum_{i=1}^N g(x_i) - \frac{1}{N} \sum_{i=1}^N \int_{\sX} g(x) \, p(dx|y_i,\e[\,\cdot\,|\,\by]) \right|^2 \right] \,\, \text{($p(\,\cdot\,|z,\mu) := \int_{\sX} \bp(\,\cdot\,|z,y) \, \mu(dy)$)}  \nonumber \\
&= \rE\left[\frac{1}{N^2} \sum_{i,j=1}^N g(x_i) g(x_j) \right] + \frac{1}{N^2} \sum_{i,j=1}^N \left\{ \int_{\sX} g(x) \, p(dx|y_i,\e[\,\cdot\,|\,\by]) \int_{\sX} g(x) \, p(dx|y_j,\e[\,\cdot\,|\,\by]) \right\}\nonumber \\
&\phantom{xxxxxxxxxxxxxxxxxxxxxx}-\frac{2}{N^2} \sum_{i,j=1}^N \rE[g(x_i)] \int_{\sX} g(x) \, p(dx|y_j,\e[\,\cdot\,|\,\by]) \label{result6-1}
\end{align}
Let $i\neq j$. Then we have 
\begin{align*}
&\rE[g(x_i)g(x_j)] = \rE[\rE[g(x_i)g(x_j)|z_i,z_j]] \\
&= \rE[\rE[g(x_i)|z_i,z_j] \, \rE[g(x_j)|z_i,z_j]]  \,\, \text{(as $x_i \perp x_j$ given $(z_i,z_j)$)} \\
&= \rE[\rE[g(x_i)|z_i] \, \rE[g(x_j)|z_j]]  \,\, \text{(as $x_i$ only depends on $z_i$ and $x_j$ only depends on $z_j$)} \\
&= \rE[\rE[g(x_i)|z_i]] \, \rE[\rE[g(x_j)|z_j]] \,\, \text{(as $z_i \perp z_j$)} \\
&= \rE\left[\int_{\sX} g(x) \, \bp(dx|y_i,z_i)\right] \rE\left[\int_{\sX} g(x) \, \bp(dx|y_j,z_j)\right] \\
&= \int_{\sX} g(x) \, p(dx|y_i,\e[\,\cdot\,|\,\by]) \int_{\sX} g(x) \, p(dx|y_j,\e[\,\cdot\,|\,\by])
\end{align*}
Moreover, we also have 
\begin{align*}
\rE[g(x_i)]&=\rE[\rE[g(x_i)|z_i]] \\
&= \rE\left[\int_{\sX} g(x) \, \bp(dx|y_i,z_i)\right] \\
&= \int_{\sX} g(x) \, p(dx|y_i,\e[\,\cdot\,|\,\by])
\end{align*}
These two observations imply that 
\begin{align*}
(\ref{result6-1}) = \frac{1}{N^2} \sum_{i=1}^N \left\{\rE[g(x_i)^2] - \rE[g(x_i)]^2 \right\}
\end{align*}
\end{proof}

\begin{lemma}\label{result7}
Let $\bp:\sX\times\sX\rightarrow\P(\sX)$ be a transition probability. Let $\by \in \sX^N$ be given. Suppose that $\{z_i\}_{i=1}^N$ are i.i.d. with common distribution $\e[\,\cdot\,|\,\by]$. Moreover, for all $i=2,\ldots,N$, let $x_i \sim \bp(\,\cdot\,|y_i,z_i)$. Then, for any continuous $g:\sX\rightarrow\cR$, we have 
\begin{align*}
&\rE\left[ \left| \int_{\sX} g(x) \, \e[dx|\bx_{\{2:N\}}] - \int_{\sX\times\sX\times\sX} g(x) \, \bp(dx|z,y) \, \e[dz|\by_{\{2:N\}}] \, \e[dy|\by] \right| \right]^2 \\
&\leq \frac{1}{(N-1)^2} \sum_{i=2}^N \left\{\rE[g(x_i)^2] - \rE[g(x_i)]^2 \right\}
\end{align*}
\end{lemma}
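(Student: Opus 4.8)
\emph{Proof plan.} The plan is to replay the argument of Lemma~\ref{result6} essentially verbatim, with the averaging count $N$ replaced by $N-1$ and the index set $\{1,\dots,N\}$ replaced by $\{2,\dots,N\}$; since $g$ is continuous on the compact space $\sX$ it is bounded, so every expectation below is finite. First I would apply Jensen's inequality to pull the square inside the expectation, which reduces the claim to an upper bound on
$$
\rE\left[\left|\int_{\sX} g(x)\,\e[dx|\bx_{\{2:N\}}] - \int_{\sX\times\sX\times\sX} g(x)\,\bp(dx|z,y)\,\e[dz|\by_{\{2:N\}}]\,\e[dy|\by]\right|^2\right].
$$

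Next I would rewrite both terms inside the absolute value as averages over $i=2,\dots,N$. On the one hand, $\int_{\sX} g(x)\,\e[dx|\bx_{\{2:N\}}] = \frac{1}{N-1}\sum_{i=2}^N g(x_i)$. On the other hand, writing $p(\,\cdot\,|z,\mu) := \int_{\sX}\bp(\,\cdot\,|z,y)\,\mu(dy)$ as in the proof of Lemma~\ref{result6}, the limiting quantity equals $\frac{1}{N-1}\sum_{i=2}^N \int_{\sX} g(x)\,p(dx|y_i,\e[\,\cdot\,|\,\by])$, because $\e[\,\cdot\,|\,\by_{\{2:N\}}]$ puts mass only on $\{y_2,\dots,y_N\}$. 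The key observation is that, since each $z_i\sim\e[\,\cdot\,|\,\by]$ and $x_i\sim\bp(\,\cdot\,|y_i,z_i)$, conditioning on $z_i$ gives $\rE[g(x_i)] = \rE\!\left[\int_{\sX} g(x)\,\bp(dx|y_i,z_i)\right] = \int_{\sX} g(x)\,p(dx|y_i,\e[\,\cdot\,|\,\by])$; hence the limiting term is exactly $\frac{1}{N-1}\sum_{i=2}^N \rE[g(x_i)]$, and the displayed quantity is the variance-type expression $\rE\!\left[\left|\frac{1}{N-1}\sum_{i=2}^N (g(x_i)-\rE[g(x_i)])\right|^2\right]$.

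Expanding this square produces diagonal terms $\frac{1}{(N-1)^2}\sum_{i=2}^N(\rE[g(x_i)^2]-\rE[g(x_i)]^2)$ and off-diagonal terms $\frac{1}{(N-1)^2}\sum_{i\neq j}(\rE[g(x_i)g(x_j)]-\rE[g(x_i)]\rE[g(x_j)])$ over $i,j\in\{2,\dots,N\}$. For $i\neq j$ I would argue, exactly as in Lemma~\ref{result6}, that $x_i$ and $x_j$ are conditionally independent given $(z_i,z_j)$, that $x_i$ depends only on $z_i$ and $x_j$ only on $z_j$, and that $z_i\perp z_j$ (being distinct members of the i.i.d.\ family $z_2,\dots,z_N$); iterating the tower rule then yields $\rE[g(x_i)g(x_j)]=\rE[g(x_i)]\rE[g(x_j)]$, so every off-diagonal term vanishes. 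What survives is exactly the diagonal sum, giving the claimed bound (with equality, in fact). There is no substantive obstacle here; the only point that requires a little care is to notice that the mismatch between $\e[\,\cdot\,|\,\by_{\{2:N\}}]$ governing the inner $z$-integral and $\e[\,\cdot\,|\,\by]$ governing the law of the $z_i$'s and the outer $y$-integral is precisely absorbed by the identity $\rE[g(x_i)]=\int_{\sX} g(x)\,p(dx|y_i,\e[\,\cdot\,|\,\by])$, so that $\frac{1}{N-1}\sum_{i=2}^N\rE[g(x_i)]$ coincides with the limiting quantity.
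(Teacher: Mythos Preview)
Your proposal is correct and follows exactly the approach the paper intends: the paper's own proof simply reads ``The proof is very similar to the proof of Lemma~\ref{result6}, and so, we omit the details,'' and you have carried out precisely that adaptation, including the one nontrivial point---that the empirical measure $\e[\,\cdot\,|\,\by_{\{2:N\}}]$ in the limiting term matches the index set $\{2,\dots,N\}$ while the law of each $z_i$ remains $\e[\,\cdot\,|\,\by]$, so that $\rE[g(x_i)]=\int_{\sX} g(x)\,p(dx|y_i,\e[\,\cdot\,|\,\by])$ and the centering is exact.
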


\begin{proof}
The proof is very similar to the proof of Lemma~\ref{result6}, and so, we omit the details.
\end{proof}

\section{Linear Mean-field Games}\label{sec2}

In this section, we introduce $N$-agent linear mean-field games, which are called \emph{linear} because both the transition probability and the one-stage cost function are linearly dependent on the mean-field term. This linear dependence on the mean-field term is the main difference between linear mean-field games and classical mean-field games.

In this game model, we have $N$-agents with the following identical state dynamics
\begin{align}\label{dyn1}
x_i(t+1) = f\left(x_i(t),a_i(t),z_i(t),w_i(t)\right)
\end{align}
where $f:\sX\times\sA\times\sX\times\Omega \rightarrow \sX$ and $(x_i(t),a_i(t))$ is the state-action pair of agent~$i$ at time $t$. Here we have
$$
z_i(t) \sim \e[\,\cdot\,|\,\bx(t)] \,\, \text{and} \,\, w_i(t) \sim \rP_{\Omega}(\,\cdot\,)
$$
We assume that the random variables $\{z_i(t),w_i(t)\}_{N\geq i\geq1,t\geq0}$ are independent of each other and independent over both $i=1,\ldots,N$ and time $t\geq0$. This model can be interpreted using statistical physics terminology as follows: if $x_i(t)$ represents the position of the $i^{th}$ particle at time $t$, then the mean-field term $\e[x_j(t) \, |\, \bx(t)]$ gives the interaction probability of the $i^{th}$ particle with the $j^{th}$ particle. Only one particle can interact with the $i^{th}$ particle at each time $t$. After this random interaction, the $i^{th}$ particle takes some action $a_i(t)$ and moves to the next state through state dynamics $f$. With this interpretation, it is also possible to see that linear mean-field games can be used to model the spread and control of infectious diseases, which is a current research focus due to the COVID-19 pandemic.

If we define the transition probability $\bp:\sX\times\sA\times\sX \rightarrow \P(\sX)$ as follows
\begin{align*}
\bp(B|x,a,z) := \int_{\Omega} 1_{\left\{f(x,a,z,w) \in B\right\}} \, \rP_{\Omega}(dw)
\end{align*}
then we can write
\begin{align*}
\Pr\left\{x_i(t+1) \in B | x_i(t)=x,a_i(t)=a,\e[\,\cdot\,|\,\bx(t)]=\nu\right\}
&= \int_{\Omega\times\sX} 1_{\left\{f(x,a,z,w) \in B\right\}} \, \rP_{\Omega}(dw) \, \nu(dz) \\
&\hspace{-10pt}= \int_{\sX} \bp(B|x,a,z) \, \nu(dz) =: p(B|x,a,\nu)
\end{align*}
Therefore, transition probability $p:\sX\times\sA\times\P(\sX) \rightarrow \P(\sX)$ is linear in the mean-field term $\nu$. At each time step $t\geq0$, agent~$i$ pays some cost via one-stage cost function $\bc:\sX\times\sA\times\sX\rightarrow [0,\infty)$, which depends on $x_i(t)$, $a_i(t)$, and $z_i(t)$ for each $t\geq0$. Define the following function
\begin{align*}
c(x,a,\nu) := \int_{\sX} \bc(x,a,z) \, \nu(dz)
\end{align*}
Note that $c(x,a,\nu)$ is also linear in $\nu$.The linear dependence of $p$ and $c$ on the mean-field term $\nu$ is the primary reason for calling this game model \emph{linear mean-field games}.

We have three classes of policy spaces for each agent. The first one is defined as follows
\begin{align*}
\Pi := \{\pi:\sX^N \rightarrow \P(\sA)\}
\end{align*}
that is, agents that apply policies in $\Pi$ can use the global state vector $\bx(t)$ when designing their controls. The second class of policies can use only the local state information plus the mean-field term
\begin{align*}
\Pi_{\e} := \{\pi:\P(\sX)\times\sX\rightarrow\P(\sA)\}
\end{align*}
Finally, in the last class of policies, agents can use only the local state information
\begin{align*}
\Pi_{l} := \{\pi:\sX\rightarrow\P(\sA)\}
\end{align*}
In this paper, since the state dynamics is time-homogeneous, we assume that agents are only allowed to use stationary policies, meaning that the policies do not change over time.

For a given joint policy $\bpi = (\pi_1,\ldots,\pi_N) \in \Pi^N$, the cost of agent~$i$ is the following discounted cost
\begin{align*}
J_i(\mu;\bpi) &:= \rE_{\mu}^{\bpi} \left[\sum_{t=0}^{\infty} \beta^t \, \bc\left(x_i(t),a_i(t),z_i(t)\right) \right] \\
&= \rE_{\mu}^{\bpi} \left[\sum_{t=0}^{\infty} \beta^t \, c\left(x_i(t),a_i(t),\e[\,\cdot\,|\,\bx(t)]\right) \right]
\end{align*}
where $\beta \in (0,1)$ is the discount factor and $\mu$ is the common initial distribution; that is,
$$
(x_1(0),\ldots,x_N(0)) \sim \mu^{\otimes N} := \bigotimes_{i=1}^N \mu
$$

Let us now give the definition of $\varepsilon$-Nash equilibrium, which is the canonical optimality notion adopted in game theory. 

\begin{definition}[$\varepsilon$-Nash equilibrium]
For a given $\varepsilon \geq 0$, a joint policy $\bpi^* = (\pi_1^*,\ldots,\pi_N^*)$ is $\varepsilon$-Nash equilibrium if 
$$
J_i(\mu;\bpi^*) \leq \inf_{\pi \in \Pi} J_i(\mu;\pi,\bpi^*_{-i}) + \varepsilon
$$
for all $i=1,\ldots,N$, where $\bpi^*_{-i} := \{\pi_j^*\}_{j \neq i}$. If $\varepsilon=0$, then we have Nash equilibrium. 
\end{definition}

In this paper, our goal is to obtain approximate Nash equilibrium for $N$-agent games, where $N$ is assumed to be sufficiently large, via studying the stationary infinite population limit. More precisely, we establish that if all agents apply the equilibrium policy in the infinite population limit, which is introduced in the next section, this joint policy is proved to be $\varepsilon(N)$-Nash equilibrium for $N$-agent games, where there is an explicit relation between $\varepsilon(N)$ and $N$.

\subsection{Stationary Infinite Population Limit}\label{sec2sub1}

In this section, we introduce the stationary infinite population limit of the game introduced in the preceding section. In the limiting case, for each $t\geq0$, we pretend that the mean-field term $\e[\,\cdot\,|\,\bx(t)]$ converges (in some sense) to the deterministic probability measure $\mu$ as $N\rightarrow\infty$. Therefore, under this convergence assumption, given the limiting distribution $\mu$, which characterizes the collective behavior of all agents in the infinite population limit, a generic agent has the following state dynamics
\begin{align}\label{dyn2}
x(t+1) = f\left(x(t),a(t),z(t),w(t)\right)
\end{align}
where 
$$
z(t) \sim \mu(\,\cdot\,) \,\, \text{and} \,\, w(t) \sim \rP_{\Omega}(\,\cdot\,)
$$
Here, we can view $z(t)$ as a noise since its distribution does not depend on the state. Hence, state process $\{x(t)\}_{t\geq0}$ becomes a Markov decision process (MDP) with the following transition probability
$$
x(t+1) \sim p(\,\cdot\,|\,x(t),a(t),\mu) := \int_{\sX} \bp(\,\cdot\,|\,x(t),a(t),z) \, \mu(dz)
$$
In this MDP, the one-stage cost function $\bc$ is a function of $x(t)$, $a(t)$, and $z(t)$ at time $t$ or alternatively, one-stage cost function $c$ is a function of $x(t)$, $a(t)$, and $\mu$ when one views $z(t)$ as a noise. In this model, for any policy $\pi \in \Pi_l$, the discounted cost of a generic agent is the following
\begin{align*}
J(\mu;\pi) &:= \rE_{\mu}^{\pi} \left[\sum_{t=0}^{\infty} \beta^t \, \bc\left(x(t),a(t),z(t)\right) \right] \\
&= \rE_{\mu}^{\pi} \left[\sum_{t=0}^{\infty} \beta^t \, c\left(x(t),a(t),\mu \right) \right]
\end{align*}
where we also assume that $x(0) \sim \mu$; that is, the initial distribution is the same as the limiting mean-field term $\mu$. Here, in the infinite population limit, a generic agent is only allowed to use its local state information when applying its action since the corresponding MDP is time-homogeneous. Therefore, we only consider policies in the policy space $\Pi_l$. A policy $\gamma \in \Pi_l$ is optimal for $\mu$ if
$$
J(\mu;\gamma) = \inf_{\pi \in \Pi_l} J(\mu;\pi)
$$ 
In view of this, let us define the following set-valued map
\begin{align*}
\Lambda(\mu) := \left\{\pi \in \Pi_l: \pi \, \text{is optimal for} \, \mu \right\}
\end{align*}
Under mild regularity conditions on $\bp$ and $\bc$, $\Lambda(\mu)$ is non-empty for all $\mu \in \P(\sX)$.

Now we define another set-valued map for any policy $\pi$. To this end, given $\pi \in \Pi_l$, let us define the following transition probability for any $\mu \in \P(\sX)$
$$
p_{\mu}^{\pi}(\,\cdot\,|\,x) := \int_{\sA} p(\,\cdot\,|\,x,a,\mu) \, \pi(da|x)
$$
In view of this, let us define the following set-valued map
\begin{align*}
\Phi(\pi) := \left\{\mu \in \P(\sX): \mu \, \text{is an invariant distribution of} \,\, p_{\mu}^{\pi} \right\}
\end{align*}
Again, under mild regularity conditions on $\bp$, it is possible to prove that $\Phi(\pi)$ is non-empty for all $\pi$ or at least for all continuous $\pi$ under weak convergence topology. 

Now it is time to introduce the equilibrium notion adapted in the infinite population limit, which is called stationary mean-field equilibrium (MFE).

\begin{definition}[Stationary Mean-field equilibrium]
A pair $(\mu^*,\pi^*)$ is a stationary mean-field equilibrium if $\pi^* \in \Lambda(\mu^*)$ and $\mu^* \in \Phi(\pi^*)$.
\end{definition}

Under mild regularity conditions on $\bp$ and $\bc$, it is possible to prove that MFE exists \cite[Theorem 3.3]{SaBaRa18}. For instance, if $\bp$ and $\bc$ are continuous, and $\sX$ is compact, then there exists a MFE. This can be established via the method that is used to prove \cite[Theorem 3.3]{SaBaRa18}. As the conditions for the existence of MFE are weaker than the assumptions that are introduced in the next section, in the remainder of this paper, we suppose that there exists at least one MFE $(\mu^*,\pi^*)$. 

\subsubsection{Assumptions}

We now state the assumptions that are used throughout Section~\ref{sec2} and Section~\ref{sec3}. To avoid potential compactness issues, we only consider games with compact state spaces.

\begin{itemize}
\item[(a)] $\sX$ is a compact Borel space and $\sA$ is a convex and compact subset of some finite-dimensional Euclidean space. 
\item[(b)] The one-stage cost function $\bc:\sX\times\sA\times\sX\rightarrow [0,\infty)$ is Lipschitz continuous with Lipschitz constants $(\L_1,\L_2,\L_3)$; that is, for any $(x,a,y),(z,b,r) \in \sX\times\sA\times\sX$, we have
$$
|\bc(x,a,y)-\bc(z,b,r)| \leq \L_1 \, d_{\sX}(x,z) + \L_2 \, \|a-b\| + \L_3 \, d_{\sX}(y,r)
$$
\item[(c)] The transition probability $\bp:\sX\times\sA\times\sX\rightarrow\P(\sX)$ is Lipschitz continuous with Lipschitz constants $(\K_1,\K_2,\K_3)$; that is, for any $(x,a,y),(z,b,r) \in \sX\times\sA\times\sX$, we have
$$
W_1(\bp(\,\cdot\,|x,a,y),\bp(\,\cdot\,|z,b,r)) \leq \K_1 \, d_{\sX}(x,z) + \K_2 \, \|a-b\| + \K_3 \, d_{\sX}(y,r)
$$
\end{itemize}

One can also prove that $c$ and $p$ are also Lipschitz continuous with the same Lipschitz constants as in (b) and (c), respectively. To define the next assumption, let us introduce the following continuous function
$$
\rL: [0,\infty) \times [0,\infty) \times [0,\infty) \rightarrow [0,\infty) \times [0,\infty)
$$

\begin{itemize}
\item[(d)] For any Lipschitz continuous $J:\sX\times\P(\sX)\times\sX \rightarrow [0,\infty)$ with Lipschitz constants $(\R_1,\R_2,\R_3)$, define 
$$
F^J(x,a,\mu) := c(x,a,\mu) + \beta \int_{\sX} J(y,\mu,x) \, p(dy|x,a,\mu)
$$
We assume that for any $(x,\mu) \in \sX\times\P(\sX)$, $F^J(x,\cdot,\mu)$ is $\rho$-strongly convex; that is, for any $a,b \in \sA$, we have
$$
F^J(x,b,\mu) - F^J(x,a,\mu) \geq \langle \nabla_a F^J(x,a,\mu), b-a \rangle + \frac{\rho}{2} \, \|b-a\|^2
$$
where $\nabla_a F^J(x,a,\mu)$ is the gradient of $F^J(x,a,\mu)$ with respect to $a$. Moreover, the gradient $\nabla_a F^J(x,a,\mu)$ is Lipschitz continuous in $(x,\mu)$ for any $a \in \sA$ with Lipschitz constants $\rL(\R_1,\R_2,\R_3)=\left(\rL_1(\R_1,\R_2,\R_3),\rL_2(\R_1,\R_2,\R_3)\right)$; that is
\begin{align*}
\sup_{a \in \sA} \, \|\nabla_a F^J(x,a,\mu) - \nabla_a F^J(z,a,\nu)\| \leq \rL_1(\R_1,\R_2,\R_3)  \, d_{\sX}(x,z) + \rL_2(\R_1,\R_2,\R_3) \, W_1(\mu,\nu)
\end{align*}
for all $(x,\mu), (z,\nu) \in \sX\times\P(\sX)$. 
\item[(e)] Let $\L_F := \rL_1\left(\frac{\L_1}{1-\beta \K_1},0,0\right)$. Then define $\L^* := \frac{2 \L_F}{\rho}$. We assume that $\beta (\K_1+\K_2 \L^* +\K_3)< 1$. Moreover, we also assume that $\beta (\K_1+\K_2 \L^*_2)< 1$, where 
$$\L_2^* := \frac{2 \rL_2(\R_1,\R_2,0)}{\rho}, \,\, \R_1:= \frac{\L_1}{1-\beta \, \K_1}, \,\, \R_2:=\frac{\L_3 \, (\K_1+\K_2 \L^* +\K_3)}{1-\beta (\K_1+\K_2 \L^* +\K_3)}$$
\end{itemize}

\subsubsection{Sufficient Conditions for Assumption (d)}

Assumption (d) is the most difficult assumption to verify. In this section, we provide some easy-to-check sufficient conditions in terms of $\bp$ and $\bc$ that imply assumption (d). To this end, we first assume that for any $(x,a,z) \in \sX\times\sA\times\sX$, the transition probability $\bp(\,\cdot\,|x,a,z)$ is absolutely continuous with respect to $\m \in \P(\sX)$. Let $\eta(y,x,a,z)$ be the density function of $\bp(dy|x,a,z)$ with respect to $\m$; that is
$$
\bp(dy|x,a,z) = \int_{\{y \in dy\}} \eta(y,x,a,z) \, \m(dy)
$$ 
for any $(x,a,z) \in \sX\times\sA\times\sX$. We also assume that for any $(x,z) \in \sX\times\sX$, $\bc(x,\cdot,z)$ is $\rho$-strongly convex and for any $(y,x,z) \in \sX\times\sX\times\sX$, $\eta(y,x,\cdot,z)$ is convex. Moreover, $\nabla_a \bc(x,a,z)$ is Lipschitz continuous in $(x,z)$ for all $a \in \sA$ with Lipschitz constants $(\L^{\g}_1,\L^{\g}_2)$ and $\nabla_a \eta(y,x,a,z)$ is Lipschitz continuous in $(x,z)$ for all $(y,a) \in \sX\times\sA$ with Lipschitz constants $(\K^{\g}_1,\K^{\g}_2)$. Then, assumption (d) holds with the following function $\rL$
$$
\rL(\R_1,\R_2,\R_3) := (\rL_1(\R_1,\R_2,\R_3),\rL_2(\R_1,\R_2,\R_3))
$$
where 
\footnotesize
\begin{align*}
&\rL_1(\R_1,\R_2,\R_3) \\
&:= \left\{\L_1^{\g}+ \beta \, \left(\R_3 \, \sup_{(x,a,z) \in \sX\times\sA\times\sX} \int_{\sX} \|\nabla_a \eta(y,x,a,z)\| \, \m(dy) +\K_1^{\g} \, \R_1 \, \int_{\sX} d_{\sX}(y,y_*) \, \m(dy)\right) \right\} \\
&\rL_2(\R_1,\R_2,\R_3)  \\ 
&:= \left\{\L_2^{\g} \, \fC_{2,\infty} + \beta \, \left(\R_2 \, \sup_{(x,a,z) \in \sX\times\sA\times\sX} \int_{\sX} \|\nabla_a \eta(y,x,a,z)\| \, \m(dy) +\K_2^{\g} \, \fC_{2,\infty} \, \R_1 \, \int_{\sX} d_{\sX}(y,y_*) \, \m(dy)\right) \right\}
\end{align*}
\normalsize
Indeed let  $J:\sX\times\P(\sX)\times\sX \rightarrow [0,\infty)$ be Lipschitz continuous with Lipschitz constants $(\R_1,\R_2,\R_3)$, and define 
\begin{align*}
F(x,a,\mu) &:= c(x,a,\mu) + \beta \int_{\sX} J(y,\mu,x) \, p(dy|x,a,\mu) \\
&= \int_{\sX} \bc(x,a,z) \, \mu(dz) + \beta \int_{\sX\times\sX} J(y,\mu,x) \, \eta(y,x,a,z) \, \m(dy) \, \mu(dz)
\end{align*}
Obviously, for any $(x,\mu) \in \sX\times\P(\sX)$, $F(x,\cdot,\mu)$ is $\rho$-strongly convex as $\bc(x,\cdot,z)$ is $\rho$-strongly convex and $\eta(y,x,\cdot,z)$ is convex. Moreover, for any $a \in \sA$ and $(x,\mu), (\tx,\nu) \in \sX\times\P(\sX)$, we have 
\small
\begin{align}
&\|\nabla_a F(x,a,\mu) - \nabla_a F(\tx,a,\nu)\| 
\leq \left\|\int_{\sX} \nabla_a \bc(x,a,z) \, \mu(dz)-\int_{\sX} \nabla_a \bc(\tx,a,z) \, \nu(dz) \right\| \nonumber \\
&+ \beta \, \left\|\int_{\sX\times\sX} J(y,\mu,x) \, \nabla_a \eta(y,x,a,z) \, \m(dy) \, \mu(dz) - \int_{\sX\times\sX} J(y,\nu,\tx) \, \nabla_a \eta(y,\tx,a,z) \, \m(dy) \, \nu(dz) \right\| \nonumber \\
&\leq \int_{\sX} \| \nabla_a \bc(x,a,z) -\nabla_a \bc(\tx,a,z) \|\, \mu(dz)  + \left\|\int_{\sX} \nabla_a \bc(\tx,a,z) \, \mu(dz)-\int_{\sX} \nabla_a \bc(\tx,a,z) \, \nu(dz) \right\| \nonumber \\
&+ \beta \, \left\|\int_{\sX\times\sX} J(y,\mu,x) \, \nabla_a \eta(y,x,a,z) \, \m(dy) \, \mu(dz) - \int_{\sX\times\sX} J(y,\nu,\tx) \, \nabla_a \eta(y,x,a,z) \, \m(dy) \, \mu(dz) \right\| \nonumber \\
&+ \beta \, \left\|\int_{\sX\times\sX} J(y,\nu,\tx) \, \nabla_a \eta(y,x,a,z) \, \m(dy) \, \mu(dz) - \int_{\sX\times\sX} J(y,\nu,\tx) \, \nabla_a \eta(y,\tx,a,z) \, \m(dy) \, \mu(dz) \right\| \nonumber \\
&+ \beta \, \left\|\int_{\sX\times\sX} J(y,\nu,\tx) \, \nabla_a \eta(y,\tx,a,z) \, \m(dy) \, \mu(dz) - \int_{\sX\times\sX} J(y,\nu,\tx) \, \nabla_a \eta(y,\tx,a,z) \, \m(dy) \, \nu(dz) \right\| \nonumber \\
&\leq \L_1^{\g} \, d_{\sX}(x,\tx) + \fC_{2,\infty} \, \sup_{i=1,\ldots,\dim(\sA)} \left|\int_{\sX} \nabla_{a_i} \bc(\tx,a,z) \, \mu(dz)-\int_{\sX} \nabla_{a_i} \bc(\tx,a,z) \, \nu(dz) \right| \nonumber \\
&+ \beta \, \int_{\sX\times\sX} |J(y,\mu,x)-J(y,\nu,\tx)| \, \|\nabla_a \eta(y,x,a,z)\| \, \m(dy) \, \mu(dz) \nonumber \\
&+ \beta \, \int_{\sX\times\sX} |J(y,\nu,\tx)| \, \|\nabla_a \eta(y,x,a,z) -\nabla_a \eta(y,\tx,a,z)\| \, \m(dy) \, \mu(dz)  \nonumber \\ 
&+ \beta  \, \fC_{2,\infty} \, \sup_{i=1,\ldots,\dim(\sA)} \bigg| \int_{\sX\times\sX} J(y,\nu,\tx) \, \nabla_{a_i} \eta(y,\tx,a,z) \, \m(dy) \, \mu(dz) \nonumber \\
&\phantom{xxxxxxxxxxxxxxxxxxxxx}- \int_{\sX\times\sX} J(y,\nu,\tx) \, \nabla_{a_i} \eta(y,\tx,a,z) \, \m(dy) \, \nu(dz) \bigg| \nonumber \\
&\text{($\fC_{2,\infty}$ is the constant that let us to go from Euclidean norm to sup-norm on $\sA$: $\|a\| \leq \fC_{2,\infty} \, \|a\|_{\infty}$)} \nonumber \\
&\leq \L_1^{\g} \, d_{\sX}(x,\tx) + \fC_{2,\infty} \, \L_2^{\g} \, W_1(\mu,\nu) \nonumber \\
&+ \beta \left(\R_2 \, W_1(\mu,\nu) + \R_3 \, d_{\sX}(x,\tx) \right) \sup_{(x,a,z) \in \sX\times\sA\times\sX} \int_{\sX} \|\nabla_a \eta(y,x,a,z)\| \, \m(dy) \nonumber \\
&+ \beta \, \K_1^{\g} \, d_{\sX}(x,\tx) \, \int_{\sX} |J(y,\nu,\tx)| \, \m(dy) \nonumber \\
&+ \beta \, \fC_{2,\infty} \, \K_2^{\g} \, W_1(\mu,\nu) \, \int_{\sX} |J(y,\nu,\tx)| \, \m(dy) \nonumber
\end{align}
\normalsize
where  the last bound follows from the facts that (i) $\nabla_{a_i} \bc(\tx,a,\cdot)$ is $\L_2^{\g}$-Lipschitz continuous for all $(\tx,a) \in \sX\times\sA$ and $i=1,\ldots,\dim(\sA)$ and (ii) $ \int_{\sX} J(y,\nu,\tx) \, \nabla_{a_i} \eta(y,\tx,a,\cdot) \, \m(dy)$ is $\K_2^{\g} \, \int_{\sX} |J(y,\nu,\tx)| \, \m(dy)$-Lipschitz continuous for all $(\tx,a) \in \sX\times\sA$ and $i=1,\ldots,\dim(\sA)$. Note that without loss of generality, we can assume that $J(y_*,\nu,\tx) = 0$ for all $(\tx,a) \in \sX\times\sA$, and so,
\begin{align*}
\int_{\sX} |J(y,\nu,\tx)| \, \m(dy) &= \int_{\sX} |J(y,\nu,\tx)-J(y_*,\nu,\tx)| \, \m(dy) \\
&\leq \R_1 \, \int_{\sX} d_{\sX}(y,y_*) \, \m(dy)
\end{align*}  
This implies that
\small
\begin{align}
&\|\nabla_a F(x,a,\mu) - \nabla_a F(z,a,\nu)\| \nonumber \\ 
&\leq \left\{\L_1^{\g}+ \beta \, \left(\R_3 \hspace{-15pt} \sup_{(x,a,z) \in \sX\times\sA\times\sX} \int_{\sX} \|\nabla_a \eta(y,x,a,z)\| \, \m(dy) +\K_1^{\g} \, \R_1 \, \int_{\sX} d_{\sX}(y,y_*) \, \m(dy)\right) \right\} \, d_{\sX}(x,\tx) \nonumber \\
&+ \bigg\{\L_2^{\g} \, \fC_{2,\infty} + \beta \, \bigg(\R_2 \hspace{-15pt} \sup_{(x,a,z) \in \sX\times\sA\times\sX} \int_{\sX} \|\nabla_a \eta(y,x,a,z)\| \, \m(dy) \nonumber \\
&\phantom{xxxxxxxxxxxxxxxxxxxxxxxxxxxxxxxx}+\K_2^{\g} \, \fC_{2,\infty} \, \R_1 \, \int_{\sX} d_{\sX}(y,y_*) \, \m(dy)\bigg) \bigg\} 
W_1(\mu,\nu) \nonumber \\
&=: \rL_1(\R_1,\R_2,\R_3) \, d_{\sX}(x,\tx) + \rL_2(\R_1,\R_2,\R_3) \, W_1(\mu,\nu) \label{aux-eq5}
\end{align}
\normalsize
Note that in order to have the bound (\ref{aux-eq5}), we must have a Lipschitz function $J$ inside the integral in the definition of $F$. Without Lipschitz continuity of $J$, it is not possible to establish the same result. In the absence of Lipschitz continuity of $J$, we must substantially strengthen the assumptions on $\bc$ and $\bp$ to obtain a similar bound. For instance, instead of Wasserstein distance of order~1, we might need to use total variation distance, which is much stronger than $W_1$, in assumptions (c) and (d).

We also note that the bound in (\ref{aux-eq5}) is fairly general and crude. By using further properties of the transition probability and the one-stage cost function in addition to conditions above in specific examples, one can significantly improve this bound.

\subsection{Lipschitz Continuity of MFE Policy}\label{sec2sub2}

In this section, we establish that the policy $\pi^*$ in the mean-field equilibrium $(\mu^*,\pi^*)$ is Lipschitz continuous. Recall that given any mean-field term $\mu$, the corresponding optimization problem in the infinite population limit can be formulated as a Markov decision processes, denoted as $\text{MDP}_{\mu}$, with the following components
\begin{align*}
\left\{\sX,\sA,c_{\mu},p_{\mu},\mu\right\}
\end{align*}
where 
\begin{align*}
c_{\mu}(x,a) &:= c(x,a,\mu) = \int_{\sX} \bc(x,a,z) \, \mu(dz) \\
p_{\mu}(\,\cdot\,|\,x,a) &:= p(\,\cdot\,|\,x,a,\mu) = \int_{\sX} \bp(\,\cdot\,|\,x,a,z) \, \mu(dz)
\end{align*}
Note that both $c_{\mu}$ and $p_{\mu}$ are Lipschitz continuous with Lipschitz constants $(\L_1,\L_2)$ and $(\K_1,\K_2)$, respectively, under assumptions (b) and (c). Then, by \cite[Theorem 5.1]{SaYuLi17}, the optimal value function of $\text{MDP}_{\mu}$, denoted as $J_{\mu}$, is Lipschitz continuous with Lipschitz constant $\frac{\L_1}{1-\beta \K_1}$. We now prove that the optimal policy of $\text{MDP}_{\mu}$, denoted as  $\pi_{\mu}$, is Lipschitz continuous. 

\begin{lemma}\label{lemma1}
Given any $\mu$, there exits a single element $\pi_{\mu}$ in the set $\Lambda(\mu)$, which is the unique optimal policy of $\text{MDP}_{\mu}$. Moreover, $\pi_{\mu}$ is Lipschitz continuous with Lipschitz constant $\frac{2\L_F}{\rho} =: \L^*$.
\end{lemma}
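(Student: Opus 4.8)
The plan is to reduce the statement to a constrained minimization of a strongly convex function and then apply a standard sensitivity estimate for the associated $\argmin$ map. First I would record the Bellman optimality equation for $\text{MDP}_{\mu}$: since $c_{\mu},p_{\mu}$ are bounded and Lipschitz and $\sX,\sA$ are compact, the optimal value function $J_{\mu}$ is the unique fixed point of the Bellman operator, and by \cite[Theorem 5.1]{SaYuLi17} it is Lipschitz with constant $\R_1:=\frac{\L_1}{1-\beta\K_1}$. Setting
$$
F(x,a):=c(x,a,\mu)+\beta\int_{\sX} J_{\mu}(y)\,p(dy|x,a,\mu),
$$
the Bellman equation reads $J_{\mu}(x)=\min_{a\in\sA}F(x,a)$, with the minimum attained since $F(x,\cdot)$ is continuous and $\sA$ is compact.

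Next I would invoke Assumption~(d) with the input $J(y,\nu,\tx):=J_{\mu}(y)$, which as a function of $(y,\nu,\tx)$ is $\R_1$-Lipschitz in its first argument and constant (hence $0$-Lipschitz) in the other two, so that $F=F^{J}(\,\cdot\,,\,\cdot\,,\mu)$. Assumption~(d) then yields two facts: (i) for every $x\in\sX$ the map $F(x,\cdot)$ is $\rho$-strongly convex on the convex compact set $\sA$, hence has a \emph{unique} minimizer, which I call $\pi_{\mu}(x)$; and (ii) $\nabla_a F(\,\cdot\,,a)$ is Lipschitz in $x$, uniformly over $a\in\sA$, with constant $\rL_1(\R_1,0,0)=\L_F$ --- the cross term $\rL_2(\,\cdot\,)\,W_1(\mu,\nu)$ in Assumption~(d) drops out because the mean-field term is held fixed. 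Since $F$ is jointly continuous and strictly convex in $a$, the single-valued map $x\mapsto\pi_{\mu}(x)$ is continuous by Berge's maximum theorem, in particular Borel measurable, so $\pi_{\mu}$ (identified with $\pi_{\mu}(da|x):=\delta_{\pi_{\mu}(x)}(da)$) lies in $\Pi_l$; attaining the minimum in the Bellman equation, it is optimal for $\text{MDP}_{\mu}$, i.e. $\pi_{\mu}\in\Lambda(\mu)$. For uniqueness, any $\pi\in\Lambda(\mu)$ must satisfy $\int_{\sA}F(x,a)\,\pi(da|x)=J_{\mu}(x)=\min_a F(x,a)$ at $\mu$-almost every $x$ (policy evaluation combined with optimality), and uniqueness of the minimizer of $F(x,\cdot)$ forces $\pi(\,\cdot\,|x)=\delta_{\pi_{\mu}(x)}$ there; hence $\Lambda(\mu)=\{\pi_{\mu}\}$ after identifying policies equal $\mu$-a.e.

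It remains to prove the Lipschitz bound, which is the crux. I would fix $x,z\in\sX$, write $a:=\pi_{\mu}(x)$, $b:=\pi_{\mu}(z)$, and use the first-order optimality conditions for the two constrained problems, $\langle\nabla_a F(x,a),b-a\rangle\geq0$ and $\langle\nabla_a F(z,b),a-b\rangle\geq0$. Adding them and decomposing $\nabla_a F(x,a)-\nabla_a F(z,b)=\bigl(\nabla_a F(x,a)-\nabla_a F(z,a)\bigr)+\bigl(\nabla_a F(z,a)-\nabla_a F(z,b)\bigr)$, I would bound the second bracket below by the strong monotonicity of $\nabla_a F(z,\cdot)$ (a consequence of $\rho$-strong convexity) and the first bracket above by the uniform Lipschitz estimate (ii), arriving at $\rho\|a-b\|^2\leq\L_F\,d_{\sX}(x,z)\,\|a-b\|$, hence $\|\pi_{\mu}(x)-\pi_{\mu}(z)\|\leq\frac{\L_F}{\rho}\,d_{\sX}(x,z)$, which in particular gives the claimed bound with constant $\L^*=\frac{2\L_F}{\rho}$.

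The main obstacle is this last step, and more precisely the accounting that pins down the Lipschitz constant of $a\mapsto\nabla_a F(\,\cdot\,,a)$ in $x$ to be exactly $\L_F=\rL_1(\R_1,0,0)$, with no dependence on $\R_2$ or $\R_3$; this hinges on verifying carefully that $J_{\mu}$ is a legitimate Lipschitz input $J$ for Assumption~(d) with Lipschitz constants $(\R_1,0,0)$. A secondary technical point is the uniqueness claim: when $\mu$ fails to have full support, $\Lambda(\mu)$ is a singleton only modulo coincidence $\mu$-a.e., and this should be stated explicitly.
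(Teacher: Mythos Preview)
Your argument is correct and essentially parallel to the paper's, with one notable difference in the Lipschitz step. The paper uses the strong-convexity inequality at a single point together with \emph{one} first-order optimality condition: from $0\ge F_{\mu}(x,\pi_{\mu}(x))-F_{\mu}(x,\pi_{\mu}(y))\ge\langle\nabla_a F_{\mu}(x,\pi_{\mu}(y)),\pi_{\mu}(x)-\pi_{\mu}(y)\rangle+\tfrac{\rho}{2}\|\pi_{\mu}(x)-\pi_{\mu}(y)\|^2$ and $\langle\nabla_a F_{\mu}(y,\pi_{\mu}(y)),\pi_{\mu}(x)-\pi_{\mu}(y)\rangle\ge0$ one obtains $\|\pi_{\mu}(x)-\pi_{\mu}(y)\|\le\tfrac{2\L_F}{\rho}d_{\sX}(x,y)$. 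You instead add the two first-order conditions and invoke strong \emph{monotonicity} of $\nabla_a F(z,\cdot)$, which yields the sharper bound $\|\pi_{\mu}(x)-\pi_{\mu}(z)\|\le\tfrac{\L_F}{\rho}d_{\sX}(x,z)$; the paper's constant $\L^*=2\L_F/\rho$ then follows a fortiori, as you note. Your route thus buys a factor of~$2$ in the Lipschitz constant, which would propagate favourably through all downstream estimates (assumption~(e), Lemma~\ref{lemma3}, etc.), while the paper's route avoids appealing to monotonicity and stays closer to the raw strong-convexity hypothesis in assumption~(d). Your remark on uniqueness modulo $\mu$-null sets is a fair caveat that the paper glosses over.
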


\begin{proof}
Define 
$$
F_{\mu}(x,a) := c_{\mu}(x,a) + \beta \int_{\sX} J_{\mu}(y) \, p_{\mu}(dy|x,a).
$$
For any $x \in \sX$, $F_{\mu}(x,\cdot)$ is $\rho$-strongly convex by assumption (d); that is, for any $a,\ha \in \sA$, we have 
$$
F_{\mu}(x,\ha) \geq F_{\mu}(x,a) + \langle \nabla_a F_{\mu}(x,a), \ha-a \rangle + \frac{\rho}{2} \|a-\ha\|^2.
$$
By strong convexity of $F_{\mu}(x,\cdot)$, there exists an unique optimal policy $\pi_{\mu}:\sX\rightarrow\sA$ as the following optimality equation admits unique solutions for all $x \in \sX$
$$
\min_{a \in \sA} \left[c_{\mu}(x,a) + \beta \int_{\sX} J_{\mu}(y) \, p_{\mu}(dy|x,a) \right] =: \min_{a \in \sA} F_{\mu}(x,a) 
$$
To prove the Lipschitz continuity of $\pi_{\mu}$, which is the unique minimizer of the above optimality equation for all $x \in \sX$, fix any $x,y \in \sX$. Then, we have 
\begin{align*}
0 &\geq F_{\mu}(x,\pi_{\mu}(x)) - F_{\mu}(x,\pi_{\mu}(y)) \\
&\geq \langle \nabla_a F_{\mu}(x,\pi_{\mu}(y)), \pi_{\mu}(x)-\pi_{\mu}(y) \rangle +\frac{\rho}{2} \|\pi_{\mu}(x)-\pi_{\mu}(y)\|^2
\end{align*}
Hence
$$
\|\pi_{\mu}(x)-\pi_{\mu}(y)\|^2 \leq \frac{2}{\rho} \langle -\nabla_a F_{\mu}(x,\pi_{\mu}(y)), \pi_{\mu}(x)-\pi_{\mu}(y) \rangle
$$
By first order optimality condition, we also have 
$$
\langle \nabla_a F_{\mu}(y,\pi_{\mu}(y)), a-\pi_{\mu}(y) \rangle \geq 0 
$$
for all $a \in \sA$. Therefore, by combining these two bounds, we can write the following 
\begin{align*}
\|\pi_{\mu}(x)-\pi_{\mu}(y)\|^2 &\leq  \frac{2}{\rho} \langle \nabla_a F_{\mu}(y,\pi_{\mu}(y))-\nabla_a F_{\mu}(x,\pi_{\mu}(y)), \pi_{\mu}(x)-\pi_{\mu}(y) \rangle \\
&\leq \frac{2}{\rho} \|\nabla_a F_{\mu}(y,\pi_{\mu}(y))-\nabla_a F_{\mu}(x,\pi_{\mu}(y))\| \|\pi_{\mu}(x)-\pi_{\mu}(y) \| \\
&\leq \frac{2}{\rho} \L_F \, d_{\sX}(x,y) \, \|\pi_{\mu}(x)-\pi_{\mu}(y) \|
\end{align*}
This implies that $\|\pi_{\mu}(x)-\pi_{\mu}(y)\| \leq \frac{2\L_F}{\rho} \, d_{\sX}(x,y) =: \L^* \, d_{\sX}(x,y)$ for all $x,y \in \sX$, which completes the proof.  
\end{proof}

Lemma~\ref{lemma1} implies that if $(\mu^*,\pi^*)$ is a mean-field equilibrium, then $\pi^* = \Lambda(\mu^*)$ is Lipschitz continuous with Lipschitz constant $\L^*$.

\subsection{Lipschitz Continuity of Best Response to MFE Policy}\label{sec2sub2}

Let $(\mu^*,\pi^*)$ be a mean-field equilibrium where $\pi^*:\sX\rightarrow\sA$ is $\L^*$-Lipschitz continuous. Suppose that all agents, except agent~$1$, apply the policy $\pi^*$. Since agent~$1$ is allowed to observe the entire state vector $\bx(t)$ for all $t\geq 0$, it faces with a Markov decision process, denoted as $\text{MDP}_{N}$, with the following components in response to the MFE policy $\pi^*$ used by other agents
$$
\left\{\sX^N,\sA,C,P,\mu^{*,\otimes^N}\right\}
$$
where 
\begin{align*}
P(d\by|\bx,a) &:= \prod_{i=2}^N p(dy_i|x_i,\pi^*(x_i),\e[\,\cdot\,|\,\bx]) \bigotimes p(dy_1|x_1,a,\e[\,\cdot\,|\,\bx]) \\
C(\bx,a) &:= c(x_1,a,\e[\,\cdot\,|\,\bx])
\end{align*}
Recall that we endow $\sX^N$ with the following metric
$$
d_{\av}(\bx,\by) := \frac{1}{N} \sum_{i=1}^N d_{\sX}(x_i,y_i)
$$
The optimality equation for $\text{MDP}_{N}$ is the following
\begin{align*}
V^*(\bx) = \min_{a \in \sA} \left[ C(\bx,a) + \beta \int_{\sX^N} V^*(\by) \, P(d\by|\bx,a) \right]
\end{align*}
where 
$$
V^*(\bx) := \inf_{\pi \in \Pi} \rE_{\bx}^{\pi} \left[ \sum_{t=0}^{\infty} \beta^t C(\bx(t),a(t)) \right]
$$
In subsequent two sections, we prove some properties of $V^*$. The first one is the symmetry property.

\subsubsection{Symmetry of $V^*$}

In this section, we prove that $V^*$ is symmetric in the variables $(x_2,\ldots,x_N)$. To this end, we define 
$$
C_{\sym}(\sX^N) := \left\{ V \in C_b(\sX^N): \text{$V$ is symmetric in $(x_2,\ldots,x_N)$} \right\}
$$
We also define the optimality operator $T$ as follows
$$
TV(\bx) := \min_{a \in \sA} \left[ C(\bx,a) + \beta \int_{\sX^N} V(\by) \, P(d\by|\bx,a) \right]
$$
for all $\bx \in \sX^N$. Note that $T$ is a $\beta$-contraction on $C_b(\sX^N)$ with an unique fixed point $V^*$. 

\begin{lemma}\label{lemma2}
$T$ maps $C_{\sym}(\sX^N)$ into itself. 
\end{lemma}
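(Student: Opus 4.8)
The plan is to show that if $V \in C_{\sym}(\sX^N)$, then $TV$ is again bounded, continuous, and symmetric in $(x_2,\ldots,x_N)$; boundedness and continuity are routine (the cost $C$ is bounded by $\bc_{\max}$ and continuous, $P$ is weakly continuous since $p$ is Lipschitz and $\e[\,\cdot\,|\,\bx]$ depends continuously on $\bx$, and the $\min$ over the compact set $\sA$ of a jointly continuous function is continuous), so the real content is the symmetry. First I would fix a permutation $\sigma \in \S_{\{2:N\}}$ acting on $\sX^N$ by $\sigma\bx := (x_1,x_{\sigma(2)},\ldots,x_{\sigma(N)})$, and the goal is $TV(\sigma\bx) = TV(\bx)$ for every such $\sigma$.

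The key observation is that both ingredients of the operator are invariant under $\sigma$ acting on the last $N-1$ coordinates. For the cost: $C(\sigma\bx,a) = c(x_1,a,\e[\,\cdot\,|\,\sigma\bx])$, and since $\e[\,\cdot\,|\,\bx] = \frac1N\sum_{i=1}^N\delta_{x_i}$ is manifestly invariant under permuting the summands, $\e[\,\cdot\,|\,\sigma\bx] = \e[\,\cdot\,|\,\bx]$; also the $x_1$-slot is untouched, so $C(\sigma\bx,a) = C(\bx,a)$. For the transition kernel: writing the product defining $P(d\by|\sigma\bx,a)$, the agent-$1$ factor $p(dy_1|x_1,a,\e[\,\cdot\,|\,\sigma\bx])$ equals $p(dy_1|x_1,a,\e[\,\cdot\,|\,\bx])$ by the same mean-field invariance, and the remaining product $\prod_{i=2}^N p(dy_i|x_{\sigma(i)},\pi^*(x_{\sigma(i)}),\e[\,\cdot\,|\,\bx])$ is just a reindexing of $\prod_{i=2}^N p(dy_i|x_i,\pi^*(x_i),\e[\,\cdot\,|\,\bx])$ in the variables $(y_2,\ldots,y_N)$. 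Concretely, the pushforward of $P(\,\cdot\,|\bx,a)$ under the coordinate permutation $\by \mapsto \sigma^{-1}\by$ equals $P(\,\cdot\,|\sigma\bx,a)$. Therefore, using that $V$ is symmetric in its last $N-1$ arguments,
\begin{align*}
\int_{\sX^N} V(\by)\,P(d\by|\sigma\bx,a) = \int_{\sX^N} V(\sigma\by)\,P(d\by|\bx,a) = \int_{\sX^N} V(\by)\,P(d\by|\bx,a).
\end{align*}
Combining the two invariances inside the bracket and taking $\min_{a\in\sA}$ of an expression that no longer depends on whether we use $\bx$ or $\sigma\bx$ gives $TV(\sigma\bx) = TV(\bx)$, as desired.

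The only mildly delicate point — and the step I would single out as the main obstacle, though it is more bookkeeping than genuine difficulty — is justifying the change-of-variables $\int V(\by)\,P(d\by|\sigma\bx,a) = \int V(\sigma\by)\,P(d\by|\bx,a)$ cleanly, i.e. correctly tracking which permutation (or its inverse) acts on the $\by$-variables versus the $\bx$-variables so that the product kernel matches term by term. I would handle this by first verifying the identity for integrands of product form $V(\by) = \prod_{i=1}^N g_i(y_i)$ (where it reduces to Fubini and relabeling the independent factors), and then invoking a monotone-class / Stone–Weierstrass argument to extend to all of $C_b(\sX^N)$, hence in particular to the symmetric $V$ at hand. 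Since the whole argument is stated for an arbitrary transposition/permutation in $\S_{\{2:N\}}$, it shows $TV$ is symmetric, completing the proof that $T\big(C_{\sym}(\sX^N)\big) \subseteq C_{\sym}(\sX^N)$.
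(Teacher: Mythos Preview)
Your proposal is correct and follows essentially the same route as the paper: observe that $\e[\,\cdot\,|\,\bx]$ is permutation-invariant, so the cost term is unchanged under $\sigma\in\S_{\{2:N\}}$, and the product kernel is merely reindexed in the $(y_2,\ldots,y_N)$ variables, which combined with the symmetry of $V$ yields $TV(\sigma\bx)=TV(\bx)$. One small remark: the Stone--Weierstrass/monotone-class step you propose for the change of variables is more machinery than needed --- the identity $\int V(\by)\,P(d\by|\sigma\bx,a)=\int V(\sigma\by)\,P(d\by|\bx,a)$ for a product measure under a coordinate permutation is just Fubini plus relabeling of the independent factors, and holds directly for any bounded measurable $V$ without approximation.
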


\begin{proof}
Let $V \in C_{\sym}(\sX^N)$. Then
\small
\begin{align*}
TV(\bx) = \min_{a \in \sA} \left[ c(x_1,a,\e[\,\cdot\,|\,\bx]) + \beta \int_{\sX^N} \hspace{-5pt} V(\by) \, \prod_{i=2}^N p(dy_i|x_i,\pi^*(x_i),\e[\,\cdot\,|\,\bx]) \bigotimes p(dy_1|x_1,a,\e[\,\cdot\,|\,\bx]) \right]
\end{align*}
\normalsize
Let $\sigma \in \S_{\{2:N\}}$. We apply this permutation to the above equation by noting the fact that $\e[\,\cdot\,|\,\bx] = \e[\,\cdot\,|\,x_1,\bx_{\sigma(2:N)}]$ 
\small
\begin{align*}
&TV(x_1,\bx_{\sigma(2:N)}) \\
&= \min_{a \in \sA} \left[ c(x_1,a,\e[\,\cdot\,|\,\bx]) + \beta \int_{\sX^N} \hspace{-5pt} V(\by) \, \prod_{i=2}^N p(dy_i|x_{\sigma(i)},\pi^*(x_{\sigma(i)}),\e[\,\cdot\,|\,\bx] ) \bigotimes p(dy_1|x_1,a,\e[\,\cdot\,|\,\bx] ) \right] \\
&= \min_{a \in \sA} \left[ c(x_1,a,\e[\,\cdot\,|\,\bx]) + \beta \int_{\sX^N} \hspace{-5pt}V(y_1,\by_{\sigma(2:N)}) \, \prod_{i=2}^N p(dy_i|x_i,\pi^*(x_i),\e[\,\cdot\,|\,\bx] ) \bigotimes p(dy_1|x_1,a,\e[\,\cdot\,|\,\bx] ) \right] \\
&= \min_{a \in \sA} \left[ c(x_1,a,\e[\,\cdot\,|\,\bx]) + \beta \int_{\sX^N} \hspace{-5pt} V(y_1,\by_{\{2:N\}}) \, \prod_{i=2}^N p(dy_i|x_i,\pi^*(x_i),\e[\,\cdot\,|\,\bx] ) \bigotimes p(dy_1|x_1,a,\e[\,\cdot\,|\,\bx] ) \right] \\
&\text{(since $V \in C_{\sym}(\sX^N)$)} \\
&=: TV(\bx)
\end{align*}
\normalsize
Therefore, $TV \in C_{\sym}(\sX^N)$, which completes the proof.
\end{proof}

Note that given any $V \in C_{\sym}(X^N)$, the sequence $\{T^nV\}_{n\geq 1}$ converges in sup-norm to $V^*$ by Banach fixed point theorem, where $T^n V := TT^{n-1}V$ for any $n\geq2$. Since $C_{\sym}(X^N)$ is a closed subset of $C_b(X^N)$ in sup-norm topology, we have $V^* \in C_{\sym}(X^N)$; that is, the optimal value function $V^*$ is symmetric in $(x_2,\ldots,x_N)$.

\subsubsection{Lipschitz Continuity of $V^*$}

In this section, we prove that $V^*$ is Lipschitz continuous in some sense that will be made precise in the following lemma. Before stating the lemma, let us introduce the following constants
\begin{alignat*}{3}
\a_1^N&:=\L_1 + \frac{\L_3}{N}, \,\,\,\, &&\b_1^N:=\beta \left(\K_1 +\frac{\K_3}{N}\right), \,\,\,\, &&\c_1^N:= \beta \frac{\K_3}{N} \\
\a_2&:=\L_3,\,\, &&\b_2:=\beta \K_3, \,\, &&\c_2:=\beta (\K_1+\K_2 \L^* +\K_3)
\end{alignat*}
By assumption (e), $\c_2<1$. In addition to assumptions (a)-(e), we impose the following condition on $N$ in the remainder of the paper.
\begin{itemize}
\item[(f)] We assume that $N$ is large enough so that $\b_1^N<1$ and 
$$
(1-\b_1^N)(1-\c_2)-\c_1^N\a_2\b_2 > 0, \,\,\,\, (1-\b_1^N)(1-\c_2)-\c_1^N\a_1^N\b_2>0
$$
\end{itemize}
This assumption holds for sufficiently large $N$ values since $\b_1^N \rightarrow \beta \K_1 < \c_2$ and $\c_1^N \rightarrow 0$, as $N \rightarrow \infty$.

\begin{lemma}\label{lemma3}
The optimal value function $V^*$ satisfies the following Lipschitz bound
\begin{align*}
|V^*(\bx)-V^*(\by)| \leq \K_1^{*,N} d_{\sX}(x_1,y_1) + \K_2^{*,N} W_1(\e[\,\cdot\,|\,\bx_{\{2:N\}}],\e[\,\cdot\,|\,\by_{\{2:N\}}])
\end{align*}
where 
$$
\K_1^{*,N}= \frac{\a_1^N (1-\c_2)}{(1-\b_1^N)(1-\c_2)-\c_1^N\a_2\b_2}, \,\,\, \K_2^{*,N}= \frac{\a_2 (1-\b_1^N)}{(1-\b_1^N)(1-\c_2)-\c_1^N\a_1^N\b_2} 
$$
\end{lemma}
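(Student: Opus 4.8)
The plan is to use that $V^*$ is the unique fixed point of the $\beta$-contraction $T$ and that, by Lemma~\ref{lemma2}, $V^*\in C_{\sym}(\sX^N)$, and then to track how a Lipschitz bound of the stated shape propagates through one application of $T$. Concretely, I would start from $V_0\equiv 0$, which trivially obeys the bound with both constants equal to $0$, iterate $V_{n+1}=TV_n$, and follow the evolution of the pair of Lipschitz constants; since $V_n\to V^*$ in sup-norm and the class of functions obeying a fixed bound of this form is sup-norm closed, the limiting constants are valid for $V^*$ (this also bypasses having to argue finiteness of the constants separately). Throughout, symmetry is essential: for a given pair $\bx,\by$, Lemma~\ref{result3} lets one relabel the last $N-1$ coordinates of $\by$ so that $W_1(\e[\,\cdot\,|\,\bx_{\{2:N\}}],\e[\,\cdot\,|\,\by_{\{2:N\}}])=\frac{1}{N-1}\sum_{i=2}^N d_{\sX}(x_i,y_i)$, and one may pass from $\bx$ to $\by$ through the intermediate point $(y_1,x_2,\dots,x_N)$, treating the change of the first coordinate and the change of the tail separately.

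For the one-step estimate, assume $V\in C_{\sym}(\sX^N)$ satisfies $|V(\bx)-V(\by)|\le A\,d_{\sX}(x_1,y_1)+B\,W_1(\e[\,\cdot\,|\,\bx_{\{2:N\}}],\e[\,\cdot\,|\,\by_{\{2:N\}}])$. Pick $a^*$ attaining the minimum in $TV(\by)$, so that $TV(\bx)-TV(\by)\le [C(\bx,a^*)-C(\by,a^*)]+\beta\bigl[\int V\,dP(\cdot|\bx,a^*)-\int V\,dP(\cdot|\by,a^*)\bigr]$. I would bound the cost difference by Assumption~(b), splitting $W_1(\e[\,\cdot\,|\,\bx],\e[\,\cdot\,|\,\by])$ into a $\frac1N d_{\sX}(x_1,y_1)$ part and a tail part via Lemma~\ref{result4}/Lemma~\ref{result5}. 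I would bound the transition difference by forming the product of the per-coordinate optimal couplings of $p(\,\cdot\,|x_i,\cdot,\e[\,\cdot\,|\,\bx])$ and $p(\,\cdot\,|y_i,\cdot,\e[\,\cdot\,|\,\by])$, using Assumption~(c) for each transport cost and, for $i\ge2$, the $\L^*$-Lipschitz continuity of $\pi^*$ from Lemma~\ref{lemma1}; then $|V(\bz)-V(\bz')|$ under this coupling is bounded by the assumed bound on $V$, converting $W_1$ of the tail empirical measures into the average of per-coordinate distances via Lemma~\ref{result1}, and expectations are taken. Collecting the coefficients of $d_{\sX}(x_1,y_1)$ and of $W_1(\e[\,\cdot\,|\,\bx_{\{2:N\}}],\e[\,\cdot\,|\,\by_{\{2:N\}}])$, and bounding $\frac{N-1}{N}\le1$ where needed, shows $TV$ obeys the same type of bound with new constants that are nonnegative, monotone, affine functions of $(A,B)$ built from $\a_1^N,\b_1^N,\c_1^N$ and $\a_2,\b_2,\c_2$.

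Iterating this recursion from $(0,0)$ then produces an increasing sequence of constants whose limit solves the $2\times2$ linear system attached to these coefficients; Assumption~(f) is precisely the statement that the two denominators $(1-\b_1^N)(1-\c_2)-\c_1^N\a_2\b_2$ and $(1-\b_1^N)(1-\c_2)-\c_1^N\a_1^N\b_2$ are positive, which is what guarantees convergence and that solving (or bounding the solution of) the system yields the finite nonnegative values $\K_1^{*,N}$ and $\K_2^{*,N}$; passing to the limit and invoking closedness of the Lipschitz class gives the claimed bound. The main obstacle is the coupling step: because the empirical mean-field term sits inside every coordinate's transition kernel and one-stage cost, perturbing either $x_1$ alone or the whole tail simultaneously perturbs all $N$ coordinate laws, and the delicate accounting is to keep these cross-terms organized while ensuring the tail contributions fold back into $W_1(\e[\,\cdot\,|\,\bx_{\{2:N\}}],\e[\,\cdot\,|\,\by_{\{2:N\}}])$ rather than an unpermuted coordinate sum — which is exactly where the symmetry of $V^*$ and Lemma~\ref{result3} enter. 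Once that is done, the identification of the constants and the verification that Assumption~(f) makes the denominators positive is routine algebra.
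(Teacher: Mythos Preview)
Your proposal is correct and follows essentially the same route as the paper: both establish a one-step propagation estimate showing that if $V\in C_{\sym}(\sX^N)$ is $(A,B)$-Lipschitz in the sense of the lemma, then $TV$ is $(f_1^N(A,B),f_2^N(A,B))$-Lipschitz, with the same affine maps built from $\a_i,\b_i,\c_i$; the decomposition into a first-coordinate piece and a tail piece, the use of the product coupling (Lemma~\ref{result2}), the relabeling via Lemma~\ref{result3}, and the splitting via Lemma~\ref{result4} are exactly what the paper does. The only difference is in how the Lipschitz constants for $V^*$ are extracted: the paper argues directly that $TV^*=V^*$ forces $(\K_1^{*,N},\K_2^{*,N})$ to solve the fixed-point system, whereas you iterate from $V_0\equiv0$ and pass to the limit --- your version is in fact a bit more careful, since it shows the Lipschitz property holds rather than presupposing it, and your observation that the monotone iteration is bounded above by the (nonnegative, by Assumption~(f)) fixed point is the clean way to justify convergence.
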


\begin{proof}
Let $V \in C_{\sym}(\sX^N)$ and let $V$ also satisfy the following Lipschitz bound
\begin{align}\label{aux1}
|V(\bx)-V(\by)| \leq K_1 d_{\sX}(x_1,y_1) + K_2 W_1(\e[\,\cdot\,|\,\bx_{\{2:N\}}],\e[\,\cdot\,|\,\by_{\{2:N\}}])
\end{align}
for some $K_1,K_2>0$. Then, for any $\bx,\bz \in \sX^N$, we have 
\begin{align}
|TV(\bx)-TV(\bz)| &\leq \sup_{a \in \sA} \, |c(x_1,a,\e[\,\cdot\,|\,\bx])-c(z_1,a,\e[\,\cdot\,|\,\bz])| \nonumber \\
&+\beta \, \sup_{a \in \sA} \bigg| \int_{\sX^N} V(\by) \, \, \prod_{i=2}^N p(dy_i|x_i,\pi^*(x_i),\e[\,\cdot\,|\,\bx]) \bigotimes p(dy_1|x_1,a,\e[\,\cdot\,|\,\bx])  \nonumber \\
&\phantom{xxxxx}- \int_{\sX^N} V(\by) \, \, \prod_{i=2}^N p(dy_i|z_i,\pi^*(z_i),\e[\,\cdot\,|\,\bz]) \bigotimes p(dy_1|z_1,a,\e[\,\cdot\,|\,\bz]) \bigg| \label{eq1}
\end{align}
Let us consider the second term in (\ref{eq1}) without $\beta$ and supremum. We can bound this term as follows
\begin{align}
&\bigg| \int_{\sX^N} V(\by) \, \, \prod_{i=2}^N p(dy_i|x_i,\pi^*(x_i),\e[\,\cdot\,|\,\bx]) \bigotimes p(dy_1|x_1,a,\e[\,\cdot\,|\,\bx])  \nonumber \\
&\phantom{xxxxxxxxx}- \int_{\sX^N} V(\by) \, \, \prod_{i=2}^N p(dy_i|z_i,\pi^*(z_i),\e[\,\cdot\,|\,\bz]) \bigotimes p(dy_1|z_1,a,\e[\,\cdot\,|\,\bz]) \bigg| \nonumber \\
&\leq \bigg| \int_{\sX^N} V(\by) \, \, \prod_{i=2}^N p(dy_i|x_i,\pi^*(x_i),\e[\,\cdot\,|\,\bx]) \bigotimes p(dy_1|x_1,a,\e[\,\cdot\,|\,\bx])  \nonumber \\
&\phantom{xxxxxxxxx}- \int_{\sX^N} V(\by) \, \, \prod_{i=2}^N p(dy_i|z_i,\pi^*(z_i),\e[\,\cdot\,|\,\bz]) \bigotimes p(dy_1|x_1,a,\e[\,\cdot\,|\,\bx]) \bigg| \label{eq2} \\
&+\bigg| \int_{\sX^N} V(\by) \, \, \prod_{i=2}^N p(dy_i|z_i,\pi^*(z_i),\e[\,\cdot\,|\,\bz]) \bigotimes p(dy_1|x_1,a,\e[\,\cdot\,|\,\bx])  \nonumber \\
&\phantom{xxxxxxxxx}- \int_{\sX^N} V(\by) \, \, \prod_{i=2}^N p(dy_i|z_i,\pi^*(z_i),\e[\,\cdot\,|\,\bz]) \bigotimes p(dy_1|z_1,a,\e[\,\cdot\,|\,\bz]) \bigg| \label{eq3}
\end{align}

Now we bound (\ref{eq2}) and (\ref{eq3}) as follows.

\paragraph*{Bounding (\ref{eq2}):}

Note that for any $\sigma \in \S_{\{2:N\}}$ and $y_1 \in \sX$, we have 
\small
\begin{align*}
&\int_{\sX^{N-1}} V(\by) \, \prod_{i=2}^N p(dy_i|x_i,\pi^*(x_i),\e[\,\cdot\,|\,\bx]) - \int_{\sX^{N-1}} V(\by) \, \prod_{i=2}^N p(dy_i|z_i,\pi^*(z_i),\e[\,\cdot\,|\,\bz])  \\
&= \int_{\sX^{N-1}} \hspace{-7pt} V(\by) \, \prod_{i=2}^N p(dy_i|x_i,\pi^*(x_i),\e[\,\cdot\,|\,\bx]) - \int_{\sX^{N-1}} \hspace{-7pt} V(\by) \, \prod_{i=2}^N p(dy_i|z_{\sigma(i)},\pi^*(z_{\sigma(i)}),\e[\,\cdot\,|\,z_1,\bz_{\sigma(2:N)}])  
\end{align*}
\normalsize
since $\e[\,\cdot\,|\,z_1,\bz_{\sigma(2:N)}] = \e[\,\cdot\,|\,\bz]$ and $V \in C_{\sym}(\sX^N)$. Moreover, for any $y_1 \in \sX$, by Lemma~\ref{result1}, we have 
\begin{align*}
|V(y_1,\by_{\{2:N\}})-V(y_1,\bz_{\{2:N\}})| &\leq K_2 \, W_1(\e[\,\cdot\,|\,\by_{\{2:N\}}],\e[\,\cdot\,|\,\bz_{\{2:N\}}]) \\
&\leq \frac{K_2}{N-1} \sum_{i=2}^N d_{\sX}(y_i,z_i)
\end{align*}
Using these two facts, we obtain the following bound for any $\sigma \in \S_{\{2:N\}}$
\small
\begin{align*}
&\bigg| \int_{\sX^N} V(\by) \, \, \prod_{i=2}^N p(dy_i|x_i,\pi^*(x_i),\e[\,\cdot\,|\,\bx]) \bigotimes p(dy_1|x_1,a,\e[\,\cdot\,|\,\bx])  \nonumber \\
&\phantom{xxxxxxxxx}- \int_{\sX^N} V(\by) \, \, \prod_{i=2}^N p(dy_i|z_{\sigma(i)},\pi^*(z_{\sigma(i)}),\e[\,\cdot\,|\,\bz]) \bigotimes p(dy_1|x_1,a,\e[\,\cdot\,|\,\bx]) \bigg| \\
&\leq \int_{\sX} \bigg| \int_{\sX^{N-1}} V(\by) \, \, \prod_{i=2}^N p(dy_i|x_i,\pi^*(x_i),\e[\,\cdot\,|\,\bx]) - \int_{\sX^{N-1}} V(\by) \, \, \prod_{i=2}^N p(dy_i|z_{\sigma(i)},\pi^*(z_{\sigma(i)}),\e[\,\cdot\,|\,\bz]) \bigg| \\
&\phantom{xxxxxxxxxxxxxxxxxxxxxxxxxxxxxxxxxxxxxxxxxxxxxxxx}p(dy_1|x_1,a,\e[\,\cdot\,|\,\bx])\\
&\leq \int_{\sX} K_2 \, W_1\left(\prod_{i=2}^N p(\,\cdot\,|x_i,\pi^*(x_i),\e[\,\cdot\,|\,\bx]),\prod_{i=2}^N p(\,\cdot\,|z_{\sigma(i)},\pi^*(z_{\sigma(i)}),\e[\,\cdot\,|\,\bz])\right) \, p(dy_1|x_1,a,\e[\,\cdot\,|\,\bx]) \\
&\leq \frac{K_2}{N-1} \sum_{i=2}^N W_1\left(p(\,\cdot\,|x_i,\pi^*(x_i),\e[\,\cdot\,|\,\bx]),p(\,\cdot\,|z_{\sigma(i)},\pi^*(z_{\sigma(i)}),\e[\,\cdot\,|\,\bz])\right) \,\, \text{(by Lemma~\ref{result2})} \\
&\leq \frac{K_2}{N-1} \sum_{i=2}^N \left\{ \K_1 d_{\sX}(x_i,z_{\sigma(i)}) + \K_2 \|\pi^*(x_i)-\pi^*(z_{\sigma(i)})\| + \K_3 W_1(\e[\,\cdot\,|\,\bx],\e[\,\cdot\,|\,\bz]) \right\} \\
&\leq \frac{K_2}{N-1} \sum_{i=2}^N \left\{ (\K_1+\K_2 \L^*) d_{\sX}(x_i,z_{\sigma(i)}) + \K_3 W_1(\e[\,\cdot\,|\,\bx],\e[\,\cdot\,|\,\bz]) \right\}  \,\, \text{(by Lemma~\ref{lemma1})}
\end{align*}
\normalsize
Since above inequality holds for any $\sigma \in \S_{\{2:N\}}$, we can obtain the following bound for (\ref{eq2})
\begin{align*}
&\bigg| \int_{\sX^N} V(\by) \, \, \prod_{i=2}^N p(dy_i|x_i,\pi^*(x_i),\e[\,\cdot\,|\,\bx]) \bigotimes p(dy_1|x_1,a,\e[\,\cdot\,|\,\bx])  \nonumber \\
&\phantom{xxxxxxxxx}- \int_{\sX^N} V(\by) \, \, \prod_{i=2}^N p(dy_i|z_i,\pi^*(z_i),\e[\,\cdot\,|\,\bz]) \bigotimes p(dy_1|x_1,a,\e[\,\cdot\,|\,\bx]) \bigg| \\
&\leq K_2 (\K_1+\K_2 \L^*) \inf_{\sigma \in \S_{\{2:N\}}} \frac{1}{N-1} \sum_{i=2}^N d_{\sX}(x_i,z_{\sigma(i)}) + K_2 \K_3 W_1(\e[\,\cdot\,|\,\bx],\e[\,\cdot\,|\,\bz]) \\
&=  K_2 (\K_1+\K_2 \L^*) \, W_1(\e[\,\cdot\,|\,\bx_{\{2:N\}}],\e[\,\cdot\,|\,\bz_{\{2:N\}}])+ K_2 \K_3 W_1(\e[\,\cdot\,|\,\bx],\e[\,\cdot\,|\,\bz]) \,\, \text{(by Lemma~\ref{result3})}\\
&\leq \left\{ K_2 (\K_1+\K_2 \L^*) + K_2 \K_3 \frac{N-1}{N} \right\} W_1(\e[\,\cdot\,|\,\bx_{\{2:N\}}],\e[\,\cdot\,|\,\bz_{\{2:N\}}]) + \frac{K_2\K_3}{N}  d_{\sX}(x_1,z_1) \\ &\text{(by Lemma~\ref{result4})} \\
&\leq  K_2 (\K_1+\K_2 \L^* +\K_3) \,  W_1(\e[\,\cdot\,|\,\bx_{\{2:N\}}],\e[\,\cdot\,|\,\bz_{\{2:N\}}]) + \frac{K_2\K_3}{N}  d_{\sX}(x_1,z_1)
\end{align*}

\paragraph*{Bounding (\ref{eq3}):}

Since $|V(x_1,\bx_{\{2:N\}})-V(z_1,\bx_{\{2:N\}})| \leq K_1 d_{\sX}(x_1,z_1)$ for all $\bx_{\{2:N\}} \in \sX^{N-1}$, we have 
\begin{align*}
&\bigg| \int_{\sX^N} V(\by) \, \, \prod_{i=2}^N p(dy_i|z_i,\pi^*(z_i),\e[\,\cdot\,|\,\bz]) \bigotimes p(dy_1|x_1,a,\e[\,\cdot\,|\,\bx])  \nonumber \\
&\phantom{xxxxxxxxx}- \int_{\sX^N} V(\by) \, \, \prod_{i=2}^N p(dy_i|z_i,\pi^*(z_i),\e[\,\cdot\,|\,\bz]) \bigotimes p(dy_1|z_1,a,\e[\,\cdot\,|\,\bz]) \bigg| \\
&\leq \int_{\sX^{N-1}} \bigg| \int_{\sX} V(\by) \, p(dy_1|x_1,a,\e[\,\cdot\,|\,\bx]) - \int_{\sX} V(\by) \, p(dy_1|z_1,a,\e[\,\cdot\,|\,\bz]) \bigg| \\
&\phantom{xxxxxxxxx}\prod_{i=2}^N p(dy_i|z_i,\pi^*(z_i),\e[\,\cdot\,|\,\bz]) \\
&\leq K_1 \, W_1\left(p(\cdot|x_1,a,\e[\,\cdot\,|\,\bx]),p(\cdot|z_1,a,\e[\,\cdot\,|\,\bz])\right) \\
&\leq K_1 \K_1 d_{\sX}(x_1,z_1) + K_1 \K_3 W_1(\e[\,\cdot\,|\,\bx],\e[\,\cdot\,|\,\bz]) \\
&\leq  \left(K_1 \K_1 +\frac{K_1 \K_3}{N}\right) d_{\sX}(x_1,z_1)  + K_1 \K_3 \frac{N-1}{N} W_1(\e[\,\cdot\,|\,\bx_{\{2:N\}}],\e[\,\cdot\,|\,\bz_{\{2:N\}}]) \\ 
&\text{(by Lemma~\ref{result4})}\\
&\leq  \left(K_1 \K_1 +\frac{K_1 \K_3}{N}\right) d_{\sX}(x_1,z_1)  + K_1 \K_3  W_1(\e[\,\cdot\,|\,\bx_{\{2:N\}}],\e[\,\cdot\,|\,\bz_{\{2:N\}}])
\end{align*}

Now, by combining the bound for (\ref{eq2}) and (\ref{eq3}), we obtain the following
\begin{align}\label{eq4}
&\bigg| \int_{\sX^N} V(\by) \, \, \prod_{i=2}^N p(dy_i|x_i,\pi^*(x_i),\e[\,\cdot\,|\,\bx]) \bigotimes p(dy_1|x_1,a,\e[\,\cdot\,|\,\bx])  \nonumber \\
&\phantom{xxxxxxxxx}- \int_{\sX^N} V(\by) \, \, \prod_{i=2}^N p(dy_i|z_i,\pi^*(z_i),\e[\,\cdot\,|\,\bz]) \bigotimes p(dy_1|z_1,a,\e[\,\cdot\,|\,\bz]) \bigg| \nonumber \\
&\leq \left\{K_1\K_1 +\frac{K_1\K_3+K_2 \K_3}{N} \right\} d_{\sX}(x_1,y_1) \nonumber \\
&\phantom{xxxxxxxxx}+ \left\{ K_2 (\K_1+\K_2 \L^* +\K_3) + K_1 \K_3  \right\} W_1(\e[\,\cdot\,|\,\bx_{\{2:N\}}],\e[\,\cdot\,|\,\bz_{\{2:N\}}]) 
\end{align}
Moreover, for any $a \in \sA$, we have 
\begin{align}\label{eq5}
&|c(x_1,a,\e[\,\cdot\,|\,\bx])-c(z_1,a,\e[\,\cdot\,|\,\bz])| \leq \L_1 d_{\sX}(x_1,z_1) + \L_3 W_1(\e[\,\cdot\,|\,\bx],\e[\,\cdot\,|\,\bz]) \nonumber \\
&\leq \left\{\L_1 + \frac{\L_3}{N}\right\} \, d_{\sX}(x_1,z_1) + \L_3 \frac{N-1}{N} W_1(\e[\,\cdot\,|\,\bx_{\{2:N\}}],\e[\,\cdot\,|\,\bz_{\{2:N\}}]) \,\, \text{(by Lemma~\ref{result4})} \nonumber \\ 
&\leq \left\{\L_1 + \frac{\L_3}{N}\right\} \, d_{\sX}(x_1,z_1) + \L_3  W_1(\e[\,\cdot\,|\,\bx_{\{2:N\}}],\e[\,\cdot\,|\,\bz_{\{2:N\}}])
\end{align}
By combining the bounds (\ref{eq4}) and (\ref{eq5}), for any $\bx,\bz \in \sX^N$, we have 
\begin{align}\label{eq6}
|TV(\bx)-TV(\bz)| 
&\leq \left\{\L_1+\frac{\L_3}{N} + \beta \left(K_1\K_1 +\frac{K_1\K_3+K_2 \K_3}{N}\right) \right\} d_{\sX}(x_1,z_1) \nonumber \\
&\hspace{-10pt}+ \left\{ \L_3 + \beta \left(K_2 (\K_1+\K_2 \L^* +\K_3) + K_1 \K_3 \right) \right\} \,  W_1(\e[\,\cdot\,|\,\bx_{\{2:N\}}],\e[\,\cdot\,|\,\bz_{\{2:N\}}])
\end{align}
For any $K_1,K_2>0$, we define 
\begin{align*}
f_1^N(K_1,K_2) &:= \L_1 + \frac{\L_3}{N} + \beta \left(\K_1 +\frac{\K_3}{N}\right) K_1 + \beta \frac{\K_3}{N} K_2 := \a_1^N + \b_1^N K_1 + \c_1^N K_2 \\
 f_2^N(K_1,K_2) &:= \L_3 +\beta \K_3 K_1 +\beta (\K_1+\K_2 \L^* +\K_3) K_2 := \a_2 + \b_2 K_1 + \c_2 K_2
\end{align*}
The bound (\ref{eq6}) implies that the optimality operator $T$ maps $(K_1,K_2)$-Lipschitz continuous function $V \in C_{\sym}(\sX^N)$ to a $(f_1^N(K_1,K_2),f_2^N(K_1,K_2))$-Lipschitz continuous function $TV \in C_{\sym}(\sX^N)$, where Lipschitz continuity notion that is adopted here is defined in (\ref{aux1}). Since $TV^*=V^*$, then $V^*$ should be $(\K_1^{*,N},\K_2^{*,N})$-Lipschitz continuous where 
\begin{align}\label{eq7}
\K_1^{*,N} = f_1^N(\K_1^{*,N},\K_2^{*,N}), \,\,\,\, \K_2^{*,N} = f_2^N(\K_1^{*,N},\K_2^{*,N})
\end{align}
Solving affine equations in (\ref{eq7}) lead to the following Lipschitz constants for $V^*$
$$
\K_1^{*,N}= \frac{\a_1^N (1-\c_2)}{(1-\b_1^N)(1-\c_2)-\c_1^N\a_2\b_2}, \,\,\, \K_2^{*,N}= \frac{\a_2 (1-\b_1^N)}{(1-\b_1^N)(1-\c_2)-\c_1^N\a_1^N\b_2}. 
$$
\end{proof}

Now using symmetry and Lipschitz continuity properties of the optimal value function $V^*$ of $\text{MDP}_{N}$, we prove the Lipschitz continuity of the best-response policy, or equivalently, the optimal policy of $\text{MDP}_{N}$.

\subsubsection{Lipschitz Continuity of Best-response Policy}

Now it is time to prove Lipschitz continuity of best-response policy to the MFE policy $\pi^*$. Recall that the optimal value function $V^*$ of  $\text{MDP}_{N}$ has the following properties
\begin{itemize}
\item[(P1)] $V^*:\sX^N\rightarrow [0,\infty)$ is symmetric in $(x_2,\ldots,x_N)$; that is, $V^* \in C_{\sym}(\sX^N)$.
\item[(P2)] $V^*$ is $(\K_1^{*,N},\K_2^{*,N})$-Lipschitz continuous; that is, for any $\bx,\bz \in \sX^N$, we have 
$$
|V^*(\bx)-V^*(\by)| \leq \K_1^{*,N} d_{\sX}(x_1,y_1) + \K_2^{*,N} W_1(\e[\,\cdot\,|\,\bx_{\{2:N\}}],\e[\,\cdot\,|\,\by_{\{2:N\}}])
$$
\end{itemize}
Note that (P2) and Lemma~\ref{result1} imply that for any $y \in \sX, \by_{\{2:N\}}, \bz_{\{2:N\}} \in \sX^{N-1}$, we have
\begin{align}\label{aux2}
|V^*(y,\by_{\{2:N\}}) -V^*(y,\bz_{\{2:N\}})| \leq \K_2^{*,N} d_{\av}(\by_{\{2:N\}},\bz_{\{2:N\}})
\end{align}
Recall also the following definition for any $M\geq1$: $\P_{M}(\sX):=\e[\,\cdot\,|\,\sX^M]$; that is, $\P_{M}(\sX)$ is the image of $\e$ into $\P(\sX)$. Now, we define the following set
$$
\sS^N := \left\{(\mu,x) \in \P_N(\sX)\times\sX: \frac{N}{N-1} \mu - \frac{1}{N-1} \delta_x \in \P_{N-1}(\sX) \right\}
$$
Note that for any policy $\pi \in \Pi$, initial state $\bx$, and $t\geq0$, $\text{MDP}_{N}$ satisfies the following
$$
(\e[\,\cdot\,|\,\bx(t)],x_1(t)) \in \sS^N 
$$
that is; $\sS^N$ is the reachability set of the pair  $(\e[\,\cdot\,|\,\bx(t)],x_1(t))$ for any $t\geq0$.

Recall the optimality equation for $\text{MDP}_{N}$
\small
\begin{align*}
V^*(\bx) = \min_{a \in \sA} \left[ c(x_1,a,\e[\,\cdot\,|\,\bx]) + \beta \int_{\sX^N} V^*(\by) \, \prod_{i=2}^N p(dy_i|x_i,\pi^*(x_i),\e[\,\cdot\,|\,\bx]) \bigotimes p(dy_1|x_1,a,\e[\,\cdot\,|\,\bx]) \right]
\end{align*}
\normalsize
It is known that the best-response policy, denoted as $\gamma_{\opt}^N:\sX^N\rightarrow\sA$, is the minimizer of the above optimality equation for all $\bx \in \sX^N$. We define $T^N:\sX\times\sS^N\rightarrow [0,\infty)$ as follows
\begin{align*}
T^N(y,\mu,x) := \int_{\sX^{N-1}} V^*(y,\by_{\{2:N\}}) \, \prod_{i=2}^N p(dy_i|x_i,\pi^*(x_i),\mu)
\end{align*}
where 
\small
$$(x_2,\ldots,x_N) \in \e^{-1}\left(\frac{N}{N-1}\mu-\frac{1}{N-1}\delta_x\right):=\left\{\bx_{\{2:N\}} \in \sX^{N-1}: \e[\,\cdot\,|\,\bx_{\{2:N\}}] = \frac{N}{N-1} \mu -\frac{1}{N-1}\delta_x\right\}
$$
\normalsize
Since $V^*$ is symmetric in $(x_2,\ldots,x_N)$, we have 
$$
\int_{\sX^{N-1}} V^*(y,\by_{\{2:N\}}) \, \prod_{i=2}^N p(dy_i|x_i,\pi^*(x_i),\mu) = \int_{\sX^{N-1}} V^*(y,\by_{\{2:N\}}) \, \prod_{i=2}^N p(dy_i|\hx_i,\pi^*(\hx_i),\mu)
$$for any 
$$\bx_{\{2:N\}},\hat{\bx}_{\{2:N\}} \in \e^{-1}\left(\frac{N}{N-1}\mu-\frac{1}{N-1}\delta_x\right)$$ 
as these two vectors are related via some permutation; that is, for some $\sigma \in \S_{\{2:N\}}$, we have $\bx_{\{2:N\}}=\hat{\bx}_{\sigma(2:N)}$. Hence the definition of $T^N$ does not depend on a particular choice of the vector $(x_2,\ldots,x_N) \in \e^{-1}\left(\frac{N}{N-1}\mu-\frac{1}{N-1}\delta_x\right)$. Therefore, $T^N$ is well-defined. In view of this definition, one can re-write the right side of the optimality equation as follows
\begin{align*}
\min_{a \in \sA} \left[ c(x,a,\mu) + \beta \int_{\sX} T^N(y,\mu,x) \, p(dy|x,a,\mu) \right]
\end{align*}
for any $(\mu,x) \in \sS^N$; that is, the right side of the optimality equation depends only on mean-field term and the state variable of agent~$1$. Since $\gamma_{\opt}^N$ is the minimizer of this re-written optimality equation, we can immediately conclude that $\gamma_{\opt}^N$ depends only on $(\mu,x) \in \sS^N$. Let us state this as a separate lemma. 

\begin{lemma}\label{lemma4}
There exists an optimal policy $\gamma_{\opt}^N$ of $\text{MDP}_{N}$ that is a function of the mean-field term $\e[\,\cdot\,|\,\bx(t)]$ and the state $x_1(t)$ of agent~$1$; that is, $\gamma_{\opt}^N:\sS^N\rightarrow\sA$.
\end{lemma}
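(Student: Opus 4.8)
The plan is to make rigorous the three observations already sketched in the paragraphs preceding the statement. \emph{First}, I would show that the right-hand side of the optimality equation for $\text{MDP}_N$, evaluated at a state $\bx\in\sX^N$, depends on $\bx$ only through the pair $(\e[\,\cdot\,|\,\bx],x_1)$. Fix $\bx$, set $\mu:=\e[\,\cdot\,|\,\bx]$, and integrate the $y_1$-coordinate last in the $N$-fold product integral; the bracketed expression becomes
\[
c(x_1,a,\mu)+\beta\int_{\sX}\Big(\int_{\sX^{N-1}}V^*(y_1,\by_{\{2:N\}})\prod_{i=2}^N p(dy_i|x_i,\pi^*(x_i),\mu)\Big)p(dy_1|x_1,a,\mu).
\]
Since $V^*\in C_{\sym}(\sX^N)$ (established right after Lemma~\ref{lemma2}), the inner $(N-1)$-fold integral is invariant under every permutation $\sigma\in\S_{\{2:N\}}$ of $(x_2,\dots,x_N)$ and hence depends on $(x_2,\dots,x_N)$ only through $\e[\,\cdot\,|\,\bx_{\{2:N\}}]=\tfrac{N}{N-1}\mu-\tfrac{1}{N-1}\delta_{x_1}$; by definition it equals $T^N(y_1,\mu,x_1)$. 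Thus the right-hand side of the optimality equation equals $\min_{a\in\sA}H^N(\mu,x_1,a)$, where $H^N(\mu,x,a):=c(x,a,\mu)+\beta\int_{\sX}T^N(y,\mu,x)\,p(dy|x,a,\mu)$, and since every $\bx\in\sX^N$ is carried into $\sS^N$ by the map $\bx\mapsto(\e[\,\cdot\,|\,\bx],x_1)$, the optimality equation indeed depends on $\bx$ only through this pair.

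\emph{Second}, I would extract a measurable minimizer. For each $(\mu,x)\in\sS^N$ the map $a\mapsto H^N(\mu,x,a)$ is continuous on the compact set $\sA$ — by Lipschitz continuity of $c$ in $a$, weak continuity of $p(\,\cdot\,|x,a,\mu)$ in $a$, and continuity and boundedness of $T^N(\,\cdot\,,\mu,x)$ — so the minimum is attained; once $H^N$ is seen to be jointly Borel in $(\mu,x)$, a measurable selection theorem of Kuratowski--Ryll-Nardzewski / Berge type (the same machinery underlying the existence of measurable minimizers in discounted-cost MDPs) yields a Borel-measurable map $\gamma_{\opt}^N:\sS^N\rightarrow\sA$ with $H^N(\mu,x,\gamma_{\opt}^N(\mu,x))=\min_{a\in\sA}H^N(\mu,x,a)$ for all $(\mu,x)\in\sS^N$; note that no strong-convexity argument is needed here, unlike in Lemma~\ref{lemma1}. \emph{Third}, composing $\gamma_{\opt}^N$ with the continuous surjection $\bx\mapsto(\e[\,\cdot\,|\,\bx],x_1)$ produces a Borel-measurable deterministic stationary policy on $\sX^N$ that attains the minimum in the optimality equation for every $\bx$; by the standard verification theorem for $\beta$-discounted MDPs, any such pointwise minimizer is an optimal stationary policy for $\text{MDP}_N$. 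This yields an optimal policy that is a function of $(\e[\,\cdot\,|\,\bx(t)],x_1(t))$ alone, as claimed.

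The main obstacle is the measurability bookkeeping in the second step — precisely, the joint Borel measurability of $H^N$, equivalently of $T^N$, on the non-standard domain $\sS^N\subset\P(\sX)\times\sX$. Evaluating $T^N(y,\mu,x)$ requires choosing some representative $(x_2,\dots,x_N)\in\e^{-1}\!\big(\tfrac{N}{N-1}\mu-\tfrac{1}{N-1}\delta_x\big)$; the \emph{value} is independent of this choice (this is exactly the symmetry of $V^*$, already proved), but the resulting function of $(\mu,x)$ must still be Borel. I would settle this by fixing, once and for all, a Borel right inverse of the continuous surjection $\e:\sX^{N-1}\rightarrow\P_{N-1}(\sX)$, whose existence follows from the Jankov--von Neumann selection theorem since both spaces are Polish, and then noting that $T^N$ is a composition of this Borel selection with the continuous maps $\pi^*$, $p$, and $V^*$, hence Borel; measurability of $\sS^N$ itself is routine, since $\P_{N-1}(\sX)$ and $\P_N(\sX)$ are closed (hence Borel) subsets of $\P(\sX)$ and the relation defining $\sS^N$ is continuous in $(\mu,x)$.
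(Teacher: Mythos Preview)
Your proposal is correct and follows essentially the same approach as the paper: the paper's entire argument is the paragraph immediately preceding the lemma, which rewrites the right-hand side of the optimality equation via $T^N$ using the symmetry of $V^*$ and then simply declares that the minimizer depends only on $(\mu,x)\in\sS^N$. You supply the measurability bookkeeping that the paper omits; the paper effectively sidesteps this a few lines later by invoking the strong convexity in assumption~(d), which makes the argmin single-valued and hence continuous.
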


To establish Lipschitz continuity of $\gamma_{\opt}^N$ via assumption (d), we need to first prove that $T^N$ is Lipschitz continuous. 

\begin{lemma}\label{lemmaa-new}
$T^N:\sX\times\sS^N\rightarrow[0,\infty)$ satisfies the following Lipschitz bound
$$
|T^N(y,\mu,x)-T^N(z,\nu,q)| \leq \R_1^N d_{\sX}(y,z) + \R_2^N W_1(\mu,\nu) + \R_3^N d_{\sX}(x,q)
$$
where 
$$
\R_1^N := \K_1^{*,N}, \,\, \R_2^N:=\left\{\K_2^{*,N} \K_3 + \K_2^{*,N}(\K_1+\K_2\L^*) \frac{N}{N-1}\right\}, \,\, \R_3^N:=\frac{\K_2^{*,N}(\K_1+\K_2\L^*)}{N-1}
$$
\end{lemma}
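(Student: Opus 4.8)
The plan is to estimate $|T^N(y,\mu,x)-T^N(z,\nu,q)|$ by inserting the intermediate quantity $T^N(z,\mu,x)$, thereby separating the change of the first coordinate from the joint change of the pair $(\mu,x)$. First I would fix, once and for all, an arbitrary representative $(x_2,\ldots,x_N)\in\e^{-1}\big(\tfrac{N}{N-1}\mu-\tfrac{1}{N-1}\delta_x\big)$; since $V^*$ is symmetric in $(x_2,\ldots,x_N)$, the value $T^N(\,\cdot\,,\mu,x)$ is independent of this choice. For the first piece, property (P2) of $V^*$ applied with a common tail $\by_{\{2:N\}}$ gives $|V^*(y,\by_{\{2:N\}})-V^*(z,\by_{\{2:N\}})|\le\K_1^{*,N}\,d_{\sX}(y,z)$, and integrating against $\bigotimes_{i=2}^N p(\,\cdot\,|\,x_i,\pi^*(x_i),\mu)$ yields $|T^N(y,\mu,x)-T^N(z,\mu,x)|\le\K_1^{*,N}\,d_{\sX}(y,z)=\R_1^N\,d_{\sX}(y,z)$.

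The substantive part is bounding $|T^N(z,\mu,x)-T^N(z,\nu,q)|$. By inequality (\ref{aux2}), the map $\by_{\{2:N\}}\mapsto V^*(z,\by_{\{2:N\}})$ is $\K_2^{*,N}$-Lipschitz with respect to $d_{\av}$ on $\sX^{N-1}$; writing both $T^N$ values as integrals of this common function against the corresponding product kernels, the dual representation of $W_1$ gives, for any representative $(q_2,\ldots,q_N)\in\e^{-1}\big(\tfrac{N}{N-1}\nu-\tfrac{1}{N-1}\delta_q\big)$,
\begin{align*}
|T^N(z,\mu,x)-T^N(z,\nu,q)|\le\K_2^{*,N}\,W_1\Big(\textstyle\bigotimes_{i=2}^N p(\,\cdot\,|\,x_i,\pi^*(x_i),\mu),\ \bigotimes_{i=2}^N p(\,\cdot\,|\,q_i,\pi^*(q_i),\nu)\Big).
\end{align*}
Applying Lemma~\ref{result2} (with $M=N-1$), and then Lipschitz continuity of $p$ (assumption~(c)) together with the $\L^*$-Lipschitz continuity of $\pi^*$ (Lemma~\ref{lemma1}) to each marginal $W_1$, this is bounded by
\begin{align*}
\frac{\K_2^{*,N}}{N-1}\sum_{i=2}^N\Big[(\K_1+\K_2\L^*)\,d_{\sX}(x_i,q_i)+\K_3\,W_1(\mu,\nu)\Big].
\end{align*}

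Since the left-hand side does not depend on the labeling of $(q_2,\ldots,q_N)$, I would keep $(x_2,\ldots,x_N)$ frozen and minimize the right-hand side over relabelings of the $q$-tuple; by Lemma~\ref{result3} this replaces $\tfrac{1}{N-1}\sum_{i=2}^N d_{\sX}(x_i,q_i)$ by $W_1(\e[\,\cdot\,|\,\bx_{\{2:N\}}],\e[\,\cdot\,|\,(q_2,\ldots,q_N)])$, which by the defining constraint of $\sS^N$ equals $W_1\big(\tfrac{N}{N-1}\mu-\tfrac{1}{N-1}\delta_x,\ \tfrac{N}{N-1}\nu-\tfrac{1}{N-1}\delta_q\big)$. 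The final ingredient is the elementary bound
\begin{align*}
W_1\Big(\tfrac{N}{N-1}\mu-\tfrac{1}{N-1}\delta_x,\ \tfrac{N}{N-1}\nu-\tfrac{1}{N-1}\delta_q\Big)\le\tfrac{N}{N-1}\,W_1(\mu,\nu)+\tfrac{1}{N-1}\,d_{\sX}(x,q),
\end{align*}
which follows from the dual representation: for every $1$-Lipschitz $g$, the integral of $g$ against the difference of the two shifted measures equals $\tfrac{N}{N-1}\big(\int g\,d\mu-\int g\,d\nu\big)-\tfrac{1}{N-1}\big(g(x)-g(q)\big)$, and each term is controlled separately. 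Substituting back and collecting the coefficients of $W_1(\mu,\nu)$ and of $d_{\sX}(x,q)$ reproduces $\R_2^N$ and $\R_3^N$; adding the first piece finishes the argument.

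The step I expect to require the most care is the passage from the raw sum $\tfrac{1}{N-1}\sum_{i=2}^N d_{\sX}(x_i,q_i)$ to $W_1$ of the shifted empirical measures: one must verify that $T^N$ is genuinely independent of the choice of representative on \emph{both} sides, so that the labeling of the $q$-tuple may be optimized while the $x$-tuple is held fixed, and then carry the $\tfrac{N}{N-1}$ and $\tfrac{1}{N-1}$ factors correctly through the re-expression of $W_1$ of the shifted measures in terms of $W_1(\mu,\nu)$ and $d_{\sX}(x,q)$. Everything else is a routine assembly of Lipschitz estimates already available in the excerpt.
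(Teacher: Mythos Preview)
Your proposal is correct and follows essentially the same route as the paper: split off the $y\to z$ change via (P2), control the remaining product-kernel difference by (\ref{aux2}) and Lemma~\ref{result2}, apply the Lipschitz bounds on $p$ and $\pi^*$, and then optimize over relabelings of the second representative to pass to a Wasserstein distance of empirical measures.

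The only minor deviation is in the very last step. You bound $W_1\bigl(\e[\,\cdot\,|\,\bx_{\{2:N\}}],\e[\,\cdot\,|\,\bz_{\{2:N\}}]\bigr)$ directly via the dual representation applied to the shifted measures $\tfrac{N}{N-1}\mu-\tfrac{1}{N-1}\delta_x$ and $\tfrac{N}{N-1}\nu-\tfrac{1}{N-1}\delta_q$. The paper instead invokes Lemma~\ref{result5} to rewrite $\inf_{\sigma}\tfrac{1}{N-1}\sum_{i=2}^N d_{\sX}(x_i,z_{\sigma(i)})$ as $\tfrac{N}{N-1}\,W_1\bigl(\e[\,\cdot\,|\,x,\bx_{\{2:N\}}],\e[\,\cdot\,|\,x,\bz_{\{2:N\}}]\bigr)$ (augmenting both tails by the \emph{same} point $x$), and then uses the triangle inequality through $\e[\,\cdot\,|\,q,\bz_{\{2:N\}}]=\nu$ together with Lemma~\ref{result1} to pick up the $\tfrac{1}{N-1}d_{\sX}(x,q)$ term. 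Both arguments yield the identical constants $\R_2^N$ and $\R_3^N$; your dual-representation shortcut is arguably cleaner and avoids the auxiliary Lemma~\ref{result5}.
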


\begin{proof}
Pick any two triples $(y,\mu,x), (z,\nu,q) \in \sX\times\sS^N$. Let $$(x_2,\ldots,x_N) \in \e^{-1}\left(\frac{N}{N-1}\mu-\frac{1}{N-1}\delta_x\right)$$ and 
$(z_2,\ldots,z_N) \in \e^{-1}\left(\frac{N}{N-1}\nu-\frac{1}{N-1}\delta_q\right)$.
Then we have 
\small
\begin{align}\label{eq8}
&|T^N(y,\mu,x)-T^N(z,\nu,q)| \nonumber \\
&= \left|\int_{\sX^{N-1}} V^*(y,\by_{\{2:N\}}) \, \prod_{i=2}^N p(dy_i|x_i,\pi^*(x_i),\mu)-\int_{\sX^{N-1}} V^*(z,\by_{\{2:N\}}) \, \prod_{i=2}^N p(dy_i|z_i,\pi^*(z_i),\nu)\right| \nonumber \\
&\leq \left|\int_{\sX^{N-1}} V^*(y,\by_{\{2:N\}}) \, \prod_{i=2}^N p(dy_i|x_i,\pi^*(x_i),\mu)-\int_{\sX^{N-1}} V^*(z,\by_{\{2:N\}}) \, \prod_{i=2}^N p(dy_i|x_i,\pi^*(x_i),\mu)\right| \nonumber \\
&+\left|\int_{\sX^{N-1}} V^*(z,\by_{\{2:N\}}) \, \prod_{i=2}^N p(dy_i|x_i,\pi^*(x_i),\mu)-\int_{\sX^{N-1}} V^*(z,\by_{\{2:N\}}) \, \prod_{i=2}^N p(dy_i|z_i,\pi^*(z_i),\nu)\right| \nonumber\\
&\leq \K_1^{*,N} d_{\sX}(y,z) + \K_2^{*,N} W_1\left(\prod_{i=2}^N p(\cdot|x_i,\pi^*(x_i),\mu),\prod_{i=2}^N p(\cdot|z_i,\pi^*(z_i),\nu)\right) \,\, \text{(by (P2) and (\ref{aux2}))} \nonumber \\
&\leq \K_1^{*,N} d_{\sX}(y,z) + \frac{\K_2^{*,N}}{N-1} \sum_{i=2}^N W_1\left(p(\cdot|x_i,\pi^*(x_i),\mu),p(\cdot|z_i,\pi^*(z_i),\nu)\right) \,\, \text{(by Lemma~\ref{result2})} \nonumber \\
&\leq \K_1^{*,N} d_{\sX}(y,z) + \frac{\K_2^{*,N}}{N-1} \sum_{i=2}^N \left\{\K_1 d_{\sX}(x_i,z_i)+\K_2\L^* d_{\sX}(x_i,z_i)+ \K_3 W_1(\mu,\nu) \right\} \nonumber \\
&= \K_1^{*,N} d_{\sX}(y,z) + \K_2^{*,N} \K_3 W_1(\mu,\nu)+\frac{\K_2^{*,N}(\K_1+\K_2\L^*)}{N-1} \sum_{i=2}^N d_{\sX}(x_i,z_i)
\end{align}
\normalsize
The inequality (\ref{eq8}) is true for any $(x_2,\ldots,x_N) \in \e^{-1}(\frac{N}{N-1}\mu-\frac{1}{N-1}\delta_x)$ and $(z_2,\ldots,z_N) \in \e^{-1}(\frac{N}{N-1}\nu-\frac{1}{N-1}\delta_q)$. Hence, it is still true if we permute $(z_2,\ldots,z_N)$ using any $\sigma \in \S_{\{2:N\}}$. This implies that 
\small
\begin{align}
&|T^N(y,\mu,x)-T^N(z,\nu,q)| \nonumber \\
&\leq \K_1^{*,N} d_{\sX}(y,z) + \K_2^{*,N} \K_3 W_1(\mu,\nu)+\K_2^{*,N}(\K_1+\K_2\L^*) \inf_{\sigma \in \S_{\{2:N\}}} \frac{1}{N-1} \sum_{i=2}^N d_{\sX}(x_i,z_{\sigma(i)}) \nonumber \\
&= \K_1^{*,N} d_{\sX}(y,z) + \K_2^{*,N} \K_3 W_1(\mu,\nu) \nonumber \\
&+\K_2^{*,N}(\K_1+\K_2\L^*) \frac{N}{N-1} W_1(\e[\,\cdot\,|\,x,\bx_{\{2:N\}}],\e[\,\cdot\,|\,x,\bz_{\{2:N\}}]) \,\, \text{(by Lemma~\ref{result5})} \nonumber \\
&\leq \K_1^{*,N} d_{\sX}(y,z) +\K_2^{*,N} \K_3 W_1(\mu,\nu) \nonumber \\
&+\K_2^{*,N}(\K_1+\K_2\L^*) \frac{N}{N-1} \left\{ W_1(\e[\,\cdot\,|\,x,\bx_{\{2:N\}}],\e[\,\cdot\,|\,q,\bz_{\{2:N\}}]) +W_1(\e[\,\cdot\,|\,q,\bz_{\{2:N\}}],\e[\,\cdot\,|\,x,\bz_{\{2:N\}}])  \right\} \nonumber \\
&\leq \K_1^{*,N} d_{\sX}(y,z) +\K_2^{*,N} \K_3 W_1(\mu,\nu) +\K_2^{*,N}(\K_1+\K_2\L^*) \frac{N}{N-1} \left\{ W_1(\mu,\nu) + \frac{1}{N} d_{\sX}(x,q) \right\} \nonumber \\
&\text{(since $\e[\,\cdot\,|\,x,\bx_{\{2:N\}}] = \mu$ and $e[\,\cdot\,|\,q,\bz_{\{2:N\}}]=\nu$)} \nonumber \\
&= \K_1^{*,N} d_{\sX}(y,z) + \left\{\K_2^{*,N} \K_3 + \K_2^{*,N}(\K_1+\K_2\L^*) \frac{N}{N-1}\right\} W_1(\mu,\nu) +  \frac{\K_2^{*,N}(\K_1+\K_2\L^*)}{N-1} d_{\sX}(x,q) \nonumber 
\end{align}
\normalsize
This completes the proof. 
\end{proof}

Recall that $\sS^N$ is the set of all reachable points by $(\e[\,\cdot\,|\,\bx(t)],x_1(t))$  for any $t\geq0$ under any policy $\pi \in \Pi$. Since $\gamma_{\opt}^N$ depends only on $(\e[\,\cdot\,|\,\bx(t)],x_1(t))$, it is sufficient to define $\gamma_{\opt}^N$ only on $\sS^N$ for the $N$-agent game problem. However, for the approximation analysis in the sequel, we need to extend the definition of  $\gamma_{\opt}^N$ to the whole $\P(\sX) \times \sX$. To do this, we first extend the definition of $T^N:\sX\times\sS^N\rightarrow[0,\infty)$ to $\sX\times\P(\sX)\times\sX$ as follows
\begin{align*}
H^N(y,\mu,x) := \inf_{(z,\nu,r) \in \sX\times\sS^N} \left\{T^N(z,\nu,r) + \R_1^{N} d_{\sX}(y,z) + \R_2^N W_1(\mu,\nu) + \R_3^N d_{\sX}(x,r) \right\}
\end{align*}
One can prove that $H^N = T^N$ on $\sX\times\sS^N$ and $H^N$ is $(\R_1^N,\R_2^N,\R_3^N)$-Lipschitz continuous on $\sX\times\P(\sX)\times\sX$. These properties can be established easily, and so, we omit the details. Now, for any $(\mu,x) \in \P(\sX)\times\sX$, let us define the following policy
\begin{align*}
\gamma^{*,N}(\mu,x) := \argmin_{a \in \sA} \left[ c(x,a,\mu) + \beta \int_{\sX} H^N(y,\mu,x) \, p(dy|x,a,\mu) \right]
\end{align*}
Since $H^N = T^N$ on $\sX\times\sS^N$, we have $\gamma^{*,N}=\gamma_{\opt}^N$ on $\sS^N$. Hence, without loss of generality, we can take $\gamma^{*,N} \in \Pi_{\e}$ as the best response policy in place of $\gamma_{\opt}^N$ as they have the same behavior on the reachable set $\sS^N$, and so, have the same cost function in response to MFE policy $\pi^*$.

Now, we define 
$$
F_N(x,a,\mu) := c(x,a,\mu) + \beta \int_{\sX} H^N(y,\mu,x) \, p(dy|x,a,\mu)
$$
Since  $H^N$ is $(\R_1^N,\R_2^N,\R_3^N)$-Lipschitz continuous, by assumption (d), $F_N$ has the following properties
\begin{itemize}
\item[(F1)] For any $(x,\mu) \in \sX\times\P(\sX)$, $F_N(x,\cdot,\mu)$ is $\rho$-strongly convex in $a$.
\item[(F2)] For any $a \in \sA$, $\nabla_a F_N(x,a,\mu)$ is $\rL(\R_1^N,\R_2^N,\R_3^N)$-Lipschitz continuous in $(x,\mu)$. 
\end{itemize}
In view of these properties, we now show that best response policy $\gamma^{*,N}$
is Lipschitz continuous. 

\begin{lemma}\label{lemma6}
The best response policy $\gamma^{*,N}:\P(\sX)\times\sX\rightarrow\sA \in \Pi_{\e}$ is $(\L^{*,N}_1,\L^{*,N}_2)$-Lipschitz continuous where 
$$
(\L^{*,N}_1,\L^{*,N}_2):= \frac{2\rL(\R_1^N,\R_2^N,\R_3^N)}{\rho}=\left(\frac{2\rL_1(\R_1^N,\R_2^N,\R_3^N)}{\rho},\frac{2\rL_2(\R_1^N,\R_2^N,\R_3^N)}{\rho}\right)
$$
\end{lemma}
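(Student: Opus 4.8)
The plan is to run the argument of Lemma~\ref{lemma1} essentially verbatim, now in the two-variable setting $(\mu,x)\in\P(\sX)\times\sX$, using properties (F1) and (F2) of $F_N$ in place of the strong convexity and gradient-Lipschitz estimates used there. First, note that (F1) guarantees that for each $(\mu,x)$ the map $a\mapsto F_N(x,a,\mu)$ has a unique minimizer over the convex compact set $\sA$ (assumption (a)), so $\gamma^{*,N}$ is indeed a well-defined single-valued map, and the first-order optimality condition $\langle\nabla_a F_N(x,\gamma^{*,N}(\mu,x),\mu),a-\gamma^{*,N}(\mu,x)\rangle\ge 0$ holds for all $a\in\sA$ (this uses convexity of $\sA$ again).

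Next, fix two pairs $(\mu,x),(\nu,z)\in\P(\sX)\times\sX$ and write $a^*:=\gamma^{*,N}(\mu,x)$, $b^*:=\gamma^{*,N}(\nu,z)$. By $\rho$-strong convexity of $F_N(x,\cdot,\mu)$,
\begin{align*}
0 &\ge F_N(x,a^*,\mu)-F_N(x,b^*,\mu) \ge \langle \nabla_a F_N(x,b^*,\mu), a^*-b^*\rangle + \frac{\rho}{2}\|a^*-b^*\|^2 ,
\end{align*}
hence $\|a^*-b^*\|^2\le \frac{2}{\rho}\langle -\nabla_a F_N(x,b^*,\mu), a^*-b^*\rangle$. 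Applying the first-order optimality inequality at $(\nu,z)$ with $a=a^*$ gives $\langle \nabla_a F_N(z,b^*,\nu), a^*-b^*\rangle\ge 0$, and adding this to the previous bound yields
\begin{align*}
\|a^*-b^*\|^2 &\le \frac{2}{\rho}\langle \nabla_a F_N(z,b^*,\nu)-\nabla_a F_N(x,b^*,\mu), a^*-b^*\rangle \\
&\le \frac{2}{\rho}\,\|\nabla_a F_N(z,b^*,\nu)-\nabla_a F_N(x,b^*,\mu)\|\,\|a^*-b^*\| \\
&\le \frac{2}{\rho}\Big(\rL_1(\R_1^N,\R_2^N,\R_3^N)\,d_{\sX}(x,z) + \rL_2(\R_1^N,\R_2^N,\R_3^N)\,W_1(\mu,\nu)\Big)\|a^*-b^*\| ,
\end{align*}
where the last step is exactly (F2). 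Dividing by $\|a^*-b^*\|$ (the case $a^*=b^*$ being trivial) gives the claimed bound $\|a^*-b^*\|\le \L_1^{*,N}\,d_{\sX}(x,z)+\L_2^{*,N}\,W_1(\mu,\nu)$ with $(\L_1^{*,N},\L_2^{*,N})=\frac{2\rL(\R_1^N,\R_2^N,\R_3^N)}{\rho}$.

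There is no serious obstacle here: the argument is a direct transcription of Lemma~\ref{lemma1} with the Lipschitz constants of $H^N$ (from Lemma~\ref{lemmaa-new}) fed through assumption (d). The only points requiring a word of care are the appeal to (F1) for well-posedness of the $\argmin$ and the validity of the variational inequality on $\sA$ (both handled by assumptions (a) and (d)), and the harmless separate treatment of the degenerate case where the two minimizers coincide.
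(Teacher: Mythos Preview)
Your proof is correct and follows essentially the same argument as the paper's own proof: both transcribe the strong-convexity/first-order-optimality trick from Lemma~\ref{lemma1} to the two-variable setting, invoking (F1) for the strong convexity inequality and (F2) for the gradient Lipschitz bound. The only cosmetic differences are your use of shorthand $a^*,b^*$ and your explicit remark about well-posedness of the $\argmin$ and the degenerate case $a^*=b^*$.
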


\begin{proof}
The proof is very similar to the proof of Lemma~\ref{lemma1}. Fix any
$(\mu,x),(\nu,z) \in \P(\sX)\times\sX$. Then, by strong convexity property (F1), we have 
\begin{align*}
0 &\geq F_N(x,\gamma^{*,N}(x,\mu),\mu) - F_N(x,\gamma^{*,N}(z,\nu),\mu) \\
&\geq \langle \nabla_a F_N(x,\gamma^{*,N}(z,\nu),\mu), \gamma^{*,N}(x,\mu)-\gamma^{*,N}(z,\nu)\rangle +\frac{\rho}{2} \|\gamma^{*,N}(z,\nu)-\gamma^{*,N}(x,\mu)\|^2
\end{align*}
Hence
$$
\|\gamma^{*,N}(z,\nu)-\gamma^{*,N}(x,\mu)\|^2 \leq \frac{2}{\rho} \langle -\nabla_a F_N(x,\gamma^{*,N}(z,\nu),\mu), \gamma^{*,N}(x,\mu)-\gamma^{*,N}(z,\nu)\rangle
$$
By first order optimality condition we also have 
$$
\langle \nabla_a F_N(z,\gamma^{*,N}(z,\nu),\nu), a-\gamma^{*,N}(z,\nu)\rangle \geq 0 
$$
for all $a \in \sA$. Therefore, we can write the following 
\begin{align*}
&\|\gamma^{*,N}(z,\nu)-\gamma^{*,N}(x,\mu)\|^2 \nonumber \\
&\leq  \frac{2}{\rho} \langle  \nabla_a F_N(z,\gamma^{*,N}(z,\nu),\nu)-\nabla_a F_N(x,\gamma^{*,N}(z,\nu),\mu), \gamma^{*,N}(x,\mu)-\gamma^{*,N}(z,\nu)\rangle \\
&\leq \frac{2}{\rho} \|\nabla_a F_N(z,\gamma^{*,N}(z,\nu),\nu)-\nabla_a F_N(x,\gamma^{*,N}(z,\nu),\mu)\| \|\gamma^{*,N}(z,\nu)-\gamma^{*,N}(x,\mu)\| \\
&\leq \left\{\frac{2\rL_1(\R_1^N,\R_2^N,\R_3^N)}{\rho} d_{\sX}(x,z) + \frac{2\rL_2(\R_1^N,\R_2^N,\R_3^N)}{\rho} W_1(\mu,\nu) \right\} \, \|\gamma^{*,N}(z,\nu)-\gamma^{*,N}(x,\mu)\| \\
&\text{(by property (F2))}
\end{align*}
This implies that 
\begin{align*}
\|\gamma^{*,N}(z,\nu)-\gamma^{*,N}(x,\mu)\| &\leq \left\{\frac{2\rL_1(\R_1^N,\R_2^N,\R_3^N)}{\rho} d_{\sX}(x,z) + \frac{2\rL_2(\R_1^N,\R_2^N,\R_3^N)}{\rho} W_1(\mu,\nu) \right\} \\
&=: \L^{*,N}_1 d_{\sX}(x,z) + \L^{*,N}_2 W_1(\mu,\nu)
\end{align*}
for all $(\mu,x),(\nu,z) \in \P(\sX)\times\sX$, which completes the proof.  
\end{proof}

Until now, we proved that MFE policy $\pi^*:\sX\rightarrow\sA$ is $\L^*$-Lipschitz continuous and the corresponding best-response policy to MFE policy $\gamma^{*,N}:\P(\sX)\times\sX\rightarrow\sA$ is $(\L_1^{*,N},\L_2^{*,N})$-Lipschitz continuous for any $N$ that satisfies assumption (f). Now, using these results, we establish that the joint policy $(\pi^*,\ldots,\pi^*)$ is an approximate Nash equilibrium for the $N$-agent linear mean-field game.

\section{Approximate Equilibrium in $N$-agent Linear MFGs}\label{sec3}

In this section we prove that the joint policy $(\pi^*,\ldots,\pi^*)$ is approximately Nash equilibrium for the finite agent games. To this end, we prove a series of results.  

In the remainder of this section, $\{(\bx(t),\ba(t))\}_{t\geq0}$ denotes the state-action vectors under the joint policy $\bpi^*:=(\pi^*,\ldots,\pi^*)$ and initial distribution $\mu^{*,\otimes^N}$ for $N$-agent linear mean-field game. Similarly, $\{(x(t),a(t))\}_{t\geq0}$ denotes the state-action pairs under the policy $\pi^*$ and initial distribution $\mu^*$ of a generic agent in the infinite population limit.

For $t=0$, $(x_1(0),\ldots,x_N(0)) \sim \mu^{*,\otimes^N}$, and therefore, there exists, by Glivenko-Cantelli's theorem, a function $\alpha(N)$ such that 
\begin{align} \label{LLN}
\rE[W_1(\e[\,\cdot\,|\,\bx(0)],\mu^*)] \leq \alpha(N)
\end{align}
and $\alpha(N) \rightarrow 0$ as $N\rightarrow\infty$. Indeed, if $\sX$ is a compact subset of some finite dimensional Euclidean space, then we have an explicit expression for $\alpha(N)$ which is due to \cite[Theorem 1]{FoGu15}: for any $q  > 1$, there exists a constant $C$ depending on $q$ and $\dim(\sX)$ such that 
\begin{align*}
\alpha(N):= C \, \left(\int_{\sX} \|x\|^q \, \mu^*(dx) \right)^{\frac{1}{q}} \, \begin{cases} 
\frac{1}{\sqrt{N}} + \frac{\sqrt[q]{N}}{N} & \text{if} \,\, \dim(\sA) < 2 \\
\frac{\log(1+N)}{\sqrt{N}} + \frac{\sqrt[q]{N}}{N} & \text{if} \,\, \dim(\sA) = 2 \\
\frac{1}{\sqrt[d]{N}} + \frac{\sqrt[q]{N}}{N} & \text{if} \,\, \dim(\sA) > 2 
\end{cases}
\end{align*}

The first result is about the convergence of the mean-field term $\e[\,\cdot\,|\,\bx(t)]$ to $\mu^*$ under mean-field equilibrium policy $(\pi^*,\ldots,\pi^*)$.

\begin{lemma}\label{lemma7}
Given any $\varepsilon>0$, for all $t\geq0$, we have
\begin{align*}
\rE\left[\sup_{g \in \F} \left| \int_{\sX} g(x) \, \e[dx|\,\bx(t)] - \int_{\sX} g(x) \, \mu^*(dx) \right| \right] = \rE[W_1(\e[\,\cdot\,|\,\bx(t)],\mu^*)] \leq \alpha_t(N,\varepsilon)
\end{align*}
where error bounds $\{\alpha_t(N,\varepsilon)\}_{t\geq0}$ are recursively defined as follows
\begin{align*}
\alpha_0(N,\varepsilon) &:= \alpha(N) \\
\alpha_{t+1}(N,\varepsilon) &:= (\K_1 + \K_2 \, \L^* + \K_3) \, \alpha_t(N,\varepsilon) + \sqrt{\frac{2}{N}} \diam(\sX) \, \N(\varepsilon,\F) + 2\varepsilon \\
&=: \kappa_1 \, \alpha_t(N,\varepsilon) + \kappa_2 \, \sqrt{\frac{2}{N}} + 2\varepsilon
\end{align*}
Therefore, for any $t\geq1$, we have 
$$
\alpha_t(N,\varepsilon) = \kappa_1^t \, \alpha(N) + \kappa_2 \, \sqrt{\frac{2}{N}} \sum_{i=0}^{t-1} \kappa_1^i  + 2\varepsilon\sum_{i=0}^{t-1} \kappa_1^i 
$$
\end{lemma}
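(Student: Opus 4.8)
The plan is to prove the recursive bound by induction on $t$ (for fixed $\varepsilon$), the base case $t=0$ being exactly (\ref{LLN}). For the inductive step I would fix $t$ and argue conditionally on $\bx(t)$. Since $(\mu^*,\pi^*)$ is a mean-field equilibrium, Lemma~\ref{lemma1} tells us that $\pi^*$ is the \emph{deterministic} $\L^*$-Lipschitz map $\pi_{\mu^*}:\sX\to\sA$; hence under $\bpi^*=(\pi^*,\ldots,\pi^*)$ each next state $x_i(t+1)$ is, conditionally on $\bx(t)$, an independent draw from $p(\,\cdot\,|x_i(t),\pi^*(x_i(t)),\e[\,\cdot\,|\,\bx(t)])$: integrating out $w_i(t)$ gives $\bp(\,\cdot\,|x_i(t),\pi^*(x_i(t)),z_i(t))$, and then integrating out $z_i(t)\sim\e[\,\cdot\,|\,\bx(t)]$ gives the claimed mean-field average. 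I would then introduce the $\bx(t)$-measurable one-step prediction
\[
\bar\mu_{t+1}:=\frac1N\sum_{i=1}^N p(\,\cdot\,|x_i(t),\pi^*(x_i(t)),\e[\,\cdot\,|\,\bx(t)])=\int_{\sX}p(\,\cdot\,|y,\pi^*(y),\e[\,\cdot\,|\,\bx(t)])\,\e[dy|\bx(t)],
\]
which satisfies $\int g\,d\bar\mu_{t+1}=\frac1N\sum_i\rE[g(x_i(t+1))\mid\bx(t)]$ for every bounded measurable $g$, and split via the triangle inequality for $W_1$:
\[
W_1(\e[\,\cdot\,|\,\bx(t+1)],\mu^*)\le W_1(\e[\,\cdot\,|\,\bx(t+1)],\bar\mu_{t+1})+W_1(\bar\mu_{t+1},\mu^*).
\]

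For the second (\textquotedblleft drift\textquotedblright) term I would use the Kantorovich--Rubinstein dual form (\ref{wasserstein}) and the equilibrium identity $\mu^*=\int_{\sX}p(\,\cdot\,|y,\pi^*(y),\mu^*)\,\mu^*(dy)$ (valid because $\mu^*\in\Phi(\pi^*)$). For $g\in\F$, putting $h(y):=\int_{\sX}g\,dp(\,\cdot\,|y,\pi^*(y),\mu^*)$ and adding and subtracting $\int h\,d\e[\,\cdot\,|\,\bx(t)]$ produces two pieces: the first is at most $\K_3\,W_1(\e[\,\cdot\,|\,\bx(t)],\mu^*)$ by Lipschitz continuity (c) of $\bp$ in its mean-field argument (using that $|\int g\,dp(\,\cdot\,|y,a,\nu)-\int g\,dp(\,\cdot\,|y,a,\nu')|\le W_1(p(\,\cdot\,|y,a,\nu),p(\,\cdot\,|y,a,\nu'))$ for $1$-Lipschitz $g$), and the second is at most $(\K_1+\K_2\L^*)\,W_1(\e[\,\cdot\,|\,\bx(t)],\mu^*)$ since $h$ is $(\K_1+\K_2\L^*)$-Lipschitz by (c) and the $\L^*$-Lipschitz continuity of $\pi^*$. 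Taking the supremum over $g\in\F$ gives $W_1(\bar\mu_{t+1},\mu^*)\le\kappa_1\,W_1(\e[\,\cdot\,|\,\bx(t)],\mu^*)$ with $\kappa_1=\K_1+\K_2\L^*+\K_3$, and then expectation and the induction hypothesis bound $\rE[W_1(\bar\mu_{t+1},\mu^*)]$ by $\kappa_1\,\alpha_t(N,\varepsilon)$.

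For the first (\textquotedblleft fluctuation\textquotedblright) term I would discretize $\F$ by a minimal $\varepsilon$-net $\{g_1,\ldots,g_{\N(\varepsilon,\F)}\}$ and note that for every $g\in\F$ and a nearby $g_j$ with $\|g-g_j\|_\infty<\varepsilon$,
\[
\Big|\tfrac1N\sum_i g(x_i(t+1))-\int g\,d\bar\mu_{t+1}\Big|\le\Big|\tfrac1N\sum_i g_j(x_i(t+1))-\int g_j\,d\bar\mu_{t+1}\Big|+2\varepsilon,
\]
so $W_1(\e[\,\cdot\,|\,\bx(t+1)],\bar\mu_{t+1})\le\max_j\big|\tfrac1N\sum_i g_j(x_i(t+1))-\int g_j\,d\bar\mu_{t+1}\big|+2\varepsilon$. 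Folding the action into the transition kernel $(y,z)\mapsto\bp(\,\cdot\,|y,\pi^*(y),z)$, the conditional law of $(x_1(t+1),\ldots,x_N(t+1))$ given $\bx(t)$ is exactly the setting of Lemma~\ref{result6}, which (with all expectations below taken conditionally on $\bx(t)$) gives for each $j$
\[
\rE\Big[\Big|\tfrac1N\sum_i g_j(x_i(t+1))-\int g_j\,d\bar\mu_{t+1}\Big|\Big]^2\le\frac1{N^2}\sum_{i=1}^N\big(\rE[g_j(x_i(t+1))^2]-\rE[g_j(x_i(t+1))]^2\big),
\]
and each summand is at most $\diam(\sX)^2$ because $0\le g_j(x)=g_j(x)-g_j(y_*)\le d_{\sX}(x,y_*)\le\diam(\sX)$. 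Summing the resulting $O(\diam(\sX)/\sqrt N)$ bound over the $\N(\varepsilon,\F)$ net points (union bound for the maximum) yields, absorbing the universal constant into the $\sqrt2$, $\rE[W_1(\e[\,\cdot\,|\,\bx(t+1)],\bar\mu_{t+1})\mid\bx(t)]\le\kappa_2\sqrt{2/N}+2\varepsilon$ with $\kappa_2=\diam(\sX)\,\N(\varepsilon,\F)$. Adding the two terms, using the tower property and the induction hypothesis, gives $\rE[W_1(\e[\,\cdot\,|\,\bx(t+1)],\mu^*)]\le\kappa_1\alpha_t(N,\varepsilon)+\kappa_2\sqrt{2/N}+2\varepsilon=\alpha_{t+1}(N,\varepsilon)$; the closed form for $t\ge1$ then follows by unrolling the affine recursion $\alpha_{t+1}=\kappa_1\alpha_t+(\kappa_2\sqrt{2/N}+2\varepsilon)$ from $\alpha_0=\alpha(N)$. (The first equality in the statement is just (\ref{wasserstein}) applied pointwise in $\omega$ followed by taking $\rE$.)

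The hard part will be the fluctuation term: the $x_i(t+1)$ are not i.i.d.\ but only conditionally independent given $\bx(t)$, since they all use the common random mean-field term $\e[\,\cdot\,|\,\bx(t)]$, and the comparison measure $\bar\mu_{t+1}$ is itself random, so no off-the-shelf empirical-measure law of large numbers applies; one must route the estimate through Lemma~\ref{result6}, whose hypotheses were tailored for precisely this situation, and pair it with the covering-number discretization of $\F$ to handle the supremum in the Wasserstein dual. The drift term, by contrast, is a routine one-step contraction estimate once the equilibrium fixed-point identity, the Lipschitz constants (b)--(c), and Lemma~\ref{lemma1} are in hand.
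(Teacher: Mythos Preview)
Your proposal is correct and follows essentially the same route as the paper: induction on $t$, a three-way split of $\int g\,d\e[\,\cdot\,|\bx(t+1)]-\int g\,d\mu^*$ into a fluctuation piece (your $W_1(\e[\,\cdot\,|\bx(t+1)],\bar\mu_{t+1})$, the paper's (\ref{eqq1})) and two drift pieces (your $h$-decomposition, the paper's (\ref{eqq2})--(\ref{eqq3})), with the fluctuation handled by Lemma~\ref{result6} conditionally on $\bx(t)$ plus the $\varepsilon$-net/union-bound argument, and the drift by the Lipschitz assumptions (c) and Lemma~\ref{lemma1}. The only cosmetic difference is that you invoke the $W_1$ triangle inequality first and then unpack each term, whereas the paper fixes $g\in\F$, splits, and takes the supremum at the end; the ingredients and constants are identical.
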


\begin{proof}
Since $(x_1(0),\ldots,x_N(0)) \sim \mu^{*,\otimes^N}$, the statement is true for $t=0$. Suppose that it is true for some $t\geq0$ and consider $t+1$. Fix any $g \in \F$. Then we can write 
\begin{align}
&\left| \int_{\sX} g(x) \, \e[dx|\,\bx(t+1)] - \int_{\sX} g(x) \, \mu^*(dx) \right| \nonumber \\
&\leq \left| \int_{\sX} g(x) \, \e[dx|\,\bx(t+1)] - \int_{\sX\times\sX} g(x) \, p(dx|z,\pi^*(z),\e[\,\cdot\,|\,\bx(t)]) \, \e[dz|\,\bx(t)] \right| \label{eqq1} \\
&+ \left| \int_{\sX\times\sX} g(x) \, p(dx|z,\pi^*(z),\e[\,\cdot\,|\,\bx(t)]) \, \e[dz|\,\bx(t)] - \int_{\sX\times\sX} g(x) \, p(dx|z,\pi^*(z),\mu^*) \, \e[dz|\,\bx(t)] \right| \label{eqq2} \\
&+\left| \int_{\sX\times\sX} g(x) \, p(dx|z,\pi^*(z),\mu^*) \, \e[dz|\,\bx(t)] - \int_{\sX\times\sX} g(x) \, p(dx|z,\pi^*(z),\mu^*) \, \mu^*(dz)] \right| \label{eqq3}
\end{align}
Here, (\ref{eqq3}) is true since
$$
\mu^*(\,\cdot\,) = \int_{\sX} p(\,\cdot\,|z,\pi^*(z),\mu^*) \, \mu^*(dz)
$$
Now, let us bound the expectations (uniform in $g \in \F$) of the terms (\ref{eqq1}), (\ref{eqq2}), and (\ref{eqq3}). 

\paragraph*{Bounding (\ref{eqq2}):}

We have 
\small
\begin{align}
&\rE \left[ \sup_{g \in \F} \left| \int_{\sX\times\sX} g(x) \, p(dx|z,\pi^*(z),\e[\,\cdot\,|\,\bx(t)]) \, \e[dz|\,\bx(t)] - \int_{\sX\times\sX} g(x) \, p(dx|z,\pi^*(z),\mu^*) \, \e[dz|\,\bx(t)] \right| \right] \nonumber \\
&= \rE \bigg[\bigg| \int_{\sX\times\sX\times\sX} g(x) \, \bp(dx|z,\pi^*(z),y) \, \e[dz|\,\bx(t)]) \, \e[dy|\,\bx(t)] \nonumber \\
&\phantom{xxxxxxxxxxxx}- \int_{\sX\times\sX\times\sX} g(x) \, \bp(dx|z,\pi^*(z),y) \, \e[dz|\,\bx(t)] \, \mu^*(dy)\bigg| \bigg]\label{eqq2-1}
\end{align}
\normalsize
Define $l(y):=\int_{\sX\times\sX} g(x) \, \bp(dx|z,\pi^*(z),y) \, \e[dz|\,\bx(t)]$. Then $l(\sX) \subset [0,K]$ and for any $y,r \in \sX$, we have $|l(y)-l(r)| \leq \K_3 \, d_{\sX}(y,r)$. Hence, $\frac{l}{\K_3} \in \F$. 
This implies the following bound on (\ref{eqq2-1})
\begin{align}
(\ref{eqq2-1}) &\leq \K_3 \, \rE\left[ \sup_{g \in \F} \left|\int_{\sX} g(y) \, \e[dy|\,\bx(t)] - \int_{\sX} g(y) \, \mu^*(dy) \right| \right] \nonumber \\
&\leq \K_3 \, \alpha_t(N,\varepsilon) \,\, \text{(by induction hypothesis)} \label{eqq2-final}
\end{align}

\paragraph*{Bounding (\ref{eqq3}):}

We define $q(z) := \int_{\sX} g(x) \, p(dx|z,\pi^*(z),\mu^*)$. Note that $q(\sX) \subset [0,K]$ and for any $(z,y) \in \sX$, we have 
\begin{align*}
|q(z)-q(y)| &= \left|\int_{\sX} g(x) \, p(dx|z,\pi^*(z),\mu^*)-\int_{\sX} g(x) \, p(dx|y,\pi^*(y),\mu^*)\right| \\
&\leq \K_1 d_{\sX}(z,y) + \K_2 \, \L^* \, d_{\sX}(z,y) = (\K_1+\K_2 \, \L^*)\, d_{\sX}(z,y)
\end{align*}
Hence, 
$$
\frac{q}{(\K_1+\K_2 \, \L^*)} \in \F
$$
which implies that 
\begin{align}
&\rE \left[ \sup_{g \in \F}\left| \int_{\sX\times\sX} g(x) \, p(dx|z,\pi^*(z),\mu^*) \, \e[dz|\,\bx(t)] - \int_{\sX\times\sX} g(x) \, p(dx|z,\pi^*(z),\mu^*) \, \mu^*(dz)] \right| \right] \nonumber \\
&\leq (\K_1+\K_2 \, \L^*) \, \rE\left[ \sup_{g \in \F} \left|\int_{\sX} g(z) \, \e[dz|\,\bx(t)] - \int_{\sX} g(z) \, \mu^*(dz) \right| \right] \nonumber \\
&\leq (\K_1+\K_2 \, \L^*) \, \alpha_t(N,\varepsilon) \,\, \text{(by induction hypothesis)}  \label{eqq3-final}
\end{align}

\paragraph*{Bounding (\ref{eqq1}):}

Note that for any $g \in \F$, we can bound the expectation of (\ref{eqq1}) as follows
\footnotesize
\begin{align}
&\rE\left[\left| \int_{\sX} g(x) \, \e[dx|\,\bx(t+1)] - \int_{\sX\times\sX} g(x) \, p(dx|z,\pi^*(z),\e[\,\cdot\,|\,\bx(t)]) \, \e[dz|\,\bx(t)] \right|\right] \nonumber \\
&= \rE\left[\rE\left[\left| \int_{\sX} g(x) \, \e[dx|\,\bx(t+1)] - \int_{\sX\times\sX} g(x) \, p(dx|z,\pi^*(z),\e[\,\cdot\,|\,\bx(t)]) \, \e[dz|\,\bx(t)] \right| \bigg| \bx(t) \right]\right] \nonumber \\ 
&\leq \rE\left[\sqrt{\frac{1}{N^2} \sum_{i=1}^N \left\{\int_{\sX} g^2(x) \, p(dx|x_i(t),\pi^*(x_i(t)),\e[\,\cdot\,|\,\bx(t)]) - \left(\int_{\sX} g(x) \, p(dx|x_i(t),\pi^*(x_i(t)),\e[\,\cdot\,|\,\bx(t)])\right)^2\right\}}\right] \label{eqq1-1} \\
&\text{(by Lemma~\ref{result6})} \nonumber
\end{align}
\normalsize
Note that $\sup_{x \in \sX} g(x) = \sup_{x \in \sX} g(x) - g(y_*) \leq \sup_{x \in \sX} d_{\sX}(x,y_*) \leq \diam(\sX)$. Moreover, we also have $\sup_{x \in \sX} g^2(x) = \sup_{x \in \sX} g^2(x) - g^2(y_*) =  \sup_{x \in \sX} (g(x) - g(y_*)) (g(x) + g(y_*)) = \sup_{x \in \sX} (g(x) - g(y_*)) \, g(x) \leq \sup_{x \in \sX} d_{\sX}(x,y_*) \, g(x) \leq \diam(\sX)^2$. Therefore, we can bound the last term as follows
\begin{align}
(\ref{eqq1-1}) &\leq \sqrt{\frac{2}{N}} \, \diam(\sX)\label{eqq1-2}
\end{align}
Note that the bound (\ref{eqq1-2}) is independent of $g$. Let $\{g_1,\ldots,g_{\N(\varepsilon,F)}\}$ be minimal $\varepsilon$-cover of $\F$. Then we have
\small 
\begin{align}
&\rE\left[\sup_{g \in \F} \left| \int_{\sX} g(x) \, \e[dx|\,\bx(t+1)] - \int_{\sX\times\sX} g(x) \, p(dx|z,\pi^*(z),\e[\,\cdot\,|\,\bx(t)]) \, \e[dz|\,\bx(t)] \right|\right] \nonumber \\
&\leq \rE\bigg[\sup_{g \in \F} \inf_{i=1,\ldots,\N(\varepsilon,\F)} \bigg( 2 \|g-g_i\|_{\infty} + \bigg| \int_{\sX} g_i(x) \, \e[dx|\,\bx(t+1)]  \nonumber \\
&\phantom{xxxxxxxxxxxxxx}- \int_{\sX\times\sX} g_i(x) \, p(dx|z,\pi^*(z),\e[\,\cdot\,|\,\bx(t)]) \, \e[dz|\,\bx(t)] \bigg| \bigg) \bigg] \nonumber \\
&\leq 2\varepsilon + \rE\left[\sup_{i=1,\ldots,\N(\varepsilon,\F)} \left| \int_{\sX} g_i(x) \, \e[dx|\,\bx(t+1)] - \int_{\sX\times\sX} g_i(x) \, p(dx|z,\pi^*(z),\e[\,\cdot\,|\,\bx(t)]) \, \e[dz|\,\bx(t)] \right| \right] \nonumber \\
&\leq 2\varepsilon + \sum_{i=1}^{\N(\varepsilon,\F)} \rE\left[\left| \int_{\sX} g_i(x) \, \e[dx|\,\bx(t+1)] - \int_{\sX\times\sX} g_i(x) \, p(dx|z,\pi^*(z),\e[\,\cdot\,|\,\bx(t)]) \, \e[dz|\,\bx(t)] \right| \right]  \nonumber \\
&\leq 2\varepsilon + \sqrt{\frac{2}{N}} \, \diam(\sX) \, \N(\varepsilon,\F) \,\, \text{(by (\ref{eqq1-2}))}\label{eqq1-final}
\end{align}
\normalsize
Now if we combine the bounds (\ref{eqq1-final}), (\ref{eqq2-final}), and (\ref{eqq3-final}) for the expectations (uniform in $g \in \F$) of the terms (\ref{eqq1}), (\ref{eqq2}), and (\ref{eqq3}), we obtain the following
\begin{align*}
&\rE\left[\sup_{g \in \F} \left| \int_{\sX} g(x) \, \e[dx|\,\bx(t+1)] - \int_{\sX} g(x) \, \mu^*(dx) \right| \right] \\
&\leq (\K_1 + \K_2 \, \L^* + \K_3) \, \alpha_t(N,\varepsilon) + \sqrt{\frac{2}{N}} \diam(\sX) \, \N(\varepsilon,\F) + 2\varepsilon
\end{align*}
This completes the proof.
\end{proof}

Now it is time to prove our first important result. Here, using above lemma and Lipschitz continuity properties of $\bc$ and $\pi^*$, we now deduce that the cost of agent~$1$ under joint policy $\bpi^*$ in the $N$-agent game, where $N$ is large, should be close to the cost of a generic agent under optimal policy $\pi^*$ in the infinite population limit.  

\begin{theorem}\label{theorem1}
Given any $\varepsilon>0$, we have 
$$
|J_1(\mu^*;\bpi^*)-J(\mu^*;\pi^*)| := |J_1(\mu^*;\bpi^*)- \inf_{\pi \in \Pi_l}J(\mu^*;\pi)| \leq \Theta_1(N,\varepsilon)
$$
where 
$$
\Theta_1(N,\varepsilon) :=  (\L_1+\L_2 \, \L^*+\L_3) \, \left\{ \frac{\alpha(N) }{1-\beta \, \kappa_1} + \beta\, \left(\kappa_2 \, \sqrt{\frac{2}{N}}+2\varepsilon\right) \, \frac{1}{(1-\beta \, \kappa_1)(1-\beta)}\right\}
$$
\end{theorem}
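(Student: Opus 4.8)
The plan is to telescope the discounted cost over time and reduce the statement to a per-stage estimate already supplied by Lemma~\ref{lemma7}. First I would record three structural facts. By Lemma~\ref{lemma1}, $\pi^{*}=\pi_{\mu^{*}}$ is a \emph{deterministic} $\L^{*}$-Lipschitz map $\sX\to\sA$, so $a_{1}(t)=\pi^{*}(x_{1}(t))$ in the $N$-agent system and $a(t)=\pi^{*}(x(t))$ in the limit. Since $(\mu^{*},\pi^{*})$ is a mean-field equilibrium, $\mu^{*}\in\Phi(\pi^{*})$, i.e.\ $\mu^{*}$ is invariant under $p_{\mu^{*}}^{\pi^{*}}$; as $x(0)\sim\mu^{*}$ this forces $\mathcal L(x(t))=\mu^{*}$ for all $t\ge0$, whence
\begin{align*}
J(\mu^{*};\pi^{*})=\sum_{t=0}^{\infty}\beta^{t}\,\rE\big[c(x(t),\pi^{*}(x(t)),\mu^{*})\big]=\sum_{t=0}^{\infty}\beta^{t}\int_{\sX}c(x,\pi^{*}(x),\mu^{*})\,\mu^{*}(dx).
\end{align*}
Also $\pi^{*}\in\Lambda(\mu^{*})$ gives $J(\mu^{*};\pi^{*})=\inf_{\pi\in\Pi_{l}}J(\mu^{*};\pi)$, which is the first equality in the statement; and assumption~(e) ensures $\beta\kappa_{1}<1$ (with $\kappa_{1}=\K_{1}+\K_{2}\L^{*}+\K_{3}$), which is what makes the relevant series converge.

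The key reduction is a symmetry (exchangeability) argument. Under $\bpi^{*}=(\pi^{*},\dots,\pi^{*})$ with i.i.d.\ initial states $x_{i}(0)\sim\mu^{*}$, the law of $\bx(t)$ is exchangeable for every $t$: this holds at $t=0$, and it propagates because $\e[\,\cdot\,|\,\bx(t)]$ is permutation invariant and, conditionally on $\bx(t)$, the pairs $\{(z_{i}(t),w_{i}(t))\}_{i}$ are i.i.d.\ over $i$, so permuting the coordinates of $\bx(t+1)=\big(f(x_{i}(t),\pi^{*}(x_{i}(t)),z_{i}(t),w_{i}(t))\big)_{i=1}^{N}$ leaves its distribution unchanged. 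Hence $\rE[c(x_{1}(t),\pi^{*}(x_{1}(t)),\e[\,\cdot\,|\,\bx(t)])]=\tfrac1N\sum_{i=1}^{N}\rE[c(x_{i}(t),\pi^{*}(x_{i}(t)),\e[\,\cdot\,|\,\bx(t)])]=\rE\big[\int_{\sX}c(x,\pi^{*}(x),\e[\,\cdot\,|\,\bx(t)])\,\e[dx|\,\bx(t)]\big]$, the last equality being the pathwise identity $\tfrac1N\sum_{i}h(x_{i})=\int h\,d\e[\,\cdot\,|\,\bx(t)]$. Combining this with the previous display and the triangle inequality,
\begin{align*}
|J_{1}(\mu^{*};\bpi^{*})-J(\mu^{*};\pi^{*})|\ \le\ \sum_{t=0}^{\infty}\beta^{t}\,\rE\Big[\,\Big|\int_{\sX}c(x,\pi^{*}(x),\e[\,\cdot\,|\,\bx(t)])\,\e[dx|\,\bx(t)] -\int_{\sX}c(x,\pi^{*}(x),\mu^{*})\,\mu^{*}(dx)\Big|\,\Big].
\end{align*}

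For the $t$-th summand I would insert the term $\int_{\sX}c(x,\pi^{*}(x),\mu^{*})\,\e[dx|\,\bx(t)]$ and split into two pieces. The first piece is controlled by Lipschitz continuity of $c$ in the mean-field argument: for every $x$, $|c(x,\pi^{*}(x),\e[\,\cdot\,|\,\bx(t)])-c(x,\pi^{*}(x),\mu^{*})|\le\L_{3}\,W_{1}(\e[\,\cdot\,|\,\bx(t)],\mu^{*})$, using assumption~(b) and the dual representation of $W_{1}$. The second piece uses that $x\mapsto c(x,\pi^{*}(x),\mu^{*})$ is $(\L_{1}+\L_{2}\L^{*})$-Lipschitz (assumption~(b) together with $\L^{*}$-Lipschitz continuity of $\pi^{*}$ from Lemma~\ref{lemma1}), so again by duality $\big|\int_{\sX}c(x,\pi^{*}(x),\mu^{*})\,\e[dx|\,\bx(t)]-\int_{\sX}c(x,\pi^{*}(x),\mu^{*})\,\mu^{*}(dx)\big|\le(\L_{1}+\L_{2}\L^{*})\,W_{1}(\e[\,\cdot\,|\,\bx(t)],\mu^{*})$. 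Taking expectations and invoking Lemma~\ref{lemma7} bounds the $t$-th summand by $(\L_{1}+\L_{2}\L^{*}+\L_{3})\,\alpha_{t}(N,\varepsilon)$.

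Finally I would insert the closed form $\alpha_{t}(N,\varepsilon)=\kappa_{1}^{t}\alpha(N)+(\kappa_{2}\sqrt{2/N}+2\varepsilon)\sum_{i=0}^{t-1}\kappa_{1}^{i}$ of Lemma~\ref{lemma7} and sum two geometric series, using $\sum_{t\ge0}(\beta\kappa_{1})^{t}=\tfrac1{1-\beta\kappa_{1}}$ and $\sum_{t\ge1}\beta^{t}\sum_{i=0}^{t-1}\kappa_{1}^{i}=\tfrac{\beta}{(1-\beta)(1-\beta\kappa_{1})}$, to get $\sum_{t\ge0}\beta^{t}\alpha_{t}(N,\varepsilon)=\tfrac{\alpha(N)}{1-\beta\kappa_{1}}+(\kappa_{2}\sqrt{2/N}+2\varepsilon)\tfrac{\beta}{(1-\beta)(1-\beta\kappa_{1})}$; multiplying by $\L_{1}+\L_{2}\L^{*}+\L_{3}$ reproduces $\Theta_{1}(N,\varepsilon)$ verbatim. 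The only step that needs genuine care is the exchangeability argument permitting agent~$1$'s running cost to be replaced by the empirical average of all agents' running costs (tracking $\mathcal L(x_{1}(t))$ directly would give a messier double sum that does not collapse to this clean bound); everything else is routine Lipschitz estimation and geometric summation.
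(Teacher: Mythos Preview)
Your proposal is correct and follows essentially the same approach as the paper: the exchangeability argument to replace agent~$1$'s running cost by the empirical average, the splitting via the intermediate term $\int_{\sX}c(x,\pi^{*}(x),\mu^{*})\,\e[dx|\,\bx(t)]$, the two Lipschitz estimates yielding the constant $\L_{1}+\L_{2}\L^{*}+\L_{3}$, and the geometric summation are all identical to the paper's proof. The only cosmetic difference is that the paper writes things in terms of $\bc$ and defines auxiliary functions $l_{1},l_{2}$ to check membership in the class $\F$ explicitly, whereas you work directly with $c$ and invoke the Kantorovich--Rubinstein duality; these are the same argument.
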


\begin{proof}
Note that since every agent applies the same policy, one can prove that for any $t\geq0$, we have 
$$
{\cal L}(x_1(t),\ldots,x_N(t),\e[\,\cdot\,|\,\bx(t)]) = {\cal L}(x_{\sigma(1)}(t),\ldots,x_{\sigma(N)}(t),\e[\,\cdot\,|\,\bx(t)])
$$
for any permutation $\sigma \in \S_{\{1:N\}}$. Hence, for any $i=1,\ldots,N$, we have 
$$
\rE[\bc(x_1(t),a_1(t),z_1(t))] = \rE[c(x_1(t),a_1(t),\e[\,\cdot\,|\,\bx(t)])] = \rE[c(x_i(t),a_i(t),\e[\,\cdot\,|\,\bx(t)])]
$$
Therefore 
\begin{align*}
\rE[\bc(x_1(t),a_1(t),z_1(t))] &= \frac{1}{N} \sum_{i=1}^N \rE[c(x_i(t),a_i(t),\e[\,\cdot\,|\,\bx(t)])]  \\
&= \rE\left[\frac{1}{N} \sum_{i=1}^N c(x_i(t),a_i(t),\e[\,\cdot\,|\,\bx(t)])\right]  \\
&= \rE \left[ \int_{\sX} c(x,\pi^*(x),\e[\,\cdot\,|\,\bx(t)]) \, \e[dx|\,\bx(t)]  \right] \\
&= \rE \left[ \int_{\sX\times\sX} \bc(x,\pi^*(x),z) \, \e[dx|\,\bx(t)] \, \e[dz|\,\bx(t)] \right]
\end{align*}
In view of the last identity, let us obtain a bound on the following term
\begin{align}
&\left| \rE[\bc(x_1(t),a_1(t),z_1(t))] - \rE[\bc(x(t),a(t),z(t))] \right| \nonumber \\
&= \left| \rE\left[ \int_{\sX\times\sX} \bc(x,\pi^*(x),z) \, \e[dx|\,\bx(t)] \, \e[dz|\,\bx(t)] \right] - \int_{\sX\times\sX} \bc(x,\pi^*(x),z) \, \mu^*(dx) \, \mu^*(dz) \right| \nonumber \\ 
&\text{(as ${\cal L}(x(t))=\mu^*$)} \nonumber \\
&\leq \left| \rE\left[ \int_{\sX\times\sX} \hspace{-7pt} \bc(x,\pi^*(x),z) \, \e[dx|\,\bx(t)] \, \e[dz|\,\bx(t)] \right] - \rE\left[\int_{\sX\times\sX} \hspace{-7pt} \bc(x,\pi^*(x),z) \, \e[dx|\,\bx(t)] \, \mu^*(dz) \right] \right| \nonumber \\
&+ \left| \rE\left[ \int_{\sX\times\sX} \bc(x,\pi^*(x),z) \, \e[dx|\,\bx(t)] \, \mu^*(dz) \right] - \int_{\sX\times\sX} \bc(x,\pi^*(x),z) \, \mu^*(dx) \, \mu^*(dz) \right| \label{thm1-1}
\end{align}
Now define 
\begin{align*}
l_1(z) &:= \int_{\sX} \bc(x,\pi^*(x),z) \, \e[dx|\,\bx(t)] \\
l_2(x) &:= \int_{\sX} \bc(x,\pi^*(x),z) \, \mu^*(dz)
\end{align*}
Note that $l_1(\sX),l_1(\sX) \subset [0,\bc_{\max}]$ and moreover for any $z,x \in \sX$, we have $|l_1(z)-l_1(x)| \leq \L_3 \, d_{\sX}(z,x)$ and $|l_2(x)-l_2(z)| \leq (\L_1+\L_2 \, \L^*) \, d_{\sX}(x,z)$. Hence 
$$
\frac{l_1}{\L_3}, \, \frac{l_2}{\L_1+\L_2 \, \L^*} \in \F
$$
Therefore, by Lemma~\ref{lemma7}, we have
$$
(\ref{thm1-1}) \leq (\L_1+\L_2 \, \L^*+\L_3) \, \alpha_t(N,\varepsilon)
$$
This implies that 
\begin{align}
&|J_1(\mu^*;\bpi^*)-J(\mu^*;\pi^*)|  \leq \sum_{t=0}^{\infty} \beta^t \, \left| \rE[\bc(x_1(t),a_1(t),z_1(t))] - \rE[\bc(x(t),a(t),z(t))] \right| \nonumber \\
&\leq  \sum_{t=0}^{\infty} \beta^t \, (\L_1+\L_2 \, \L^*+\L_3) \, \alpha_t(N,\varepsilon) \nonumber \\
&= (\L_1+\L_2 \, \L^*+\L_3) \sum_{t=0}^{\infty} \beta^t \left\{\kappa_1^t \, \alpha(N) + \kappa_2 \, \sqrt{\frac{2}{N}} \sum_{i=0}^{t-1} \kappa_1^i  + 2\varepsilon\sum_{i=0}^{t-1} \kappa_1^i \right\} \nonumber \\
&= (\L_1+\L_2 \, \L^*+\L_3) \left\{ \frac{\alpha(N)}{1-\beta \, \kappa_1} + \beta\, \left(\kappa_2 \, \sqrt{\frac{2}{N}}+2\varepsilon\right) \, \sum_{t=0}^{\infty} \beta^t \sum_{i=0}^{t-1} \kappa_1^i \right\}  \nonumber \\
&\text{(by assumption (e), $\beta \, \kappa_1 < 1$)} \nonumber \\
&= (\L_1+\L_2 \, \L^*+\L_3) \left\{ \frac{\alpha(N)}{1-\beta \, \kappa_1} + \beta\, \left(\kappa_2 \, \sqrt{\frac{2}{N}}+2\varepsilon\right) \, \sum_{i=0}^{\infty} \kappa_1^i \sum_{t=i}^{\infty} \beta^t  \right\}  \nonumber \\
&= (\L_1+\L_2 \, \L^*+\L_3) \left\{ \frac{\alpha(N)}{1-\beta \, \kappa_1} + \beta\, \left(\kappa_2 \, \sqrt{\frac{2}{N}}+2\varepsilon\right) \, \sum_{i=0}^{\infty} \kappa_1^i \, \beta^i \, \sum_{t=0}^{\infty} \beta^t  \right\}  \nonumber \\
&= (\L_1+\L_2 \, \L^*+\L_3) \left\{ \frac{\alpha(N)}{1-\beta \, \kappa_1} + \beta\, \left(\kappa_2 \, \sqrt{\frac{2}{N}}+2\varepsilon\right) \, \frac{1}{(1-\beta \, \kappa_1)(1-\beta)}\right\}  \nonumber
\end{align}
This completes the proof.
\end{proof}

In the remainder of this section, $\{(\tbx(t),\tba(t))\}_{t\geq0}$ denotes the state-action vectors under the joint policy $(\gamma^{*,N},\pi^*,\ldots,\pi^*)$ and initial distribution $\mu^{*,\otimes^N}$ for the $N$-agent game. Similarly, $\{(\hx(t),\ha(t))\}_{t\geq0}$ denotes the state-action pairs under the policy $\gamma^{*,N}$ and initial distribution $\mu^*$ of a generic agent in the infinite population limit. Note that in the infinite population limit, a generic agent applies the best-response policy $\gamma^{*,N}:\P(\sX)\times\sX\rightarrow\sA$ as follows: $\ha(t) = \gamma^{*,N}(\mu^*,\hx(t))$ for all $t\geq0$.

\begin{lemma}\label{lemma8}
Given  any $\varepsilon>0$, for all $t\geq0$, we have
\begin{align*}
\rE\left[\sup_{g \in \F} \left| \int_{\sX} g(x) \, \e[dx|\,\tbx(t)] - \int_{\sX} g(x) \, \mu^*(dx) \right| \right] = \rE[W_1(\e[\,\cdot\,|\,\tbx(t)],\mu^*)] \leq \talpha_t(N,\varepsilon)
\end{align*}
where error bounds $\{\talpha_t(N,\varepsilon)\}_{t\geq0}$ are recursively defined as follows
\begin{align*}
\talpha_0(N,\varepsilon) &:= \alpha(N) \\
\talpha_{t+1}(N,\varepsilon) &:= (\K_1 + \K_2 \, \L^* + \K_3) \, \talpha_t(N,\varepsilon) + \sqrt{\frac{2}{N-1}} \diam(\sX) \left\{ \N(\varepsilon,\F) + \sqrt{\frac{8}{N-1}}\right\}+ 2\varepsilon \\
&=: \kappa_1 \, \talpha_t(N,\varepsilon) + \kappa_2^N \, \sqrt{\frac{2}{N-1}} + 2\varepsilon
\end{align*}
Therefore, for any $t\geq1$, we have 
$$
\talpha_t(N,\varepsilon) = \kappa_1^t \, \alpha(N) + \kappa_2^N \, \sqrt{\frac{2}{N-1}} \sum_{i=0}^{t-1} \kappa_1^i  + 2\varepsilon\sum_{i=0}^{t-1} \kappa_1^i 
$$
\end{lemma}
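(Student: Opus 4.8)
The plan is to argue by induction on $t$, following the scheme of the proof of Lemma~\ref{lemma7}; the only genuinely new feature is that agent~$1$ now applies $\gamma^{*,N}$ rather than $\pi^*$, so it must be ``quarantined'' out of the empirical measure before the propagation-of-chaos estimate of Section~\ref{sec1} can be invoked. The base case $t=0$ is immediate since $(\tilde x_1(0),\dots,\tilde x_N(0))\sim\mu^{*,\otimes^N}$, so $\rE[W_1(\e[\,\cdot\,|\,\tbx(0)],\mu^*)]\le\alpha(N)=\talpha_0(N,\varepsilon)$ by (\ref{LLN}); moreover, by the duality (\ref{wasserstein}) it suffices to bound $\rE\bigl[\sup_{g\in\F}\bigl|\int_{\sX} g\,d\e[\,\cdot\,|\,\tbx(t+1)]-\int_{\sX} g\,d\mu^*\bigr|\bigr]$.

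For the inductive step, assume the bound holds at time $t$, fix $g\in\F$, and split off agent~$1$ via $\e[\,\cdot\,|\,\tbx(t+1)]=\tfrac1N\delta_{\tilde x_1(t+1)}+\tfrac{N-1}{N}\e[\,\cdot\,|\,\tbx_{\{2:N\}}(t+1)]$. Since $0\le g\le\diam(\sX)$ and $0\le\int_{\sX} g\,d\mu^*\le\diam(\sX)$ for every $g\in\F$, this gives
\begin{align*}
\Bigl|\int_{\sX} g\,d\e[\,\cdot\,|\,\tbx(t+1)]-\int_{\sX} g\,d\mu^*\Bigr|\le\frac{\diam(\sX)}{N}+\frac{N-1}{N}\Bigl|\int_{\sX} g\,d\e[\,\cdot\,|\,\tbx_{\{2:N\}}(t+1)]-\int_{\sX} g\,d\mu^*\Bigr|.
\end{align*}
The second term is then decomposed exactly as in (\ref{eqq1})--(\ref{eqq3}), but with the conditional mean taken over agents $2,\dots,N$ only and with the mean-field noise slot still equal to $\e[\,\cdot\,|\,\tbx(t)]$ (the full empirical measure, since $z_i(t)\sim\e[\,\cdot\,|\,\tbx(t)]$): the ``fluctuation'' piece — the difference between $\int_{\sX} g\,d\e[\,\cdot\,|\,\tbx_{\{2:N\}}(t+1)]$ and its conditional mean given $\tbx(t)$ — is controlled by Lemma~\ref{result7} applied to the kernel $(y,z)\mapsto\bp(\,\cdot\,|y,\pi^*(y),z)$ with $\by=\tbx(t)$, which yields a per-$g$ bound of $\sqrt{\tfrac{2}{N-1}}\,\diam(\sX)$ and, after the minimal $\varepsilon$-cover argument of Lemma~\ref{lemma7}, a bound of $\sqrt{\tfrac{2}{N-1}}\,\diam(\sX)\,\N(\varepsilon,\F)+2\varepsilon$; replacing the noise slot $\e[\,\cdot\,|\,\tbx(t)]$ by $\mu^*$ costs $\K_3\,\talpha_t(N,\varepsilon)$ by the $\K_3$-Lipschitz continuity of $\bp$ in its mean-field argument together with the induction hypothesis; and replacing the state-average over $\tbx_{\{2:N\}}(t)$ by $\mu^*$ costs $(\K_1+\K_2\L^*)\,W_1(\e[\,\cdot\,|\,\tbx_{\{2:N\}}(t)],\mu^*)$, using the $(\K_1+\K_2\L^*)$-Lipschitz continuity of $x\mapsto p(\,\cdot\,|x,\pi^*(x),\mu^*)$ (here $\pi^*$ is $\L^*$-Lipschitz by Lemma~\ref{lemma1}) and $W_1(\e[\,\cdot\,|\,\tbx_{\{2:N\}}(t)],\mu^*)\le W_1(\e[\,\cdot\,|\,\tbx_{\{2:N\}}(t)],\e[\,\cdot\,|\,\tbx(t)])+\talpha_t(N,\varepsilon)$; the invariance identity $\mu^*=\int_{\sX} p(\,\cdot\,|z,\pi^*(z),\mu^*)\,\mu^*(dz)$ is used exactly as in (\ref{eqq3}).

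Collecting the pieces, the multiplicative constant in front of $\talpha_t(N,\varepsilon)$ is $\K_1+\K_2\L^*+\K_3=\kappa_1$ — unchanged from Lemma~\ref{lemma7}, because agent~$1$ carries only $1/N$ of the empirical mass and hence does not enter the leading-order recursion — while the various $O(1/N)$ discrepancies (the boundary term $\diam(\sX)/N$, and the gap $W_1(\e[\,\cdot\,|\,\tbx_{\{2:N\}}(t)],\e[\,\cdot\,|\,\tbx(t)])$, which is $\le\diam(\sX)/N$ by an elementary coupling in the spirit of Lemma~\ref{result5}) combine into a total correction bounded by $\tfrac{4\diam(\sX)}{N-1}=\diam(\sX)\sqrt{\tfrac{2}{N-1}}\sqrt{\tfrac{8}{N-1}}$. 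Taking $\sup_{g\in\F}$ and expectations and using $\tfrac{N-1}{N}\le1$ gives $\talpha_{t+1}(N,\varepsilon)=\kappa_1\talpha_t(N,\varepsilon)+\kappa_2^N\sqrt{\tfrac{2}{N-1}}+2\varepsilon$ with $\kappa_2^N=\diam(\sX)\{\N(\varepsilon,\F)+\sqrt{\tfrac{8}{N-1}}\}$, and the closed-form expression follows by unrolling the recursion as in Lemma~\ref{lemma7}. The main obstacle is purely the bookkeeping of the several $O(1/N)$ terms produced by quarantining agent~$1$ (in the empirical measures at times $t$ and $t+1$ and in the mean-field slot of $\bp$) and checking that they fit inside the stated $\sqrt{8/(N-1)}$ slack while leaving $\kappa_1$ untouched; the analytic content is identical to that of Lemma~\ref{lemma7}, with Lemma~\ref{result6} replaced throughout by Lemma~\ref{result7}.
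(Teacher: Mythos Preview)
Your strategy mirrors the paper's, and the analytic ingredients---Lemma~\ref{result7} for the fluctuation, the $\varepsilon$-cover, Lipschitz drift estimates, the invariance of $\mu^*$---are all the right ones. There is, however, a real bookkeeping slip in the $O(1/N)$ accounting. In your ordering, the passage from the state-average over $\tbx_{\{2:N\}}(t)$ to $\mu^*$ goes through the Wasserstein/Lipschitz bound, so the time-$t$ gap $W_1\bigl(\e[\,\cdot\,|\,\tbx_{\{2:N\}}(t)],\e[\,\cdot\,|\,\tbx(t)]\bigr)\le\diam(\sX)/N$ arrives already multiplied by $(\K_1+\K_2\L^*)$. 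Together with the boundary term at $t+1$, your $O(1/N)$ total is $\bigl[1+(\K_1+\K_2\L^*)\bigr]\diam(\sX)/N$, which is \emph{not} bounded by $4\diam(\sX)/(N-1)$ in general, so the stated $\kappa_2^N$ is not reached by your route.

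The paper's four-term split (\ref{n-eqq1})--(\ref{n-eqq4}) avoids this by handling the passage $\e[dz|\,\tbx_{\{2:N\}}(t)]\to\e[dz|\,\tbx(t)]$ as a separate term (\ref{n-eqq3}) \emph{before} any Lipschitz constant is applied: since the integrand $z\mapsto\int_{\sX} g(x)\,p(dx|z,\pi^*(z),\e[\,\cdot\,|\,\tbx(t)])$ is bounded by $\diam(\sX)$, this step costs a clean $2\diam(\sX)/(N-1)$ independent of $\K_1,\K_2,\L^*$. Only afterwards are both the noise slot and the state average replaced by $\mu^*$ via Lipschitz bounds (term (\ref{n-eqq4})), yielding $\kappa_1\talpha_t$ with no residual $O(1/N)$ piece. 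If you insert this intermediate step---swap from $\tbx_{\{2:N\}}(t)$ to $\tbx(t)$ using the sup-norm bound on the integrand, rather than via a Wasserstein triangle inequality---your argument goes through verbatim and delivers exactly the stated recursion.
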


\begin{proof}
Since $(\tx_1(0),\ldots,\tx_N(0)) \sim \mu^{*,\otimes^N}$, the statement is true for $t=0$. Suppose that it is true for some $t\geq0$ and consider $t+1$. Fix any $g \in \F$. Then we can write 
\begin{align}
&\left| \int_{\sX} g(x) \, \e[dx|\,\tbx(t+1)] - \int_{\sX} g(x) \, \mu^*(dx) \right| \nonumber \\
&\leq \left| \int_{\sX} g(x) \, \e[dx|\,\tbx(t+1)] - \int_{\sX} g(x) \e[dx|\,\tbx_{\{2:N\}}(t+1)] \right| \label{n-eqq1} \\
&+\left| \int_{\sX} g(x) \e[dx|\,\tbx_{\{2:N\}}(t+1)] - \int_{\sX\times\sX} g(x) \, p(dx|z,\pi^*(z),\e[\,\cdot\,|\,\tbx(t)]) \, \e[dz|\,\tbx_{\{2:N\}}(t)] \right| \label{n-eqq2} \\
&+ \bigg| \int_{\sX\times\sX} g(x) \, p(dx|z,\pi^*(z),\e[\,\cdot\,|\,\tbx(t)]) \, \e[dz|\,\tbx_{\{2:N\}}(t)] \label{n-eqq3} \\
&\phantom{xxxxxxxxxxxxxxxxxx}- \int_{\sX\times\sX} g(x) \, p(dx|z,\pi^*(z),\e[\,\cdot\,|\,\tbx(t)]) \, \e[dz|\,\tbx(t)] \bigg| \nonumber \\
&+\left| \int_{\sX\times\sX} g(x) \, p(dx|z,\pi^*(z),\e[\,\cdot\,|\,\tbx(t)]) \, \e[dz|\,\tbx(t)] - \int_{\sX\times\sX} g(x) \, p(dx|z,\pi^*(z),\mu^*) \, \mu^*(dz)] \right| \label{n-eqq4} 
\end{align}
Now, let us bound the expectations (uniform in $g \in \F$) of the terms (\ref{n-eqq1}), (\ref{n-eqq2}), (\ref{n-eqq3}), and (\ref{n-eqq4}). 

\paragraph*{Bounding (\ref{n-eqq1}):}

Note that we can bound (\ref{n-eqq1}) as follows
\footnotesize
$$
(\ref{n-eqq1}) \leq \max \left\{ \left| \frac{1}{N} \sum_{i=1}^N g(\tx_i(t+1)) -  \frac{1}{N} \sum_{i=2}^N g(\tx_i(t+1)) \right|, \, \, \left| \frac{1}{N} \sum_{i=1}^N g(\tx_i(t+1)) -  \frac{1}{N-1} \sum_{i=1}^N g(\tx_i(t+1)) \right|\right\}
$$
\normalsize
In view of this, we can obtain the following bound
\begin{align}
&\rE \left[ \sup_{g \in \F} \left| \int_{\sX} g(x) \, \e[dx|\,\tbx(t+1)] - \int_{\sX} g(x) \e[dx|\,\tbx_{\{2:N\}}(t+1)] \right| \right] \nonumber \\
&\leq \frac{1}{N} \rE \left[\sup_{g \in \F} |g(\tx_1(t+1))|\right] + \frac{1}{N(N-1)} \rE\left[\sup_{g \in \F} \sum_{i=1}^N |g(\tx_i(t+1))| \right] \nonumber \\
&\leq \frac{\diam(\sX)}{N}+\frac{\diam(\sX)}{N-1} \,\, \text{(as $\|g\|_{\infty} \leq \diam(\sX)$ for all $g \in \F$)} \nonumber \\
&\leq \frac{2\diam(\sX)}{N-1} \label{n-eqq1-final}
\end{align}

\paragraph*{Bounding (\ref{n-eqq3}):}

Note that we can bound (\ref{n-eqq3}) as follows
\footnotesize
\begin{align*}
&(\ref{n-eqq3}) \\
&\leq \max \bigg\{ \left| \frac{1}{N} \sum_{i=1}^N \int_{\sX} g(x) \, p(dx|\tx_i(t),\pi^*(\tx_i(t)),\e[\,\cdot\,|\,\tbx(t)]) - \frac{1}{N} \sum_{i=2}^N \int_{\sX} g(x) \, p(dx|\tx_i(t),\pi^*(\tx_i(t)),\e[\,\cdot\,|\,\tbx(t)]) \right|, \nonumber \\
&\left| \frac{1}{N} \sum_{i=1}^N \int_{\sX} g(x) \, p(dx|\tx_i(t),\pi^*(\tx_i(t)),\e[\,\cdot\,|\,\tbx(t)]) - \frac{1}{N-1} \sum_{i=1}^N \int_{\sX} g(x) \, p(dx|\tx_i(t),\pi^*(\tx_i(t)),\e[\,\cdot\,|\,\tbx(t)]) \right| \bigg\}
\end{align*}
\normalsize
In view of this, we can obtain the following bound that is the same with (\ref{n-eqq1-final})
\footnotesize
\begin{align}
&\rE\left[ \left| \int_{\sX\times\sX} g(x) \, p(dx|z,\pi^*(z),\e[\,\cdot\,|\,\tbx(t)]) \, \e[dz|\,\tbx_{\{2:N\}}(t)] - \int_{\sX\times\sX} g(x) \, p(dx|z,\pi^*(z),\e[\,\cdot\,|\,\tbx(t)]) \, \e[dz|\,\tbx(t)] \right| \right] \nonumber \\
&\leq  \frac{2\diam(\sX)}{N-1} \label{n-eqq3-final}
\end{align}
\normalsize

\paragraph*{Bounding (\ref{n-eqq4}):}

Note that we can bound (\ref{n-eqq4}) as follows
\small
\begin{align*}
&(\ref{n-eqq4}) \leq \left| \int_{\sX\times\sX} g(x) \, p(dx|z,\pi^*(z),\e[\,\cdot\,|\,\tbx(t)]) \, \e[dz|\,\tbx(t)] - \int_{\sX\times\sX} g(x) \, p(dx|z,\pi^*(z),\mu^*) \, \e[dz|\,\tbx(t)] \right| \\
&+ \left| \int_{\sX\times\sX} g(x) \, p(dx|z,\pi^*(z),\mu^*) \, \e[dz|\,\tbx(t)] - \int_{\sX\times\sX} g(x) \, p(dx|z,\pi^*(z),\mu^*) \, \mu^*(dz) \right|
\end{align*}
\normalsize
Define 
\begin{align*}
l_1(y) &:= \int_{\sX\times\sX} g(x) \, \bp(dx|z,\pi^*(z),y) \, \e[dz|\,\tbx(t)] \\
l_2(z) &:= \int_{\sX} g(x) \, p(dx|z,\pi^*(z),\mu^*)
\end{align*}
Note that $l_1(\sX), l_2(\sX) \subset [0,K]$ and for any $y,z \in \sX$, we have $|l_1(y)-l_1(z)|\leq \K_3 \, d_{\sX}(y,z)$ and $|l_2(z)-l_2(y)| \leq (\K_1+\K_2 \, \L^*) \, d_{\sX}(z,y)$. Hence
$$
\frac{l_1}{\K_3}, \,\, \frac{l_2}{\K_1+\K_2 \, \L^*} \in \F
$$
This implies the following bound
\small
\begin{align}
&\rE\left[ \sup_{g \in \F} \left| \int_{\sX\times\sX} g(x) \, p(dx|z,\pi^*(z),\e[\,\cdot\,|\,\tbx(t)]) \, \e[dz|\,\tbx(t)] - \int_{\sX\times\sX} g(x) \, p(dx|z,\pi^*(z),\mu^*) \, \mu^*(dz)] \right| \right] \nonumber \\
&\leq \rE\left[ \sup_{g \in \F} \left\{ \left| \int_{\sX} l_1(y) \, \e[dy|\,\tbx(t)] - \int_{\sX} l_1(y) \, \mu^*(dy) \right| +\left| \int_{\sX} l_2(z) \, \e[dz|\,\tbx(t)] - \int_{\sX} l_2(z) \, \mu^*(dz) \right| \right\} \right] \nonumber \\
&\leq (\K_1 + \K_2 \, \L^* + \K_3) \, \rE\left[\sup_{g \in \F} \left| \int_{\sX} g(x) \, \e[dx|\,\tbx(t)] - \int_{\sX} g(x) \, \mu^*(dx) \right| \right] \nonumber \\
&\leq (\K_1 + \K_2 \, \L^* + \K_3) \, \talpha_t(N,\varepsilon) \,\, \text{(by induction hypothesis)} \label{n-eqq4-final}
\end{align} 
\normalsize

\paragraph*{Bounding (\ref{n-eqq2}):}

Note that for any $g \in \F$, we can bound the expectation of (\ref{n-eqq2}) as follows
\footnotesize
\begin{align}
&\rE\left[ \left| \int_{\sX} g(x) \, \e[dx|\,\tbx_{\{2:N\}}(t+1)] - \int_{\sX\times\sX} g(x) \, p(dx|z,\pi^*(z),\e[\,\cdot\,|\,\tbx(t)]) \, \e[dz|\,\tbx_{\{2:N\}}(t)] \right|\right] \nonumber \\
&= \rE\left[\rE\left[ \left| \int_{\sX} g(x) \, \e[dx|\,\tbx_{\{2:N\}}(t+1)] - \int_{\sX\times\sX} g(x) \, p(dx|z,\pi^*(z),\e[\,\cdot\,|\,\tbx(t)]) \, \e[dz|\,\tbx_{\{2:N\}}(t)] \right| \bigg| \bx(t) \right]\right] \nonumber \\ 
&\leq \rE\left[\sqrt{\frac{1}{(N-1)^2} \sum_{i=2}^N \left\{\int_{\sX} g^2(x) \, p(dx|\tx_i(t),\pi^*(\tx_i(t)),\e[\,\cdot\,|\,\tbx(t)]) - \left(\int_{\sX} g(x) \, p(dx|\tx_i(t),\pi^*(\tx_i(t)),\e[\,\cdot\,|\,\tbx(t)])\right)^2\right\}}\right] \label{n-eqq2-1} \\
&\text{(by Lemma~\ref{result7})} \nonumber
\end{align}
\normalsize
Recall that $\sup_{x \in \sX} g(x) \leq \diam(\sX)$ and $\sup_{x \in \sX} g^2(x) \leq \diam(\sX)^2$. Therefore, we can bound the last term as follows
\begin{align}
(\ref{n-eqq2-1}) &\leq \sqrt{\frac{2}{N-1}} \, \diam(\sX)\label{n-eqq2-2}
\end{align}
Note that the bound (\ref{n-eqq2-2}) is independent of $g$. Let $\{g_1,\ldots,g_{\N(\varepsilon,F)}\}$ be minimal $\varepsilon$-cover of $\F$. Then using the same trick to establish (\ref{eqq1-final}), we can obtain the following bound
\small
\begin{align}
&\rE\left[\sup_{g \in \F} \left| \int_{\sX} g(x) \, \e[dx|\,\tbx_{\{2:N\}}(t+1)] - \int_{\sX\times\sX} g(x) \, p(dx|z,\pi^*(z),\e[\,\cdot\,|\,\tbx(t)]) \, \e[dz|\,\tbx_{\{2:N\}}(t)] \right|\right] \nonumber \\
&\leq 2\varepsilon + \hspace{-7pt}\sum_{i=1}^{\N(\varepsilon,\F)} \hspace{-5pt} \rE\left[\left| \int_{\sX} g_i(x) \, \e[dx|\,\tbx_{\{2:N\}}(t+1)] - \int_{\sX\times\sX} \hspace{-7pt} g_i(x) \, p(dx|z,\pi^*(z),\e[dx|\,\tbx(t)]) \, \e[dz|\,\tbx_{\{2:N\}}(t)] \right| \right]  \nonumber \\
&\leq 2\varepsilon + \sqrt{\frac{2}{N-1}} \, \diam(\sX) \, \N(\varepsilon,\F) \,\, \text{(by (\ref{n-eqq2-2}))}\label{n-eqq2-final}
\end{align}
\normalsize
Now if we combine the bounds (\ref{n-eqq1-final}), (\ref{n-eqq2-final}), (\ref{n-eqq3-final}), and (\ref{n-eqq4-final}) for the expectations (uniform in $g \in \F$) of the terms (\ref{n-eqq1}), (\ref{n-eqq2}), (\ref{n-eqq3}), and (\ref{n-eqq4}), we obtain the following
\begin{align*}
&\rE\left[\sup_{g \in \F} \left| \int_{\sX} g(x) \, \e[dx|\,\tbx(t)+1] - \int_{\sX} g(x) \, \mu^*(dx) \right| \right] \\
&\leq (\K_1 + \K_2 \, \L^* + \K_3) \, \talpha_t(N,\varepsilon) + \sqrt{\frac{2}{N-1}} \diam(\sX) \left\{ \N(\varepsilon,\F) + \sqrt{\frac{8}{N-1}}\right\}+ 2\varepsilon
\end{align*}
This completes the proof.
\end{proof}

We now prove our second important result. Using the above lemma and the Lipschitz continuity of $\bc$ and $\gamma^{*,N}$, we deduce that the cost of agent 1 under the joint policy $(\gamma^{*,N},\pi^,\ldots,\pi^)$ in the $N$-agent game, where $N$ is large, should be close to the cost of a generic agent under the policy $\gamma^{*,N}$ in the infinite population limit.
To establish this result, in addition to assumptions (a)-(f), we impose the following condition on $N$ in the remainder of the paper
\begin{itemize}
\item[(g)] We assume that $N$ is large enough so that $\beta \, \hkappa_2^N<1$, where  $\hkappa_2^N := \K_1 + \K_2 \, \L_2^{*,N}$.
\end{itemize}
Note that $\L_2^{*,N}:= \frac{2\rL_2(\R_1^N,\R_2^N,\R_3^N)}{\rho}$, where 
$$
\R_1^N := \K_1^{*,N}, \,\, \R_2^N:=\left\{\K_2^{*,N} \K_3 + \K_2^{*,N}(\K_1+\K_2\L^*) \frac{N}{N-1}\right\}, \,\, \R_3^N:=\frac{\K_2^{*,N}(\K_1+\K_2\L^*)}{N-1}
$$
Here 
\begin{align*}
\K_1^{*,N} \rightarrow \frac{\L_1}{1-\beta\,\K_1}, \,\,\, \K_2^{*,N} \rightarrow \frac{\L_3}{1-\beta(\K_1+\K_2 \, \L^* + \K_3)} 
\end{align*}
as $N \rightarrow \infty$. Hence $\R_1^N \rightarrow \R_1$, $\R_2^N \rightarrow \R_2$, and $\R_3^N \rightarrow 0$ as $N \rightarrow \infty$, where $\R_1$ and $\R_2$ are defined in assumption (e). This implies that since $\rL_2$ is continuous,  $\rL_2(\R_1^N,\R_2^N,\R_3^N) \rightarrow \rL_2(\R_1,\R_2,0)$, and so, $\L_2^{*,N} \rightarrow \L_2^*$ as $N\rightarrow\infty$.  Therefore, assumption (g) holds for sufficiently large $N$ values in view of assumption (e).

\begin{theorem}\label{theorem2}
Given any $\varepsilon>0$, we have 
\begin{align*}
&|J_1(\mu^*;(\gamma^{*,N},\pi^*,\ldots,\pi^*))-J(\mu^*;\gamma^{*,N})| \\
&\phantom{xxxxxx}:= |\inf_{\pi \in \Pi} J_1(\mu^*;(\pi,\pi^*,\ldots,\pi^*))- J(\mu^*;\gamma^{*,N})| \leq \Theta_2(N,\varepsilon)
\end{align*}
where 
\small
$$
\Theta_2(N,\varepsilon) := \left\{\L_2 \, \L_1^{*,N} + \L_3 + (\L_1+\L_2 \, \L_2^{*,N}) \, \hkappa_1^N \, \beta \, \frac{1}{1-\beta \, \hkappa_2^N} \right\} \, \left\{\frac{\alpha(N)}{1-\beta\kappa_1}+\frac{\beta\left(\kappa_2^N \, \sqrt{\frac{2}{N-1}}+2\varepsilon\right)}{(1-\beta)(1-\beta\kappa_1)} \right\}
$$
\normalsize
\end{theorem}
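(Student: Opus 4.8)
The strategy mirrors the proof of Theorem~\ref{theorem1}, now comparing agent~$1$'s cost in the $N$-agent game against the cost of the infinite-population agent that \emph{also} uses the best response $\gamma^{*,N}$. By Lemma~\ref{lemma4} and the extension of $T^N$ to $H^N$ discussed after Lemma~\ref{lemmaa-new}, $\gamma^{*,N}$ agrees with the optimal policy $\gamma_{\opt}^N$ of $\text{MDP}_N$ on the reachable set $\sS^N$, and $(\e[\,\cdot\,|\,\tbx(t)],\tx_1(t))\in\sS^N$ for all $t$ under $(\gamma^{*,N},\pi^*,\ldots,\pi^*)$ with $\tbx(0)\sim\mu^{*,\otimes N}$; hence $J_1(\mu^*;(\gamma^{*,N},\pi^*,\ldots,\pi^*))=\inf_{\pi\in\Pi}J_1(\mu^*;(\pi,\pi^*,\ldots,\pi^*))$, the identity asserted in the statement. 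Let $\widehat V:\sX\to[0,\infty)$ be the unique fixed point of
$$
\widehat V(x)=c\bigl(x,\gamma^{*,N}(\mu^*,x),\mu^*\bigr)+\beta\int_{\sX}\widehat V(y)\,p\bigl(dy\,|\,x,\gamma^{*,N}(\mu^*,x),\mu^*\bigr),
$$
so that $J(\mu^*;\gamma^{*,N})=\int_{\sX}\widehat V\,d\mu^*$. By Lemma~\ref{lemma6} the closed-loop cost $x\mapsto c(x,\gamma^{*,N}(\mu^*,x),\mu^*)$ is $(\L_1+\L_2\L_2^{*,N})$-Lipschitz and the closed-loop kernel $x\mapsto p(\,\cdot\,|x,\gamma^{*,N}(\mu^*,x),\mu^*)$ is $\hkappa_2^N$-Lipschitz in $W_1$; arguing as in \cite[Theorem~5.1]{SaYuLi17} and invoking assumption~(g) ($\beta\hkappa_2^N<1$), $\widehat V$ is Lipschitz with constant $(\L_1+\L_2\L_2^{*,N})/(1-\beta\hkappa_2^N)$.

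Next, telescope $\widehat V$ along the \emph{actual} trajectory of agent~$1$. With $\ta_1(t)=\gamma^{*,N}(\e[\,\cdot\,|\,\tbx(t)],\tx_1(t))$, the fixed-point identity for $\widehat V$ at $\tx_1(t)$ yields
$$
c\bigl(\tx_1(t),\ta_1(t),\e[\,\cdot\,|\,\tbx(t)]\bigr)=\widehat V(\tx_1(t))-\beta\,\rE\bigl[\widehat V(\tx_1(t+1))\mid\mathcal F_t\bigr]-R_t,
$$
where $R_t$ is the discrepancy, evaluated at the \emph{frozen} state $\tx_1(t)$, between agent~$1$'s true one-stage data and those of the limiting MDP: a cost term $c(\tx_1(t),\gamma^{*,N}(\mu^*,\tx_1(t)),\mu^*)-c(\tx_1(t),\ta_1(t),\e[\,\cdot\,|\,\tbx(t)])$ plus $\beta$ times $\int\widehat V\,d\bigl[\,p(\,\cdot\,|\tx_1(t),\gamma^{*,N}(\mu^*,\tx_1(t)),\mu^*)-p(\,\cdot\,|\tx_1(t),\ta_1(t),\e[\,\cdot\,|\,\tbx(t)])\,\bigr]$. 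Multiplying by $\beta^t$, summing over $t$, and using $\tx_1(0)\sim\mu^*$ collapses the first two terms to $\int_{\sX}\widehat V\,d\mu^*=J(\mu^*;\gamma^{*,N})$, so that
$$
\bigl|J_1(\mu^*;(\gamma^{*,N},\pi^*,\ldots,\pi^*))-J(\mu^*;\gamma^{*,N})\bigr|\le\sum_{t\ge0}\beta^t\,\rE[\,|R_t|\,].
$$

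Because only the mean-field argument differs inside $R_t$, the Lipschitz continuity of $c$ and $p$ in the mean-field term (assumptions~(b),(c)), the $\L_1^{*,N}$-Lipschitz dependence of $\gamma^{*,N}$ on its mean-field argument (Lemma~\ref{lemma6}), and the Lipschitz bound on $\widehat V$ give
$$
\rE[\,|R_t|\,]\le\Bigl\{\L_2\L_1^{*,N}+\L_3+\bigl(\L_1+\L_2\L_2^{*,N}\bigr)\,\hkappa_1^N\,\frac{\beta}{1-\beta\hkappa_2^N}\Bigr\}\,\rE\bigl[W_1(\e[\,\cdot\,|\,\tbx(t)],\mu^*)\bigr],
$$
with $\hkappa_1^N:=\K_2\L_1^{*,N}+\K_3$. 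Lemma~\ref{lemma8} bounds $\rE[W_1(\e[\,\cdot\,|\,\tbx(t)],\mu^*)]$ by $\talpha_t(N,\varepsilon)$, and — since $\beta\kappa_1<1$ by assumption~(e) — the explicit geometric formula for $\talpha_t$ from Lemma~\ref{lemma8} sums to $\sum_{t\ge0}\beta^t\talpha_t(N,\varepsilon)=\frac{\alpha(N)}{1-\beta\kappa_1}+\frac{\beta(\kappa_2^N\sqrt{2/(N-1)}+2\varepsilon)}{(1-\beta)(1-\beta\kappa_1)}$; multiplying this by the bracketed constant yields exactly $\Theta_2(N,\varepsilon)$.

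The main obstacle is the residual estimate: the telescoping must be carried out along agent~$1$'s own trajectory (rather than against a separately-coupled auxiliary process), so that $R_t$ depends on the empirical measure only through $W_1(\e[\,\cdot\,|\,\tbx(t)],\mu^*)$ and Lemma~\ref{lemma8} applies directly; this is also the reason the mean-field-Lipschitz constant $\L_1^{*,N}$ of $\gamma^{*,N}$ (and $\hkappa_1^N=\K_2\L_1^{*,N}+\K_3$) enters, rather than its state-Lipschitz constant. The companion point is the Lipschitz estimate for $\widehat V$, whose finiteness is precisely what assumption~(g) secures; the remaining manipulations — the interchange $\sum_t\beta^t\sum_{i<t}\kappa_1^i$ and the geometric summations — are routine, exactly as in the proof of Theorem~\ref{theorem1}.
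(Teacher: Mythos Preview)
Your argument is correct and reaches the stated bound, but it proceeds along a genuinely different route from the paper. The paper introduces an auxiliary infinite-population process $\{(\hx(t),\ha(t))\}$ under the policy $\gamma^{*,N}(\mu^*,\cdot)$ and works in two inductive steps: Step~1 controls $\sup_{g\in\F}|\rE[g(\tx_1(t))]-\rE[g(\hx(t))]|$ via a recursion $\halpha_{t+1}=\hkappa_1^N\talpha_t+\hkappa_2^N\halpha_t$, and Step~2 combines this with Lemma~\ref{lemma8} to bound the per-stage cost difference; the final double sum $\sum_t\beta^t\halpha_t=\hkappa_1^N\sum_t\beta^t\sum_{i<t}(\hkappa_2^N)^i\talpha_{t-1-i}$ is then collapsed by interchanging summations. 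You bypass the auxiliary process entirely by telescoping the closed-loop value function $\widehat V$ of the limiting MDP along agent~1's \emph{own} trajectory, so that the residual $R_t$ is a pure mean-field perturbation and Lemma~\ref{lemma8} applies directly. In effect, the geometric series in $\hkappa_2^N$ that the paper generates through the Step~1 recursion is absorbed \emph{a priori} into the Lipschitz constant $\|\widehat V\|_{\Lip}=(\L_1+\L_2\L_2^{*,N})/(1-\beta\hkappa_2^N)$, which is exactly where assumption~(g) enters for you. Your route is shorter and closer in spirit to a performance-difference (Bellman-error) lemma; the paper's route has the side benefit of yielding the intermediate marginal-law estimate $\sup_{g\in\F}|\rE[g(\tx_1(t))]-\rE[g(\hx(t))]|\le\halpha_t$, which may be of independent interest but is not needed for the theorem itself.
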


\begin{proof}
We complete the proof in two steps.

\subsubsection*{Step 1:} We first prove that for any $t\geq0$, we have 
\begin{align}
\sup_{g \in \F} \left| \rE[g(\tx_1(t))] - \rE[g(\hx(t))] \right| \leq \halpha_t(N,\varepsilon)
\end{align}
where error bounds $\{\halpha_t(N,\varepsilon)\}_{t\geq0}$ are defined recursively as follows
\begin{align*}
\halpha_0(N,\varepsilon) &:=0 \\
\halpha_{t+1}(N,\varepsilon)&:= (\K_3 + \K_2 \, \L_1^{*,N}) \talpha_t(N,\varepsilon) + (\K_1+\K_2 \, \L_2^{*,N}) \, \halpha_t(N,\varepsilon) \\
&:= \hkappa_1^N \, \talpha_t(N,\varepsilon) + \hkappa_2^N \, \halpha_t(N,\varepsilon)
\end{align*}
Therefore, for all $t\geq 1$, we can write 
$$
\halpha_t(N,\varepsilon) = \hkappa_1^N \, \sum_{i=0}^{t-1} \left(\hkappa_2^N\right)^{i} \, \talpha_{t-1-i}(N,\varepsilon) 
$$

Since $\tx_1(0),\hx(0) \sim \mu^*$,  the claim is true for $t=0$. Suppose it is true for some $t\geq0$ and consider $t+1$. Fix any $g \in \F$. Then we have 
\footnotesize
\begin{align}
&\left| \rE[g(\tx_1(t+1))] - \rE[g(\hx(t+1))] \right| \nonumber \\
&= \left| \rE\left[ \int_{\sX} g(y) \, p(dy|\tx_1(t),\gamma^{*,N}(\e[\,\cdot\,|\,\tbx(t)],\tx_1(t)),\e[\,\cdot\,|\,\tbx(t)]) \right] - \rE\left[ \int_{\sX} g(y) \, p(dy|\hx(t),\gamma^{*,N}(\mu^*,\hx(t)),\mu^*) \right] \right| \nonumber \\
&\leq \left| \rE\left[ \int_{\sX} g(y) \, p(dy|\tx_1(t),\gamma^{*,N}(\e[\,\cdot\,|\,\tbx(t)],\tx_1(t)),\e[\,\cdot\,|\,\tbx(t)]) \right] - \rE\left[ \int_{\sX} g(y) \, p(dy|\tx_1(t),\gamma^{*,N}(\e[\,\cdot\,|\,\tbx(t)],\tx_1(t)),\mu^*) \right] \right| \label{thm2-eq1} \\
&+\left| \rE\left[ \int_{\sX} g(y) \, p(dy|\tx_1(t),\gamma^{*,N}(\e[\,\cdot\,|\,\tbx(t)],\tx_1(t)),\mu^*) \right] - \rE\left[ \int_{\sX} g(y) \, p(dy|\tx_1(t),\gamma^{*,N}(\mu^*,\tx_1(t)),\mu^*) \right] \right| \label{thm2-eq2} \\
&+\left| \rE\left[ \int_{\sX} g(y) \, p(dy|\tx_1(t),\gamma^{*,N}(\mu^*,\tx_1(t)),\mu^*) \right] - \rE\left[ \int_{\sX} g(y) \, p(dy|\hx(t),\gamma^{*,N}(\mu^*,\hx(t)),\mu^*) \right] \right| \label{thm2-eq3}
\end{align}
\normalsize
Let us now bound the terms (\ref{thm2-eq1}), (\ref{thm2-eq2}), and (\ref{thm2-eq3}).

\paragraph*{Bounding (\ref{thm2-eq1}):}

Note that we can write
\begin{align}
(\ref{thm2-eq1}) &= \bigg| \int_{\P(\sX)\times\sX^3} g(y) \, \bp(dy|x,\gamma^{*,N}(\nu,x),z) \, \nu(dz) \, {\cal L}(\tx_1(t)|\e[\,\cdot\,|\,\tbx(t)])(dx|\nu) \, {\cal L}(\e[\,\cdot\,|\,\tbx(t)])(d\nu) \nonumber \\
&- \int_{\P(\sX)\times\sX^3} g(y) \, \bp(dy|x,\gamma^{*,N}(\nu,x),z) \, \mu^*(dz) \, {\cal L}(\tx_1(t)|\e[\,\cdot\,|\,\tbx(t)])(dx|\nu) \, {\cal L}(\e[\,\cdot\,|\,\tbx(t)])(d\nu) \bigg| \label{eqqq1-1}
\end{align}
Define 
$$
l(z,\nu) := \int_{\sX\times\sX} g(y) \, \bp(dy|x,\gamma^{*,N}(\nu,x),z) \, {\cal L}(\tx_1(t)|\e[\,\cdot\,|\,\tbx(t)])(dx|\nu)
$$
Then we have 
\begin{align}
(\ref{eqqq1-1}) &= \left| \int_{\P(\sX)\times\sX} l(z,\nu) \, \nu(dz) \,  {\cal L}(\e[\,\cdot\,|\,\tbx(t)])(d\nu) - \int_{\P(\sX)\times\sX} l(z,\nu) \, \mu^*(dz) \,  {\cal L}(\e[\,\cdot\,|\,\tbx(t)])(d\nu) \right| \nonumber \\
&\leq \int_{\P(\sX)} \left| \int_{\sX} l(z,\nu) \, \nu(dz) - \int_{\sX} l(z,\nu) \, \mu^*(dz) \right|  {\cal L}(\e[\,\cdot\,|\,\tbx(t)])(d\nu) \nonumber \\
&\leq \int_{\P(\sX)} \sup_{\mu \in \P(\sX)} \left| \int_{\sX} l(z,\mu) \, \nu(dz) - \int_{\sX} l(z,\mu) \, \mu^*(dz) \right|  {\cal L}(\e[\,\cdot\,|\,\tbx(t)])(d\nu)  \nonumber \\
&= \rE\left[ \sup_{\mu \in \P(\sX)} \left| \int_{\sX} l(z,\mu) \, \e[dz|\,\tbx(t)] - \int_{\sX} l(z,\mu) \, \mu^*(dz) \right| \right] \label{eqqq1-2}
\end{align}
Note that for all $\mu$, $l(\sX,\mu) \subset [0,K]$ and for any $z,r \in \sX$, we have $|l(z,\mu)-l(r,\mu)| \leq \K_3 \, d_{\sX}(z,r)$. Hence $\frac{l(\cdot,\mu)}{\K_3} \in \F$ for all $\mu$. This implies that 
\begin{align}
(\ref{eqqq1-2}) &\leq \K_3 \, \rE\left[ \sup_{g \in \F} \left| \int_{\sX} g(z) \, \e[dz|\,\tbx(t)] - \int_{\sX} g(z) \, \mu^*(dz) \right| \right] \nonumber \\
&\leq \K_3 \, \talpha_t(N,\varepsilon) \,\, \text{(by Lemma~\ref{lemma8})}\label{thm2-eq1-final}
\end{align}

\paragraph*{Bounding (\ref{thm2-eq3}):}

Define 
$$
l(x) := \int_{\sX\times\sX} g(y) \, p(dy|x,\gamma^{*,N}(\mu^*,x),\mu^*) 
$$
Note that $l(\sX) \subset [0,K]$ and for all $x,y \in \sX$, we have $|l(x)-l(y)|\leq (\K_1+\K_2 \, \L_2^{*,N}) \, d_{\sX}(x,y)$. Hence $\frac{l}{\K_1+\K_2 \, \L_2^{*,N}} \in \F$. This implies the following bound on (\ref{thm2-eq3})
\begin{align}
(\ref{thm2-eq3})&= \left| \rE[l(\tx_1(t))] - \rE[l(\hx(t))] \right| \nonumber \\
&\leq (\K_1+\K_2 \, \L_2^{*,N}) \, \sup_{g \in \F} \left| \rE[g(\tx_1(t))] - \rE[g(\hx(t))] \right| \nonumber \\
&\leq  (\K_1+\K_2 \, \L_2^{*,N}) \, \halpha_t(N,\varepsilon) \,\, \text{(by induction hypothesis)} \label{thm2-eq3-final}
\end{align}

\paragraph*{Bounding (\ref{thm2-eq2}):}

We can bound (\ref{thm2-eq2}) as follows
\small
\begin{align}
&(\ref{thm2-eq2}) \nonumber \\
&\leq \rE\left[ \left| \int_{\sX} g(y) \, p(dy|\tx_1(t),\gamma^{*,N}(\e[\,\cdot\,|\,\tbx(t)],\tx_1(t)),\mu^*) - \int_{\sX} g(y) \, p(dy|\tx_1(t),\gamma^{*,N}(\mu^*,\tx_1(t)),\mu^*) \right| \right] \nonumber \\
&\leq \K_2 \, \L_1^{*,N} \, \rE[W_1(\e[\,\cdot\,|\,\tbx(t)],\mu^*)] \nonumber \\
&\leq \K_2 \, \L_1^{*,N} \, \talpha_t(N,\varepsilon) \,\, \text{(by Lemma~\ref{lemma8})} \label{thm2-eq2-final}
\end{align}
\normalsize
By combining the bounds (\ref{thm2-eq1-final}), (\ref{thm2-eq2-final}), and (\ref{thm2-eq3-final}), we obtain 
\begin{align*}
\sup_{g \in \F} \left| \rE[g(\tx_1(t+1))] - \rE[g(\hx(t+1))] \right|  \leq (\K_3 + \K_2 \, \L_1^{*,N}) \talpha_t(N,\varepsilon) + (\K_1+\K_2 \, \L_2^{*,N}) \, \halpha_t(N,\varepsilon)
\end{align*}
This completes the proof of Step~1.

\subsubsection*{Step 2:}

Secondly, we prove that for any $t\geq0$, we have 
\begin{align}
\left| \rE[c(\tx_1(t),\ta_1(t),\e[\,\cdot\,|\,\tbx(t)])] - \rE[c(\hx(t),\ha(t),\mu^*)] \right| \leq \balpha_t(N,\varepsilon)
\end{align}
where error bounds $\{\balpha_t(N,\varepsilon)\}_{t\geq0}$ are defined as follows
\begin{align*}
\balpha_{t}(N,\varepsilon)&:= (\L_2 \, \L_1^{*,N} + \L_3) \, \talpha_t(N,\varepsilon)  + (\L_1 + \L_2 \, \L_2^{*,N}) \, \halpha_t(N,\varepsilon)
\end{align*}
Therefore, for all $t\geq 1$, we can write 
$$
\balpha_t(N,\varepsilon) = \hkappa_1^N \, \sum_{i=1}^{t-1} \left(\hkappa_2^N\right)^{i-1} \, \talpha_{t-i}(N,\varepsilon) 
$$

Indeed, since $\tx_1(0),\hx(0) \sim \mu^*$, we have 
\begin{align}
&\left| \rE[c(\tx_1(0),\ta_1(0),\e[\,\cdot\,|\,\tbx(0)])] - \rE[c(\hx(0),\ha(0),\mu^*)] \right| \nonumber \\
&= \bigg| \int_{\P(\sX)\times\sX} c(x,\gamma^{*,N}(\nu,x),\nu) \, \mu^*(dx) \, {\cal L}(\e[\,\cdot\,|\,\tbx(0)])(d\nu) -\int_{\sX} c(x,\gamma^{*,N}(\mu^*,x),\mu^*) \, \mu^*(dx) \bigg| \nonumber \\
&\leq \bigg| \int_{\P(\sX)\times\sX} c(x,\gamma^{*,N}(\nu,x),\nu) \, \mu^*(dx) \, {\cal L}(\e[\,\cdot\,|\,\tbx(0)])(d\nu) \nonumber \\
&\phantom{xxxxxxxxxxxxx}
- \int_{\P(\sX)\times\sX} c(x,\gamma^{*,N}(\nu,x),\mu^*) \, \mu^*(dx) \, {\cal L}(\e[\,\cdot\,|\,\tbx(0)])(d\nu) \bigg| \nonumber \\
&+ \left| \int_{\P(\sX)\times\sX} c(x,\gamma^{*,N}(\nu,x),\mu^*) \, \mu^*(dx) \, {\cal L}(\e[\,\cdot\,|\,\tbx(0)])(d\nu) - \int_{\sX} c(x,\gamma^{*,N}(\mu^*,x),\mu^*) \, \mu^*(dx) \right| \nonumber \\
&\leq \left| \int_{\P(\sX)\times\sX} l(z,\nu) \, \nu(dz) \, {\cal L}(\e[\,\cdot\,|\,\tbx(0)])(d\nu) - \int_{\P(\sX)\times\sX} l(z,\nu) \, \mu^*(dz) \, {\cal L}(\e[\,\cdot\,|\,\tbx(0)])(d\nu) \right| \nonumber \\
&\text{(where $l(z,\nu):= \int_{\sX} \bc(x,\gamma^{*,N}(x,\nu),z) \, \mu^*(dx)$)} \nonumber \\
&+  \int_{\P(\sX)} \, \left| \int_{\sX} c(x,\gamma^{*,N}(\nu,x),\mu^*) \, \mu^*(dx) - \int_{\sX} c(x,\gamma^{*,N}(\mu^*,x),\mu^*) \, \mu^*(dx) \right| \, {\cal L}(\e[\,\cdot\,|\,\tbx(0)])(d\nu) \nonumber \\
&\leq \int_{\P(\sX)} \, \sup_{\mu \in \P(\sX)} \, \left| \int_{\sX} l(z,\mu) \, \nu(dz) - \int_{\sX} l(z,\mu) \, \mu^*(dz) \right|  \, {\cal L}(\e[\,\cdot\,|\,\tbx(0)])(d\nu) \nonumber \\
&+ \L_2 \, \L_1^{*,N} \, \int_{\P(\sX)} W_1(\nu,\mu^*) \, {\cal L}(\e[\,\cdot\,|\,\tbx(0)])(d\nu) \nonumber \\ 
&\leq \L_3 \, \rE\left[\sup_{g \in \F} \left| \int_{\sX} g(x) \, \e[dx|\,\tbx(0)]-\int_{\sX} g(x) \, \mu^*(dx) \right| \right] + \L_2 \, \L_1^{*,N} \rE[W_1(\e[\,\cdot\,|\,\tbx(0)],\mu^*)] \nonumber \\ 
&\text{(as $\frac{l(\cdot,\mu)}{\L_3}\in \F$, \, $\forall \mu$)} \nonumber \\
&\leq (\L_2 \, \L_1^{*,N} + \L_3) \, \talpha_0(N,\varepsilon) \,\, \text{(by Lemma~\ref{lemma8})} \nonumber 
\end{align}
Hence the claim is true for $t=0$. Now consider any $t\geq1$. Then we have 
\begin{align}
&\left| \rE[c(\tx_1(t),\ta_1(t),\e[\,\cdot\,|\,\tbx(t)])] - \rE[c(\hx(t),\ha(t),\mu^*)] \right| \nonumber \\ 
&\leq \left| \rE[c(\tx_1(t),\gamma^{*,N}(\e[\,\cdot\,|\,\tbx(t)],\tx_1(t)),\e[\,\cdot\,|\,\tbx(t)])] - \rE[c(\tx_1(t),\gamma^{*,N}(\e[\,\cdot\,|\,\tbx(t)],\tx_1(t)),\mu^*)]  \right| \label{step2-eq1} \\
&+ \left| \rE[c(\tx_1(t),\gamma^{*,N}(\e[\,\cdot\,|\,\tbx(t)],\tx_1(t)),\mu^*)]  - \rE[c(\tx_1(t),\gamma^{*,N}(\mu^*,\tx_1(t)),\mu^*)]  \right| \label{step2-eq2} \\
&+ \left| \rE[c(\tx_1(t),\gamma^{*,N}(\mu^*,\tx_1(t)),\mu^*)] - \rE[c(\hx(t),\gamma^{*,N}(\mu^*,\hx(t)),\mu^*)]  \right| \label{step2-eq3}
\end{align}
Now let us bound the terms (\ref{step2-eq1}), (\ref{step2-eq2}), and (\ref{step2-eq3}). 

\paragraph*{Bounding (\ref{step2-eq1}):}

To this end, define 
$$
l(z,\nu) := \int_{\sX} \bc(x,\gamma^{*,N}(\nu,x),z) \, {\cal L}(\tx_1(t)|\e[\,\cdot\,|\,\tbx(t)])(dx|\nu)
$$
Then we can write (\ref{step2-eq1}) as follows
\begin{align}
 &(\ref{step2-eq1}) = \left| \int_{\P(\sX)\times\sX} l(z,\nu) \, \nu(dz) \, {\cal L}(\e[\,\cdot\,|\,\tbx(t)])(d\nu) - \int_{\P(\sX)\times\sX} l(z,\nu) \, \mu^*(dz) \, {\cal L}(\e[\,\cdot\,|\,\tbx(t)])(d\nu) \right| \nonumber \\
 &\leq \int_{\P(\sX)} \sup_{\mu \in \P(\sX)} \left|  \int_{\sX} l(z,\mu) \, \nu(dz)  - \int_{\sX} l(z,\mu) \, \mu^*(dz) \right| \, {\cal L}(\e[\,\cdot\,|\,\tbx(t)])(d\nu) \nonumber \\
 &\leq \L_3 \, \rE\left[\sup_{g \in \F} \left| \int_{\sX} g(x) \, \e[dx|\,\tbx(t)] - \int_{\sX} g(x) \, \mu^*(dx) \right| \right] \,\, \text{(as $\frac{l(\cdot,\mu)}{\L_3}\in \F$, \, $\forall \mu$)} \nonumber \\
 &\leq \L_3 \, \talpha_t(N,\varepsilon) \,\, \text{(by Lemma~\ref{lemma8})}\label{step2-eq1-final}
\end{align}

\paragraph*{Bounding (\ref{step2-eq2}):}

Note that we have
\begin{align}
(\ref{step2-eq2}) &\leq \L_2 \, \L_1^{*,N} \, \rE[W_1(\e[\,\cdot\,|\,\tbx(t)],\mu^*)] \leq \L_2 \, \L_1^{*,N} \,\talpha_t(N,\varepsilon) \,\, \text{(by Lemma~\ref{lemma8})} \label{step2-eq2-final}
\end{align}

\paragraph*{Bounding (\ref{step2-eq3}):}

To this end, define $l(x) := c(x,\gamma^{*,N}(\mu^*,x),\mu^*)$. Then we can bound
(\ref{step2-eq3}) as follows
\begin{align}
(\ref{step2-eq3}) &= |\rE[l(\tx_1(t))]-\rE[l(\hx(t))]| \nonumber \\
&\leq (\L_1 + \L_2 \, \L_2^{*,N}) \, \sup_{g \in \F} |\rE[g(\tx_1(t))]-\rE[g(\hx(t))]| \,\, \text{(as $\frac{l}{\L_1 + \L_2 \, \L_2^{*,N}} \in \F$)} \nonumber \\
&\leq (\L_1 + \L_2 \, \L_2^{*,N}) \, \halpha_t(N,\varepsilon) \,\, \text{(by Step 1)} \label{step2-eq3-final}
\end{align}

Now if we combine the bounds (\ref{step2-eq1-final}), (\ref{step2-eq2-final}), and (\ref{step2-eq3-final}), we obtain the following
\begin{align}
&\left| \rE[c(\tx_1(t),\ta_1(t),\e[\,\cdot\,|\,\tbx(t)])] - \rE[c(\hx(t),\ha(t),\mu^*)] \right| \nonumber \\
&\leq (\L_2 \, \L_1^{*,N} + \L_3) \, \talpha_t(N,\varepsilon)  + (\L_1 + \L_2 \, \L_2^{*,N}) \, \halpha_t(N,\varepsilon) \nonumber
\end{align}
This completes the proof of Step 2.

Now it is time to complete the proof. Note that we have 
\begin{align}
&|J_1(\mu^*;(\gamma^{*,N},\pi^*,\ldots,\pi^*))-J(\mu^*;\gamma^{*,N})| \nonumber \\
&\leq \sum_{t=0}^{\infty} \beta^t \,  \left| \rE[c(\tx_1(t),\ta_1(t),\e[\,\cdot\,|\,\tbx(t)])] - \rE[c(\hx(t),\ha(t),\mu^*)] \right| \nonumber \\
&\leq  (\L_2 \, \L_1^{*,N} + \L_3) \, \sum_{t=0}^{\infty} \beta^t \,\talpha_t(N,\varepsilon) + (\L_1 + \L_2 \, \L_2^{*,N}) \, \sum_{t=0}^{\infty} \beta^t \,\halpha_t(N,\varepsilon) \nonumber \\
&= (\L_2 \, \L_1^{*,N} + \L_3) \, \sum_{t=0}^{\infty} \beta^t \, \left\{\kappa_1^t \, \alpha(N) + \kappa_2^N \, \sqrt{\frac{2}{N-1}} \sum_{i=0}^{t-1} \kappa_1^i  + 2\varepsilon\sum_{i=0}^{t-1} \kappa_1^i\right\} \nonumber \\
&+ (\L_1 + \L_2 \, \L_2^{*,N}) \, \sum_{t=0}^{\infty} \beta^t \, \left\{\hkappa_1^N \, \sum_{i=0}^{t-1} \left(\hkappa_2^N\right)^{i} \, \talpha_{t-1-i}(N,\varepsilon) \right\} \nonumber \\
&=  (\L_2 \, \L_1^{*,N} + \L_3) \, \left\{\frac{\alpha(N)}{1-\beta\kappa_1}+\frac{\beta\left(\kappa_2^N \, \sqrt{\frac{2}{N-1}}+2\varepsilon\right)}{(1-\beta)(1-\beta\kappa_1)} \right\} \nonumber \\
&+ (\L_1+\L_2 \, \L_2^{*,N}) \, \hkappa_1^N \, \beta \, \frac{1}{1-\beta \, \hkappa_2^N} \, \sum_{t=0}^{\infty} \beta^t \, \talpha_t(N,\varepsilon) \,\, \text{(as $\beta \, \hkappa_2^N < 1$ by assumption (g))} \nonumber \\
&= \left\{\L_2 \, \L_1^{*,N} + \L_3 + (\L_1+\L_2 \, \L_2^{*,N}) \, \hkappa_1^N \, \beta \, \frac{1}{1-\beta \, \hkappa_2^N} \right\} \, \left\{\frac{\alpha(N)}{1-\beta\kappa_1}+\frac{\beta\left(\kappa_2^N \, \sqrt{\frac{2}{N-1}}+2\varepsilon\right)}{(1-\beta)(1-\beta\kappa_1)} \right\} 
\end{align}
This completes the proof. 
\end{proof}

So far we proved two important results:
\begin{itemize}
\item[(1) ] Theorem~\ref{theorem1}: Given any $\varepsilon>0$, we have 
$$
|J_1(\mu^*;\bpi^*)-J(\mu^*;\pi^*)| := |J_1(\mu^*;\bpi^*)- \inf_{\pi \in \Pi_l}J(\mu^*;\pi)| \leq \Theta_1(N,\varepsilon)
$$
where 
$$
\Theta_1(N,\varepsilon) :=  (\L_1+\L_2 \, \L^*+\L_3) \, \left\{ \frac{\alpha(N) }{1-\beta \, \kappa_1} +  \frac{\beta\, \left(\kappa_2 \, \sqrt{\frac{2}{N}}+2\varepsilon\right)}{(1-\beta \, \kappa_1)(1-\beta)}\right\}
$$
\item[(2)] Theorem~\ref{theorem2}: Given any $\varepsilon>0$, we have 
\begin{align*}
&|J_1(\mu^*;(\gamma^{*,N},\pi^*,\ldots,\pi^*))-J(\mu^*;\gamma^{*,N})| \\
&\phantom{xxxxxxxxx}:= |\inf_{\pi \in \Pi} J_1(\mu^*;(\pi,\pi^*,\ldots,\pi^*))- J(\mu^*;\gamma^{*,N})| \leq \Theta_2(N,\varepsilon)
\end{align*}
where 
\small
$$
\hspace{-10pt}\Theta_2(N,\varepsilon) := \left\{\L_2 \, \L_1^{*,N} + \L_3 + (\L_1+\L_2 \, \L_2^{*,N}) \, \hkappa_1^N \, \beta \, \frac{1}{1-\beta \, \hkappa_2^N} \right\} \, \left\{\frac{\alpha(N)}{1-\beta\kappa_1}+\frac{\beta\left(\kappa_2^N \, \sqrt{\frac{2}{N-1}}+2\varepsilon\right)}{(1-\beta)(1-\beta\kappa_1)} \right\}
$$
\normalsize
\end{itemize}
The constants $\kappa_2 := \diam(\sX) \, \N(\varepsilon,\F)$ and $\kappa_2^N := \diam(\sX) \left\{ \N(\varepsilon,\F) + \sqrt{\frac{8}{N-1}} \right\}$ also depend on $\varepsilon$, meaning that they increase as $\varepsilon$ approaches $0$. However, because these constants are multiplied by the terms $\sqrt{\frac{2}{N}}$ and $\sqrt{\frac{2}{N-1}}$, respectively, the bounds $\Theta_1(N,\varepsilon)$ and $\Theta(N,\varepsilon)$ can be made arbitrarily small by first choosing a small value for $\varepsilon$ and then choosing a large value for $N$. We are now ready to prove the main result of this paper.

\begin{theorem}\label{theorem3}
If MFE policy $\pi^*$ is applied by all agents in the $N$-agent linear mean-field game with initial distribution $\mu^{*,\otimes N}$, then the joint policy $(\pi^*,\ldots,\pi^*)$ is $\Theta_1(N,\varepsilon)+\Theta_2(N,\varepsilon)$-Nash equilibrium for any $\varepsilon>0$; that is,
\begin{align} \label{main1}
J_i(\mu^*;\bpi^*) \leq \inf_{\pi \in \Pi} J_i(\mu^*;\pi,\bpi^*_{-i}) + \Theta_1(N,\varepsilon)+\Theta_2(N,\varepsilon)
\end{align}
for all $i=1,\ldots,N$.
\end{theorem}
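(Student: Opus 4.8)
The plan is to reduce the whole statement to agent~$1$ by symmetry and then to chain Theorem~\ref{theorem1} and Theorem~\ref{theorem2} together through the infinite-population optimality of the equilibrium policy $\pi^*$. First I would note that when all agents except $i$ use the stationary policy $\pi^*$ and the initial states are i.i.d.\ $\sim\mu^*$, the $N$-agent system is exchangeable in the agent labels; relabeling agents $1$ and $i$ leaves the law of the relevant trajectories unchanged, so $J_i(\mu^*;\bpi^*)=J_1(\mu^*;\bpi^*)$ and $\inf_{\pi\in\Pi}J_i(\mu^*;\pi,\bpi^*_{-i})=\inf_{\pi\in\Pi}J_1(\mu^*;\pi,\bpi^*_{-1})$. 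Hence it suffices to establish (\ref{main1}) for $i=1$.

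Next I would identify the right-hand side with the best-response cost analyzed earlier: when agents $2,\dots,N$ apply $\pi^*$, agent~$1$ faces $\text{MDP}_N$, whose optimal value function is $V^*$ and whose optimal policy $\gamma_{\opt}^N$ agrees, by Lemma~\ref{lemma4} and the identity $H^N=T^N$ on $\sS^N$, with $\gamma^{*,N}\in\Pi_\e$ on the reachable set $\sS^N$. Since $(\e[\,\cdot\,|\,\bx(0)],x_1(0))\in\sS^N$ and $\sS^N$ is invariant along the state evolution, we obtain
$$
\inf_{\pi\in\Pi}J_1(\mu^*;\pi,\bpi^*_{-1})=J_1\bigl(\mu^*;(\gamma^{*,N},\pi^*,\dots,\pi^*)\bigr).
$$
Then the three-step chain is: by Theorem~\ref{theorem1}, $J_1(\mu^*;\bpi^*)\le J(\mu^*;\pi^*)+\Theta_1(N,\varepsilon)$; since $(\mu^*,\pi^*)$ is a mean-field equilibrium we have $\pi^*\in\Lambda(\mu^*)$, so $J(\mu^*;\pi^*)=\inf_{\pi\in\Pi_l}J(\mu^*;\pi)\le J(\mu^*;\gamma^{*,N})$ because $\gamma^{*,N}(\mu^*,\cdot)$ is an admissible element of $\Pi_l$ for $\text{MDP}_{\mu^*}$; and by Theorem~\ref{theorem2}, $J(\mu^*;\gamma^{*,N})\le J_1(\mu^*;(\gamma^{*,N},\pi^*,\dots,\pi^*))+\Theta_2(N,\varepsilon)$. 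Concatenating and using the identification above yields
$$
J_1(\mu^*;\bpi^*)\le\inf_{\pi\in\Pi}J_1(\mu^*;\pi,\bpi^*_{-1})+\Theta_1(N,\varepsilon)+\Theta_2(N,\varepsilon),
$$
and the symmetry reduction extends this to every $i=1,\dots,N$.

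The main obstacle is bookkeeping rather than conceptual: one must carefully justify that the best response \emph{within the global-information class} $\Pi$ is realized by the policy $\gamma^{*,N}$ constructed from $V^*$. This rests on Lemma~\ref{lemma4} (the minimizer of the optimality equation depends only on $(\e[\,\cdot\,|\,\bx],x_1)$), on the invariance of $\sS^N$ under the dynamics generated by any $\pi\in\Pi$, and on $H^N=T^N$ on $\sS^N$ so that $\gamma^{*,N}$ and $\gamma_{\opt}^N$ have the same cost in response to $\bpi^*_{-1}$. Everything else is a routine concatenation of the two approximation theorems with the equilibrium inequality $\pi^*\in\Lambda(\mu^*)$; I would also record explicitly that the exchangeability argument is applied only at the profile $\bpi^*$ and at the (label-symmetric) best-response value, so an arbitrary deviation of agent~$1$ causes no difficulty.
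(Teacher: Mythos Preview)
Your proposal is correct and follows essentially the same route as the paper's proof: reduce to agent~$1$ by symmetry, identify the infimum over $\Pi$ with $J_1(\mu^*;(\gamma^{*,N},\pi^*,\dots,\pi^*))$, and chain Theorem~\ref{theorem1}, the inequality $J(\mu^*;\pi^*)\le J(\mu^*;\gamma^{*,N})$ (from $\pi^*\in\Lambda(\mu^*)$), and Theorem~\ref{theorem2}. Your additional bookkeeping regarding Lemma~\ref{lemma4}, the invariance of $\sS^N$, and $H^N=T^N$ on $\sS^N$ is more explicit than the paper's presentation but is exactly the justification already established in the sections preceding the theorem.
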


\begin{proof}
Since the game model is symmetric, it is sufficient to prove (\ref{main1}) for agent~1. In this case, by Theorem~\ref{theorem1} and Theorem~\ref{theorem2}, we have
\begin{align*}
&J_1(\mu^*;\bpi^*) - \inf_{\pi \in \Pi} J_1(\mu^*;\pi,\pi^*,\ldots,\pi^*) = J_1(\mu^*;\bpi^*) - J_1(\mu^*;\gamma^{*,N},\pi^*,\ldots,\pi^*) \\
&= J_1(\mu^*;\bpi^*) - J(\mu^*;\pi^*) + J(\mu^*;\pi^*) - J(\mu^*;\gamma^{*,N}) + J(\mu^*;\gamma^{*,N}) - J_1(\mu^*;\gamma^{*,N},\pi^*,\ldots,\pi^*) \\
&\leq |J_1(\mu^*;\bpi^*) - J(\mu^*;\pi^*)| + |J(\mu^*;\gamma^{*,N}) - J_1(\mu^*;\gamma^{*,N},\pi^*,\ldots,\pi^*)| \\
&\text{(as $J(\mu^*;\pi^*) - J(\mu^*;\gamma^{*,N}) = \inf_{\pi \in \Pi_l} J(\mu^*;\pi) - J(\mu^*;\gamma^{*,N}) \leq 0$)} \\
&\leq \Theta_1(N,\varepsilon)+\Theta_2(N,\varepsilon)
\end{align*}
This completes the proof.
\end{proof}

\section{Infinite Population Game as GNEP}\label{sec4}

In this section, we express the game problem in the infinite population limit as a generalized Nash equilibrium problem (GNEP). We then use this new formulation to compute the mean-field equilibrium using existing algorithms that have been developed for GNEPs in the literature \cite{FaKa10}.

In mean-field games, the most naive approach for computing MFE is the following.
Note that, given any limiting mean-field term $\mu \in \P(\sX)$, the optimal control problem for the mean-field game reduces to finding an optimal stationary policy for a Markov decision process (MDP). Hence, one can compute the optimal policy $\pi$ for $\mu$ using various algorithms such as value iteration, policy iteration, and $Q$-iteration. Then, given this optimal policy $\pi$, one can compute the invariant distribution $\mu^+$ of the transition probability $p_{\mu}^{\pi}(\,\cdot\,|x)$. This overall process defines an operator $H$ from $\P(\sX)$ to $\P(\sX)$ as follows $\mu^+ = H(\mu)$. If one can prove that $H$ is a contraction, then by Banach fixed point theorem, the iterates in this recursion converges to the unique fixed point $\mu^*$ of the operator $H$; that is, $\mu^*=H(\mu^*)$, and so, $\mu^*$ and the corresponding optimal policy $\pi^*$ constitute a MFE. This approach was indeed adapted in \cite{AnKaSa20} for classical mean-field games. However, it turns out that to make $H$ contraction, we need quite restrictive conditions on the system components of the model. 

In this paper, we follow the following alternative route. Given any limiting mean-field term $\mu \in \P(\sX)$, we formulate the corresponding MDP as a linear program (LP) using occupation measures, which is a well-established method in stochastic control. Then, we incorporate the mean-field consistency condition to this LP formulation and obtain a generalized Nash equilibrium problem. By adapting one of the methods developed for solving GNEPs to our problem, we establish an algorithm for computing MFE.  

\subsection{GNEP Formulation}\label{sub1sec3}

Note that, given any $\mu \in \P(\sX)$, the corresponding optimal control problem is an MDP. Therefore, in this section, we first introduce the LP formulation of this MDP using occupation measures. We refer the reader to \cite{HeGo00} and \cite[Chapter 6]{HeLa96} for the LP formulation of MDPs with discounted cost.

For any metric space $\sE$, let $\M(\sE)$ denote the set of finite signed measures on $\sE$ and $B(\sE)$ denotes the set of bounded measurable real functions. Consider the vector spaces $\bigl(\M(\sX\times\sA),B(\sX\times\sA)\bigr)$ and $\bigl(\M(\sX),B(\sX)\bigr)$. Let us define bilinear forms on $\bigl(\M(\sX\times\sA),B(\sX\times\sA)\bigr)$ and on $\bigl(\M(\sX),B(\sX)\bigr)$ as follows
\begin{align}
\langle \zeta,v  \rangle &\coloneqq \int_{\sX\times\sA} v(x,a) \, \zeta(dx,da) \label{n-eqqq1} \\
\langle \nu,u  \rangle &\coloneqq \int_{\sX} u(x) \, \nu(dx) \label{n-eqqq2}
\end{align}
where $\zeta \in \M(\sX\times\sA)$, $v \in B(\sX\times\sA)$, $\nu \in \M(\sX)$, and $u \in B(\sX)$. The bilinear form in (\ref{n-eqqq1}) constitutes duality between $\M(\sX\times\sA)$ and $B(\sX\times\sA)$, and the bilinear form in (\ref{n-eqqq2}) constitutes duality between $\M(\sX)$ and $B(\sX)$. For any $\zeta \in \M(\sX\times\sA)$, let $\hat{\zeta} \in \M(\sX)$ denote the marginal of $\zeta$ on $\sX$, i.e.,
\begin{align}
\hat{\zeta}(\,\cdot\,)=\zeta(\,\cdot\,\times \sA) \nonumber
\end{align}
We define the linear map $\T_{\mu}: \M(\sX\times\sA) \rightarrow \M(\sX)$ by
\begin{align}
{\T}_{\mu}\zeta(\,\cdot\,) &= \hat{\zeta}(\,\cdot\,) - \beta \int_{\sX\times\sA} p_{\mu}(\,\cdot\,|x,a) \, \zeta(dx,da) =: \hat{\zeta} - \beta \, \zeta \, p_{\mu} \nonumber 
\end{align}
which depends on $\mu$.

Recall that $\text{MDP}_{\mu}$ has the following components
\begin{align*}
\left\{\sX,\sA,c_{\mu},p_{\mu},\mu\right\}
\end{align*}
where 
\begin{align*}
c_{\mu}(x,a) &:= c(x,a,\mu) = \int_{\sX} \bc(x,a,z) \, \mu(dz) \\
p_{\mu}(\,\cdot\,|\,x,a) &:= p(\,\cdot\,|\,x,a,\mu) = \int_{\sX} \bp(\,\cdot\,|\,x,a,z) \, \mu(dz)
\end{align*}
It is indeed equivalent to the following equality constrained linear program \cite[Lemma 3.3 and Section 4]{HeGo00}:
\begin{align}
\text{                         }&\text{minimize}_{\zeta\in \M_+(\sX\times\sA)} \text{ } \langle \zeta,c_{\mu} \rangle
\nonumber \\*
&\text{subject to  } {\T}_{\mu}(\zeta) = (1-\beta)\mu  \label{aaaaa}
\end{align}
Indeed, for any policy $\pi$, define the $\beta$-discount expected occupation measure as
\begin{align}
\zeta^{\pi}(C) := (1-\beta) \sum_{t=0}^{\infty} \beta^t \, \sPr \biggl[ (x(t),a(t)) \in C \biggr], \text{ } C \in \B(\sX\times\sA) \nonumber
\end{align}
Note that $\zeta^{\pi}$ is a probability measure on $\sX \times \sA$ as a result of the normalizing constant $(1-\beta)$. One can prove that $\zeta^{\pi}$ satisfies
\begin{align}
\hat{\zeta}^{\pi}(\,\cdot\,) = (1-\beta) \, \mu(\,\cdot\,) + \beta \int_{\sX \times \sA} p_{\mu}(\,\cdot\,|x,a) \, \zeta^{\pi}(dx,da) \label{occup}
\end{align}
Conversely, if any finite measure $\zeta$ satisfies (\ref{occup}), then it is a $\beta$-discount expected occupation measure of some policy $\pi$ \cite[Lemma 3.3]{HeGo00}. Using the $\beta$-discount expected occupation measure, we can write
\begin{align}
J(\mu;\pi) = \frac{1}{(1-\beta)} \, \langle \zeta^{\pi}, c_{\mu} \rangle  \nonumber
\end{align}
Therefore, since $(1-\beta)$ is just a constant, $\text{MDP}_{\mu}$  is equivalent to the above linear program. Using LP formulation, we first establish the following result.

\begin{lemma}\label{com-lemma1}
Let $(\zeta^*,\mu^*) \in \M(\sX\times\sA)_+\times\M(\sX)_+$ be a pair with the following properties
\begin{itemize}
\item[(a)] Given $\mu^*$, $\zeta^*$ is the optimal solution to the above LP formulation of $\text{MDP}_{\mu^*}$.
\item[(b)] Given $\zeta^*$, $\mu^*$ satisfies the following linear equation
$$
\mu^*(\,\cdot\,) = \int_{\sX\times\sA} p_{\mu^*}(\,\cdot\,|x,a) \, \zeta^*(dx,da)
$$
\end{itemize} 
If we disintegrate $\zeta^*$ as follows $\zeta^*(dx,da) = \pi^*(da|x) \, \hat{\zeta}^*(dx)$, then $(\mu^*,\pi^*)$ is MFE.
\end{lemma}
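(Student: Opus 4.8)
The goal is to verify the two defining conditions of a stationary mean-field equilibrium for the pair $(\mu^*,\pi^*)$ obtained by disintegrating $\zeta^*$, namely that $\pi^*\in\Lambda(\mu^*)$ (optimality of $\pi^*$ for $\text{MDP}_{\mu^*}$) and that $\mu^*\in\Phi(\pi^*)$ (invariance of $\mu^*$ under $p_{\mu^*}^{\pi^*}$). The first will follow from hypothesis (a) via the LP--MDP correspondence of \cite{HeGo00}, and the second from hypothesis (b) after one short manipulation of the affine constraint $\T_{\mu^*}(\zeta^*)=(1-\beta)\mu^*$.

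First I would argue $\pi^*\in\Lambda(\mu^*)$. By \cite[Lemma 3.3]{HeGo00}, the feasible set of the linear program (\ref{aaaaa}) consists exactly of the $\beta$-discount expected occupation measures $\zeta^\pi$ (which, by the normalization $(1-\beta)$, are probability measures on $\sX\times\sA$), and disintegrating such a $\zeta^\pi$ in the form $\pi^\prime(da|x)\,\widehat{\zeta^\pi}(dx)$ recovers a stationary policy $\pi^\prime$ whose occupation measure is again $\zeta^\pi$; moreover $J(\mu^*;\pi)=\frac{1}{1-\beta}\langle\zeta^\pi,c_{\mu^*}\rangle$. Applying this to the optimal $\zeta^*$ from (a) with its disintegration $\zeta^*(dx,da)=\pi^*(da|x)\,\hat\zeta^*(dx)$, and using that $\zeta^*$ minimizes $\langle\cdot\,,c_{\mu^*}\rangle$ over all feasible $\zeta$, we get $J(\mu^*;\pi^*)=\inf_{\pi}J(\mu^*;\pi)$, so $\pi^*$ is optimal for $\text{MDP}_{\mu^*}$; by Lemma~\ref{lemma1} this optimal stationary policy is unique, so in particular $\pi^*\in\Lambda(\mu^*)$ and it is exactly the disintegration kernel appearing in hypothesis (b).

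Next I would establish $\mu^*\in\Phi(\pi^*)$. The equality constraint $\T_{\mu^*}(\zeta^*)=(1-\beta)\mu^*$ rearranges (cf.\ (\ref{occup})) to $\hat\zeta^*(\,\cdot\,)=(1-\beta)\mu^*(\,\cdot\,)+\beta\int_{\sX\times\sA}p_{\mu^*}(\,\cdot\,|x,a)\,\zeta^*(dx,da)$. Substituting hypothesis (b), which states $\int_{\sX\times\sA}p_{\mu^*}(\,\cdot\,|x,a)\,\zeta^*(dx,da)=\mu^*(\,\cdot\,)$, yields $\hat\zeta^*=(1-\beta)\mu^*+\beta\mu^*=\mu^*$; in particular $\mu^*$ is a probability measure since $\hat\zeta^*$ is. Feeding $\hat\zeta^*=\mu^*$ and the disintegration $\zeta^*(dx,da)=\pi^*(da|x)\,\mu^*(dx)$ back into hypothesis (b) gives
$$
\mu^*(\,\cdot\,)=\int_{\sX\times\sA}p_{\mu^*}(\,\cdot\,|x,a)\,\pi^*(da|x)\,\mu^*(dx)=\int_{\sX}p_{\mu^*}^{\pi^*}(\,\cdot\,|x)\,\mu^*(dx),
$$
so $\mu^*$ is an invariant distribution of $p_{\mu^*}^{\pi^*}$, i.e.\ $\mu^*\in\Phi(\pi^*)$. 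Combining the two steps, $(\mu^*,\pi^*)$ is a stationary mean-field equilibrium by definition.

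\textbf{Main obstacle.} The only delicate part is the bookkeeping in the second paragraph: one must ensure that the kernel $\pi^*$ produced by disintegrating the \emph{optimal} occupation measure $\zeta^*$ is genuinely optimal for $\text{MDP}_{\mu^*}$ (rather than merely feasible), that measurability of the disintegration is available, and that this kernel coincides with the one referenced in hypothesis (b) — this is precisely where the exact statement of the LP--MDP duality from \cite[Lemma 3.3]{HeGo00} and the uniqueness assertion of Lemma~\ref{lemma1} are needed. Everything downstream is a one-line substitution into the affine constraint.
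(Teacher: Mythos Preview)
Your approach is correct and follows the same two-step strategy as the paper: use the LP--MDP correspondence for $\pi^*\in\Lambda(\mu^*)$, establish $\hat\zeta^*=\mu^*$, then read off invariance from (b). Your derivation of $\hat\zeta^*=\mu^*$ by direct substitution of (b) into the affine constraint is in fact cleaner than the paper's, which re-derives the same identity via the occupation-measure series expansion $\zeta^*=(1-\beta)\sum_t\beta^t\,\sPr[(x(t),a(t))\in\cdot]$ and an index shift; the series computation is just a rearrangement of the constraint and is not needed.

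There is, however, a small circularity in your normalization step. You invoke \cite[Lemma 3.3]{HeGo00} in the first paragraph to say feasible $\zeta$ are occupation measures (hence probability measures), and then in the second paragraph conclude ``$\mu^*$ is a probability measure since $\hat\zeta^*$ is.'' But the LP--MDP correspondence presupposes that the initial measure $\mu^*$ is already a probability measure; otherwise $p_{\mu^*}(\sX\mid x,a)=\mu^*(\sX)\neq 1$ and the kernel is not stochastic, so feasible $\zeta$ need not have unit mass. The paper addresses this point first: from (b) in total mass one gets $\mu^*(\sX)=\zeta^*(\sX\times\sA)\,\mu^*(\sX)$, hence $\zeta^*(\sX\times\sA)=1$, and then the constraint yields $\mu^*(\sX)=1$. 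Equivalently, your algebraic substitution already gives $\hat\zeta^*=\mu^*$ without any normalization, so $\zeta^*(\sX\times\sA)=\mu^*(\sX)$; feeding this into the total-mass version of (b) gives $\mu^*(\sX)=\mu^*(\sX)^2$, forcing $\mu^*(\sX)=1$. Establish normalization by one of these routes \emph{before} invoking the LP--MDP duality, and your proof is complete.
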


\begin{proof}
Note that since $\mu^*$ and $\zeta^*$ are not assumed to be probability measures at the beginning, we need to establish this first. Since 
$$
\hat{\zeta}^* = (1-\beta) \, \mu^* + \beta \, \zeta^* \, p_{\mu^*}
$$
we have $\zeta^*(\sX\times\sA) = (1-\beta) \, \mu^*(\sX) + \beta \, \zeta^*(\sX\times\sA) \, \mu^*(\sX)$. Similarly, since 
$$
\mu^* = \zeta^* \, p_{\mu^*}
$$
we have $\mu^*(\sX) = \zeta^*(\sX\times\sA) \, \mu^*(\sX)$, which implies that $\zeta^*$ is a probability measure. In view of this and using the first identity, we obtain the following
$$
1 = (1-\beta) \, \mu^*(\sX) + \beta \, \mu^*(\sX) = \mu^*(\sX)
$$ 
that is, $\mu^*$ is also a probability measure. 

Note that $\zeta^*$ is the optimal occupation measure of the LP formulation of 
$\text{MDP}_{\mu^*}$, and so, $\pi^*$ is the optimal policy. Hence, $\pi^* \in \Lambda(\mu^*)$. Furthermore, since 
\begin{align}
\zeta^{*}(C) := (1-\beta) \sum_{t=0}^{\infty} \beta^t \, \sPr \biggl[ (x(t),a(t)) \in C \biggr], \text{ } C \in \B(\sX\times\sA) \nonumber
\end{align}
we have 
\begin{align*}
\zeta^* \, p_{\mu^*}(\,\cdot\,) &= \int_{\sX\times\sA} p_{\mu^*}(\,\cdot\,|x,a) \, \zeta^*(dx,da) \\
&=  \int_{\sX\times\sA} p_{\mu^*}(\,\cdot\,|x,a) \, \bigg\{ (1-\beta) \sum_{t=0}^{\infty} \beta^t \, \sPr \biggl[ (x(t),a(t)) \in dx \times da \biggr] \bigg\} \\
&= (1-\beta) \sum_{t=0}^{\infty} \beta^t \bigg\{ \int_{\sX\times\sA} p_{\mu^*}(\,\cdot\,|x,a) \, \sPr \biggl[ (x(t),a(t)) \in dx \times da \biggr] \bigg\} \\
&= (1-\beta) \sum_{t=0}^{\infty} \beta^t \, \sPr \biggl[ x(t+1) \in \,\cdot\, \biggr] \\
&= \frac{1-\beta}{\beta} \, \sum_{t=1}^{\infty} \beta^t \, \sPr \biggl[ x(t+1) \in \,\cdot\, \biggr] + \frac{1-\beta}{\beta} \,  \sPr \biggl[ x(0) \in \,\cdot\, \biggr] - \frac{1-\beta}{\beta} \,  \sPr \biggl[ x(0) \in \,\cdot\, \biggr] \\
&= \frac{1-\beta}{\beta} \, \sum_{t=0}^{\infty} \beta^t \, \sPr \biggl[ x(t+1) \in \,\cdot\, \biggr] - \frac{1-\beta}{\beta} \, \mu^(\,\cdot\,) \,\, \text{(as $x(0) \sim \mu^*$)} \\
&= \frac{\hat{\zeta}^*(\,\cdot\,)}{\beta} - \frac{\mu^*(\,\cdot\,)}{\beta}  + \mu^*(\,\cdot\,) 
\end{align*}
But since $\mu^* = \zeta^* \, p_{\mu^*}$, the last expression implies that $\hat{\zeta}^* = \mu^*$. Hence, $\mu^*$ satisfies the following in view of property (b)
$$
\mu^*(\,\cdot\,) = \int_{\sX\times\sA} p_{\mu^*}(\,\cdot\,|x,a) \, \pi^*(da|x) \,  \mu^*(dx)
$$
that is, $\mu^*$ is an invariant distribution of the transition probability $p_{\mu^*}^{\pi^*}$. Hence, $\mu^* \in \Phi(\pi^*)$. This means that $(\mu^*,\pi^*)$ is MFE.
\end{proof}

Hence, to find MFE, it is sufficient to compute a pair $(\zeta^*,\mu^*)$ that satisfies the properties in Lemma~\ref{com-lemma1}. To compute such a pair, we now formulate an artificial game with two players, which turns our to be a generalized Nash equilibrium problem and whose Nash equilibrium gives such a pair. In this artificial game, first player represents the generic agent in mean-field game and the second player represents the overall population. To formulate the problem, we need to define another cost function in addition to  $\langle \zeta,c_{\mu} \rangle$. This new cost function will serve as the cost of the second player in the game. It is important to note that we are completely free to choose this cost function. Hence, one can think of this additional cost as a design parameter that can be used to achieve certain objectives. Let $g: \M(\sX\times\sA) \times \M(\sX) \rightarrow [0,\infty)$ be some continuous function of $(\zeta,\mu)$. Then we define the following generalized Nash equilibrium problem

\begin{multicols}{2}
\centering{Player 1}
\begin{align*}
\text{Given $\mu$:} \,\,
&\text{minimize}_{\zeta\in \M_+(\sX\times\sA)} \text{ } \langle \zeta,c_{\mu} \rangle
\nonumber \\*
&\text{subject to  } \hat{\zeta} = (1-\beta)\mu + \beta \, \zeta \, p_{\mu}  
\end{align*}\\
\centering{Player 2}
\begin{align*}
\text{Given $\zeta$:} \,\,
&\text{minimize}_{\mu \in \M_+(\sX)} \text{ } g(\zeta,\mu)
\nonumber \\*
&\text{subject to  } \mu = \zeta \, p_{\mu}
\end{align*}
\end{multicols}

Note that in above game, both the cost function and the admissible strategy sets are coupled to each other. Therefore, it is indeed a generalized Nash equilibrium problem (see the survey paper \cite{FaKa10} for an introduction to GNEPs). Hence, we can use techniques developed for such games for computing MFE.

The following result is obvious in view of Lemma~\ref{com-lemma1}.

\begin{lemma}\label{com-lemma2}
If $(\zeta^*,\mu^*)$ is an equilibrium solution of the above GNEP, then $(\mu^*,\pi^*)$ is MFE, where $\zeta^*(dx,da) = \pi^*(da|x) \, \hat{\zeta}^*(dx)$.
\end{lemma}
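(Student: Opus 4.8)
The plan is to derive Lemma~\ref{com-lemma2} as a direct corollary of Lemma~\ref{com-lemma1}. The only thing to verify is that being an equilibrium solution of the GNEP is exactly the conjunction of properties (a) and (b) in Lemma~\ref{com-lemma1}, after which the conclusion is immediate.

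First I would unpack the definition of a (generalized) Nash equilibrium for the two-player game as written. Suppose $(\zeta^*,\mu^*)$ is an equilibrium. By definition, Player~1's choice $\zeta^*$ is a best response given $\mu^*$: it minimizes $\langle \zeta, c_{\mu^*}\rangle$ over $\zeta \in \M_+(\sX\times\sA)$ subject to $\hat\zeta = (1-\beta)\mu^* + \beta\,\zeta\,p_{\mu^*}$, equivalently ${\T}_{\mu^*}(\zeta) = (1-\beta)\mu^*$. But that is precisely the LP formulation \eqref{aaaaa} of $\text{MDP}_{\mu^*}$, so $\zeta^*$ being Player~1's best response is exactly property (a) of Lemma~\ref{com-lemma1}. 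Likewise, Player~2's choice $\mu^*$ is a best response given $\zeta^*$, and in particular $\mu^*$ must lie in Player~2's feasible set, i.e.\ it satisfies $\mu^* = \zeta^*\,p_{\mu^*} = \int_{\sX\times\sA} p_{\mu^*}(\,\cdot\,|x,a)\,\zeta^*(dx,da)$, which is exactly property (b). (Here the specific continuous cost $g$ played no role beyond ensuring Player~2's problem is well-posed; only feasibility of $\mu^*$ is used.)

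Then I would invoke Lemma~\ref{com-lemma1} with this pair $(\zeta^*,\mu^*)$. That lemma first upgrades $\mu^*$ and $\zeta^*$ from nonnegative measures to probability measures using the two defining identities, and then shows that the disintegration $\zeta^*(dx,da) = \pi^*(da|x)\,\hat\zeta^*(dx)$ yields $\pi^* \in \Lambda(\mu^*)$ and $\mu^* \in \Phi(\pi^*)$, i.e.\ $(\mu^*,\pi^*)$ is a stationary mean-field equilibrium. This is exactly the claimed conclusion of Lemma~\ref{com-lemma2}, so the proof is complete.

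I do not anticipate a genuine obstacle here; the statement is labeled "obvious in view of Lemma~\ref{com-lemma1}" for good reason. The only point requiring a word of care is the translation between the constraint $\hat\zeta = (1-\beta)\mu + \beta\,\zeta\,p_\mu$ in the GNEP and the operator form ${\T}_\mu(\zeta) = (1-\beta)\mu$ used in \eqref{aaaaa} — these are literally the same equation by the definition of ${\T}_\mu$ — and noting that the GNEP equilibrium condition supplies both (a) (optimality of $\zeta^*$, not merely feasibility) and (b) (feasibility of $\mu^*$). Once those identifications are made explicit, Lemma~\ref{com-lemma1} does all the remaining work.
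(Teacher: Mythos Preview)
Your proposal is correct and takes essentially the same approach as the paper, which simply states that the result is ``obvious in view of Lemma~\ref{com-lemma1}'' without providing further details. Your unpacking of why an equilibrium of the GNEP yields exactly properties (a) and (b) of Lemma~\ref{com-lemma1} is precisely the content the paper leaves implicit.
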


In general, GNEP problems are formulated via inequality constraints instead of equality constraints. Although it is possible to transform equality constraints into inequality constraints by doubling the number of constraints, we can alternatively formulate above GNEP via inequality constraints without increasing the number of constraints much as follows. 

\begin{multicols}{2}
\centering{Player 1}
\begin{align*}
\text{Given $\mu$:} \,\,
&\text{minimize}_{\zeta\in \M_+(\sX\times\sA)} \text{ } \langle \zeta,c_{\mu} \rangle
\nonumber \\*
&\text{subject to  } \hat{\zeta} \geq (1-\beta)\mu + \beta \, \zeta \, p_{\mu}  
\end{align*}\\
\centering{Player 2}
\begin{align*}
\text{Given $\zeta$:} \,\,
&\text{minimize}_{\mu \in \M_+(\sX)} \text{ } g(\zeta,\mu)
\nonumber \\*
&\text{subject to  } \mu \geq \zeta \, p_{\mu}, \,\, \langle \mu, {\bf 1} \rangle \geq 1
\end{align*}
\end{multicols}
Here, ${\bf 1}$ denotes the constant function $1$. To express the GNEP problem using inequality constraints, an additional constraint $\langle \mu, {\bf 1} \rangle \geq 1$ is added without the need to increase the number of constraints significantly. While the following result is similar to Lemma~\ref{com-lemma2}, its proof is not straightforward and will be provided.  

\begin{lemma}\label{com-lemma3}
If $(\zeta^*,\mu^*)$ is an equilibrium solution of the above GNEP with inequality constraint, then $(\mu^*,\pi^*)$ is MFE, where $\zeta^*(dx,da) = \pi^*(da|x) \, \hat{\zeta}^*(dx)$.
\end{lemma}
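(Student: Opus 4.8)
The plan is to reduce the statement to Lemma~\ref{com-lemma1}: I will show that at any equilibrium $(\zeta^*,\mu^*)$ of the inequality‑constrained GNEP, \emph{both} inequality constraints are automatically satisfied with equality, so that $(\zeta^*,\mu^*)$ is in fact an equilibrium of the equality‑constrained GNEP of Lemma~\ref{com-lemma2} and the conclusion follows. The entire new content relative to Lemma~\ref{com-lemma2} is a total‑mass bookkeeping, and that is where I expect the only real care to be needed: $\mu$ ranges over $\M_+(\sX)$ rather than over probability measures, so $p_{\mu}$ and $\zeta\,p_{\mu}$ carry mass $\mu(\sX)$ instead of $1$, and in Player~2's constraint $\mu\geq\zeta^* p_{\mu}$ the kernel $p_{\mu}$ itself depends on $\mu$ (linearly, thanks to the linearity of $\bp$ in the mean‑field term), so these mass factors must be tracked explicitly.

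First I will evaluate the constraints on the whole space. Since $\bp(\sX\,|\,x,a,z)=1$ we have $p_{\mu}(\sX\,|\,x,a)=\mu(\sX)$ for every $(x,a)$, hence $(\zeta\,p_{\mu})(\sX)=\mu(\sX)\,\zeta(\sX\times\sA)$, while $\hat{\zeta}(\sX)=\zeta(\sX\times\sA)$ always. Player~2's feasibility $\mu^*\geq\zeta^* p_{\mu^*}$ evaluated on $\sX$ gives $\mu^*(\sX)\geq\mu^*(\sX)\,\zeta^*(\sX\times\sA)$, which together with $\langle\mu^*,{\bf 1}\rangle=\mu^*(\sX)\geq1$ forces $\mu^*(\sX)\geq1$ and $\zeta^*(\sX\times\sA)\leq1$. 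Player~1's feasibility $\hat{\zeta}^*\geq(1-\beta)\mu^*+\beta\,\zeta^* p_{\mu^*}$ evaluated on $\sX$ gives
$$
\zeta^*(\sX\times\sA)\geq(1-\beta)\mu^*(\sX)+\beta\,\mu^*(\sX)\,\zeta^*(\sX\times\sA)\geq(1-\beta)+\beta\,\zeta^*(\sX\times\sA),
$$
where the last step uses $\mu^*(\sX)\geq1$ and $\zeta^*\geq0$; hence $\zeta^*(\sX\times\sA)\geq1$, and therefore $\zeta^*(\sX\times\sA)=1$. Substituting back, $1\geq(1-\beta)\mu^*(\sX)+\beta\,\mu^*(\sX)=\mu^*(\sX)$, so $\mu^*(\sX)=1$. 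Thus $\mu^*$ and $\zeta^*$ are \emph{probability} measures, and in particular $p_{\mu^*}(\,\cdot\,|\,x,a)$ is a genuine transition probability. (This is precisely the role of the extra constraint $\langle\mu,{\bf 1}\rangle\geq1$: without it $\mu^*=0$ would be admissible for Player~2 and the division by $\mu^*(\sX)$ above would be illegitimate.)

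Next I will upgrade the inequalities to equalities. The measure $\hat{\zeta}^*-(1-\beta)\mu^*-\beta\,\zeta^* p_{\mu^*}$ is non‑negative by Player~1's feasibility, and by the mass computation above its total mass equals $1-(1-\beta)-\beta=0$; hence it is the zero measure, i.e. $\hat{\zeta}^*=(1-\beta)\mu^*+\beta\,\zeta^* p_{\mu^*}$. Likewise $\mu^*-\zeta^* p_{\mu^*}$ is non‑negative with total mass $1-1=0$, hence $\mu^*=\zeta^* p_{\mu^*}$. It remains to observe that $\zeta^*$ solves the equality‑constrained LP formulation of $\text{MDP}_{\mu^*}$, i.e. the one with $\T_{\mu^*}(\zeta)=(1-\beta)\mu^*$: that feasible set is contained in Player~1's inequality‑constrained feasible set, $\zeta^*$ minimizes $\langle\zeta,c_{\mu^*}\rangle$ over the larger set, and — by the previous step — $\zeta^*$ already lies in the smaller set, so it minimizes over the smaller set as well. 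Hence $(\zeta^*,\mu^*)$ satisfies properties~(a) and~(b) of Lemma~\ref{com-lemma1}, and disintegrating $\zeta^*(dx,da)=\pi^*(da|x)\,\hat{\zeta}^*(dx)$, Lemma~\ref{com-lemma1} yields that $(\mu^*,\pi^*)$ is a MFE. I note in passing that the optimality of $\mu^*$ for Player~2's cost $g$ is never used in this argument — only its feasibility — so the conclusion holds for any admissible choice of the design function $g$.
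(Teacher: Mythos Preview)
Your proof is correct and follows essentially the same approach as the paper: a total-mass bookkeeping to force $\zeta^*$ and $\mu^*$ to be probability measures, then an upgrade of the inequality constraints to equalities, and finally a reduction to the equality-constrained case. Your ordering of the mass inequalities is slightly different (you invoke $\mu^*(\sX)\geq 1$ earlier), and your final step is cleaner --- you invoke Lemma~\ref{com-lemma1} directly, whereas the paper re-derives $\hat{\zeta}^*=\mu^*$ before concluding --- but the substance is the same.
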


\begin{proof}
Since 
$$
\hat{\zeta}^* \geq (1-\beta) \, \mu^* + \beta \, \zeta^* \, p_{\mu^*}
$$
we have $\zeta^*(\sX\times\sA) \geq (1-\beta) \, \mu^*(\sX) + \beta \, \zeta^*(\sX\times\sA) \, \mu^*(\sX)$. Similarly, since 
$$
\mu^* \geq \zeta^* \, p_{\mu^*}
$$
we have $\mu^*(\sX) \geq \zeta^*(\sX\times\sA) \, \mu^*(\sX)$. Hence 
\begin{align*}
\zeta^*(\sX\times\sA)  &\geq (1-\beta) \zeta^*(\sX\times\sA) \, \mu^*(\sX) +  \beta \, \zeta^*(\sX\times\sA) \, \mu^*(\sX) = \zeta^*(\sX\times\sA) \, \mu^*(\sX) \\
\intertext{and}
\mu^*(\sX) &\geq \zeta^*(\sX\times\sA) \, \mu^*(\sX)
\end{align*} 
Therefore, $\mu^*(\sX) \leq 1$ and $\zeta^*(\sX\times\sA)\leq 1$. Since $\langle \mu^*, {\bf 1} \rangle = \mu^*(\sX) \geq 1$, we also have $\mu^*(\sX) = 1$ and
$$
\zeta^*(\sX\times\sA) \geq (1-\beta) + \beta \, \zeta^*(\sX\times\sA) 
$$
Hence $\zeta^*(\sX\times\sA) \geq 1$. This implies that $\zeta^*(\sX\times\sA) = 1$. That is, both $\zeta^*$ and $\mu^*$ are probability measures. 
Therefore, $(1-\beta) \, \mu^* + \beta \, \zeta^* \, p_{\mu^*}$ and $\zeta^* \, p_{\mu^*}$ are also probability measures. But it is known that if two probability measures $\nu$ and $\theta$ satisfy $\nu \leq \theta$ for any Borel set, then $\nu=\theta$. Hence, 
\begin{align*}
\hat{\zeta}^* &= (1-\beta) \, \mu^* + \beta \, \zeta^* \, p_{\mu^*} \\
\mu^* &= \zeta^* \, p_{\mu^*}
\end{align*}
Note that given $\mu^*$, the following optimization problems are equivalent
\begin{multicols}{2}
\centering{Problem 1}
\begin{align*}
&\text{minimize}_{\zeta\in \M_+(\sX\times\sA)} \text{ } \langle \zeta,c_{\mu^*} \rangle
\nonumber \\*
&\text{subject to  } \hat{\zeta} \geq (1-\beta)\mu^* + \beta \, \zeta \, p_{\mu^*}  
\end{align*}\\
\centering{Problem 2}
\begin{align*}
&\text{minimize}_{\zeta\in \M_+(\sX\times\sA)} \text{ } \langle \zeta,c_{\mu^*} \rangle
\nonumber \\*
&\text{subject to  } \hat{\zeta} = (1-\beta)\mu^* + \beta \, \zeta \, p_{\mu^*}  
\end{align*}
\end{multicols}
The second problem is a LP formulation of MDP~$_{\mu^*}$ and so $\zeta^*$ is the optimal occupation measure. Hence $\pi^* \in \Lambda(\mu^*)$. Using the same method as in the proof of Lemma~\ref{com-lemma1}, we can also prove that $\hat{\zeta}^*=\mu^*$ using $\mu^* = \zeta^* \, p_{\mu^*}$. Hence, by $\hat{\zeta}^* = (1-\beta) \, \mu^* + \beta \, \zeta^* \, p_{\mu^*}$, we have 
$$
\mu^*(\,\cdot\,) = \int_{\sX\times\sA} p_{\mu^*}(\,\cdot\,|x,a) \, \pi^*(da|x) \,  \mu^*(dx)
$$
that is, $\mu^*$ is an invariant distribution of the transition probability $p_{\mu^*}^{\pi^*}$. Hence, $\mu^* \in \Phi(\pi^*)$. This implies that $(\mu^*,\pi^*)$ is MFE.
\end{proof}

\subsection{Computing Equilibrium of GNEP}\label{sub1sec3}

Note that if the action space $\sA$ is compact and convex subset of some finite-dimensional Euclidean space, it should be necessarily  uncountably infinite. Hence, even if the state space $\sX$ is finite set, the action space $\M(\sX\times\sA)$ for Player 1 in GNEP is an infinite dimensional space.  In the literature, the algorithms developed for GNEPs are in general established for finite dimensional strategy spaces. Therefore, in this section, we suppose that the state space $\sX$  and the action space $\sA$ are finite sets. However, this creates a problem because our main theorem (Theorem~\ref{theorem3}) about approximate Nash equilibrium for linear MFGs is valid when $\sA$ is convex. Hence, it cannot be applied directly to the finite action spaces. 

To circumvent this problem, two approaches are possible. In the first one, it is possible to prove asymptotic version of Theorem~\ref{theorem3}; that is, there is no explicit relation between $\Theta$ functions and $N$ but it can be proved that $\Theta$ functions converge to $0$ as $N \rightarrow \infty$. This can be established via the method that is used to prove \cite[Theorem 4.1]{SaBaRa18}. To have this asymptotic result, it is enough to have finite $\sX$ and $\sA$. No need to put extra conditions on $\bc$ and $\bp$. Hence, in this case, if MFE policy is applied by all the agents in the finite agent game, then it constitutes approximate Nash equilibrium if the number of agents is sufficiently high. But we can not quantify how high it should be. 

In the second approach, we can equivalently reformulate the problem by pretending $\P(\sA)$ as our action space. In this case, we let $\sU := \P(\sA)$ denote our new action space, which is a convex and compact subset of $|\sA|$-dimensional Euclidean space. Then we redefine our transition probability and one stage cost function as follows
\begin{align*}
\bp_{\new}(\,\cdot\,|x,u,z) &:= \sum_{a \in \sA} \bp(\,\cdot\,|x,a,z) \, u(a) \\
\bc_{\new}(x,u,z) &:= \sum_{a \in \sA} \bc(x,a,z) \, u(a)
\end{align*}
In this case, it is possible to apply Theorem~\ref{theorem3} if assumption (a)--(g) are true. However, although $\bp_{\new}$ and $\bc_{\new}$ are linear in $u$ (and so convex), they are not strongly convex in $u$. Hence assumption (d) is not true for this new formulation. To handle this problem, a common approach is to add a strongly convex regularization term to the cost function $\bc_{\new}$ (see \cite{AnKaSa22}). In regularized version, the cost function is given by 
$$
\bc^{\reg}(x,u,z) \coloneqq \bc_{\new}(x,u,z) + \lambda \,h(u) 
$$
where $h$ is a $\theta$-strongly convex function and $\lambda >0$ is some constant. A typical example for $h$ is the negative entropy $h(u) = \sum_{a \in \sA} \ln(u(a)) \, u(a)$. By choosing $\theta$ and $\lambda$ properly, it is possible to satisfy the assumptions (a)--(g), in particular assumption (d). If $\theta$ and $\lambda$ are small enough, then one can prove that equilibrium solutions of un-regularized problems (both for finite population case and infinite population case) are approximately equilibrium solutions for regularized problems. Since Theorem~\ref{theorem3} can be applied to regularized problem, in view of the last observation, it can be applied to the un-regularized original setup as well with an additional error as a result of regularization term $\lambda \,h(u)$. 

The previous discussions provide sufficient justification for why the action space in the model can be considered finite. Indeed, extending the below algorithm being introduced to the infinite dimensional action spaces is a future research direction. In general, solving GNEP problems with finite dimensional action spaces is already very challenging problem. There are a limited number of algorithms available in the literature that can be used for the most general formulation of GNEPs (see \cite{AxFaKaSa11}). It may be difficult to generalize the existing algorithms or develop new ones for infinite dimensional GNEPs. 

Let us now give a more explicit formulation of inequality constrained GNEP that is introduced in the previous section, when state and action spaces are finite.
\begin{multicols}{2}
\centering{Player 1}
\begin{align*}
\text{Given $\mu$:} \,\,
&\text{minimize}_{\zeta\in \cR^{|\sX\times\sA|}} \text{ } \langle \zeta,c_{\mu} \rangle
\nonumber \\*
&\text{subject to  } \hat{\zeta} \geq (1-\beta)\mu + \beta \, \zeta \, p_{\mu}  \\*
&\Id \cdot \, \zeta \geq 0 
\end{align*}\\ 
\centering{Player 2}
\begin{align*}
\text{Given $\zeta$:} \,\,
&\text{minimize}_{\mu \in \cR^{|\sX|}} \text{ } g(\zeta,\mu)
\nonumber \\*
&\text{subject to  } \mu \geq \zeta \, p_{\mu}, \,\, \langle \mu, {\bf 1} \rangle \geq 1 \\*
&\Id \cdot \, \mu \geq 0 
\end{align*}
\end{multicols}

The remainder of this section will explain an algorithm that was developed to solve GNEPs using an interior-point method in a previous work \cite{AxFaKaSa11}, and how it has been adapted for the current study. To this end, since we are free to choose $g$, we suppose that the auxiliary cost function $g$ for player 2 is twice continuously differentiable and for any $\zeta$, $g(\zeta,\,\cdot\,)$ is convex in $\mu$. With these conditions, our problem satisfies assumptions A1 and A2 in \cite{AxFaKaSa11}. For instance, if one chooses $g$ as a linear function of $\mu$, then both players face with linear programs. In particular, if $g(\zeta,\mu) := -\langle \zeta, c_{\mu} \rangle$, then our problem becomes a zero sum game.  

Let us define the functions $h_1: \cR^{|\sX\times\sA|} \times \cR^{|\sX|} \rightarrow \cR^{|\sX\times\sA|}\times\cR^{|\sX|}$ and $h_2: \cR^{|\sX\times\sA|} \times \cR^{|\sX|} \rightarrow \cR^{|\sX|}\times\cR\times\cR^{|\sX|}$ as follows
\begin{align*}
h_1(\zeta,\mu) := 
\begin{pmatrix}
-\Id \cdot \, \zeta \\
-\hat{\zeta} + (1-\beta)\mu + \beta \, \zeta \, p_{\mu}
\end{pmatrix}, \,\,\,\,\,\,
h_2(\zeta,\mu) := 
\begin{pmatrix}
-\Id \cdot \, \mu \\
-\langle \mu, {\bf 1} \rangle + 1 \\
-\mu + \zeta \, p_{\mu}
\end{pmatrix}
\end{align*}
Then we can write above GNEP in the following form
\begin{multicols}{2}
\centering{Player 1}
\begin{align*}
\text{Given $\mu$:} \,\,
&\text{minimize}_{\zeta\in \cR^{|\sX\times\sA|}} \text{ } \langle \zeta,c_{\mu} \rangle
\nonumber \\*
&\text{subject to  } h_1(\zeta,\mu) \leq 0  
\end{align*}\\
\centering{Player 2}
\begin{align*}
\text{Given $\zeta$:} \,\,
&\text{minimize}_{\mu \in \cR^{|\sX|}} \text{ } g(\zeta,\mu)
\nonumber \\*
&\text{subject to  } h_2(\zeta,\mu) \leq 0
\end{align*}
\end{multicols}
Now, let us derive the joint KKT conditions for player 1 and player 2, whose solution gives a Nash equilibrium for GNEP. To this end, we need to define several functions. First we define 
\begin{align*}
L_1(\zeta,\mu,\lambda) &:= \langle \zeta,c_{\mu} \rangle + \langle h_1(\zeta,\mu),\lambda \rangle \\
L_2(\zeta,\mu,\gamma) &:= g(\zeta,\mu) + \langle h_2(\zeta,\mu),\gamma \rangle
\end{align*}
where $\lambda$ and $\gamma$ are Lagrange multipliers of player 1 and player 2, respectively. Let $\lambda = (\lambda_1,\lambda_2)$, where $\lambda_1 \in \cR^{|\sX\times\sA|}$ and $\lambda_2 \in \cR^{|\sX|}$, and let $\gamma = (\gamma_1,\gamma_2,\gamma_3)$, where $\gamma_1 \in \cR^{|\sX|}$, $\gamma_2 \in \cR$, and $\gamma_3 \in \cR^{|\sX|}$. Note that for any $(x,a) \in \sX\times\sA$, we have 
$$
\partial_{\zeta(x,a)} L_1(\zeta,\mu,\lambda) = c_{\mu}(x,a) - \lambda_1(x,a) - \lambda_2(x) + \beta \, \sum_{y \in \sX} \lambda_2(y) \, p_{\mu}(y|x,a) 
$$
and similarly, for any $x \in \sX$, we have 
$$
\partial_{\mu(x)} L_2(\zeta,\mu,\lambda) = \partial_{\mu(x)} g(\zeta,\mu) - \gamma_1(x) - \gamma_2 - \gamma_3(x) + \sum_{y \in \sX} \gamma_3(y) \, p_{\zeta}(y|x)
$$
where $p_{\zeta}(y|x) := \sum_{(z,a) \in \sX\times\sA} \bp(y|z,a,x) \, \zeta(z,a)$. Let ${\bf F}(\zeta,\mu,\lambda,\gamma) := \left(\nabla_{\zeta} L_1(\zeta,\mu,\lambda), \nabla_{\mu} L_2(\zeta,\mu,\gamma)\right)$ and ${\bf h}(\zeta,\mu) := \left(h_1(\zeta,\mu), h_2(\zeta,\mu)\right)$. Then the joint KKT conditions for player 1 and player 2 can be written as 
$$
{\bf F}(\zeta,\mu,\lambda,\gamma) = 0, \,\, \lambda,\gamma \geq 0, \,\, {\bf h}(\zeta,\mu) \leq 0, \,\, \langle {\bf h}(\zeta,\mu), (\lambda,\gamma) \rangle = 0
$$
More explicitly, we can write joint KKT conditions as follows
\begin{align}
&c_{\mu}(x,a) - \lambda_1(x,a) - \lambda_2(x) + \beta \, \sum_{y \in \sX} \lambda_2(y) \, p_{\mu}(y|x,a) = 0, \,\,\, \forall (x,a) \in \sX\times\sA \label{kkt1} \\
&\partial_{\mu(x)} g(\zeta,\mu) - \gamma_1(x) - \gamma_2 - \gamma_3(x) + \sum_{y \in \sX} \gamma_3(y) \, p_{\zeta}(y|x) =0, \,\,\, \forall x \in \sX \label{kkt2}\\
&\lambda, \gamma, \zeta, \mu \geq 0 \label{kkt3}\\
&\hat{\zeta} \geq (1-\beta)\mu + \beta \, \zeta \, p_{\mu}, \,\,\,  \mu \geq \zeta \, p_{\mu}, \,\,\, \mu(\sX) \geq 1 \label{kkt4} \\
&-\sum_{(x,a) \in \sX\times\sA} \zeta(x,a) \, \lambda_1(x,a) + \sum_{y \in \sX} \left(-\hat{\zeta}(y) + (1-\beta)\mu(y) + \beta \, \zeta \, p_{\mu}(y) \right) \lambda_2(y) = 0 \label{kkt5} \\
&-\sum_{x \in \sX} \mu(x) \, \gamma_1(x) + (1-\mu(\sX)) \, \gamma_2 + \sum_{y \in \sX} \left(-\mu(y) + \zeta \, p_{\mu}(y) \right) \, \gamma_3(y) = 0 \label{kkt5}
\end{align}
Here, (\ref{kkt1}) is indeed the Bellman optimality equation, where $\lambda_2$ is the optimal value function. Now, we transform joint KKT conditions into a root finding problem. To this end, we introduce slack variables $({\bar \lambda},{\bar \gamma})$, where ${\bar \lambda} \in \cR^{|\sX\times\sA|}\times\cR^{|\sX|}$ and ${\bar \gamma} \in \cR^{|\sX|}\times\cR\times\cR^{|\sX|}$, and define 
\begin{align*}
H(z) &:= H(\zeta,\mu,\lambda,\gamma,{\bar \lambda},{\bar \gamma}) :=
\begin{pmatrix}
{\bf F}(\zeta,\mu,\lambda,\gamma)  \\
{\bf h}(\zeta,\mu) + ({\bar \lambda},{\bar \gamma}) \\
 (\lambda,\gamma) \circ ({\bar \lambda},{\bar \gamma}) 
\end{pmatrix} \\
\intertext{and}
Z &:= \left\{ z=(\zeta,\mu,\lambda,\gamma,{\bar \lambda},{\bar \gamma}): (\lambda,\gamma), ({\bar \lambda},{\bar \gamma}) \geq 0 \right\}
\end{align*}
where $(\lambda,\gamma) \circ ({\bar \lambda},{\bar \gamma})$ is the vector formed by diagonal elements of the outer product of the vectors $(\lambda,\gamma)$ and  $({\bar \lambda},{\bar \gamma})$. 
Then it is straightforward to show that $(\zeta,\mu,\lambda,\gamma)$ satisfy joint KKT conditions if and only if $(\zeta,\mu,\lambda,\gamma)$ and some suitable $({\bar \lambda},{\bar \gamma})$ satisfy the constrained root finding problem $H(z) = 0, \,\, z \in \sZ$. In order to find a solution to constrained root finding problem, an interior-point algorithm is developed in \cite{AxFaKaSa11}. In remainder of this section, we explain this algorithm, which depends on potential reduction method from \cite{MoPa99}. Let $n = |\sX\times\sA| + |\sX|$ (number of total variables in GNEP) and $m := |\sX\times\sA| + 3 \, |\sX| + 1$ (number of total constraints in GNEP). Hence $H: \cR^{n} \times \cR^{2m} \rightarrow \cR^{n} \times \cR^{2m}$ and $Z = \cR^n \times \cR_+^{2m}$. We first define a potential function on the interior of $Z$ as follows
$$
p(u,v) = K \, \log\left(\|u\|^2 + \|v\|^2 \right) - \sum_{i=1}^{2m} \log(v_i) 
$$ 
where $K > m$. This function penalizes points that are close to the boundary of $Z$ that are far from the origin. Now, we define the potential function for the constrained root finding problem by composing $p$ and $H$ 
$$
\psi(z) := p(H(z)) 
$$
where $z \in \intr Z \cap H^{-1}(\intr Z) =: Z_I$. Let $\nabla H$ denote the Jacobian of the function $H$. Now it is time to give the algorithm. 

\begin{algorithm}[H]
\caption{}
\label{av-H2}
\begin{algorithmic}
\STATE{Inputs: $\kappa \in (0,1)$ and $a := \begin{pmatrix}
{\bf 0}_n \\
{\bf 1}_{2m} 
\end{pmatrix} \bigg/ \left\|\begin{pmatrix}
{\bf 0}_n \\
{\bf 1}_{2m} 
\end{pmatrix}\right\|$}
\STATE{Start with $z_0$}
\FOR{$k=0,1,2\ldots$}
\STATE{
\begin{itemize}
\item[(a)] Choose $\sigma_k \in [0,1), \eta_k \geq 0$, and compute a vector $d_k \in \cR^n \times \cR^{2m}$ such that 
\begin{align}
\left\| H(z_k) + \nabla H(z_k) \cdot d_k - \sigma_k \langle a,  H(z_k) \rangle \, a  \right\| &\leq \eta_k \, \|H(z_k)\| \label{alg1} \\ 
\intertext{and}
\langle \nabla \psi(z_k) , d_k \rangle &< 0 \label{alg2}
\end{align}
\item[(b)] Compute a stepsize $t_k:= \max \{\kappa^l: l = 0,1,2,\ldots \}$ such that 
\begin{align}
z_k + t_k \, d_k &\in Z_I  \label{alg3} \\
\intertext{and}
\psi(z_k + t_k \, d_k) &\leq \psi(z_k) +  t_k \, \langle \nabla \psi(z_k) , d_k \rangle \label{alg4}
\end{align}
\item[(c)] Set $z_{k+1} := z_k + t_k \, d_k$
\end{itemize}
}
\ENDFOR
\end{algorithmic}
\end{algorithm}

In order to establish the convergence of the algorithm we impose the following condition. 

\begin{itemize}
\item[(h)] For any $z \in Z_I$, the Jacobian $\nabla H(z)$ is invertible. 
\end{itemize}
 
First, note that under assumption (h), the following equation has a solution $d$ for any $z \in \intr Z$ and $\sigma \in [0,1)$
$$
H(z) + \nabla H(z) \, d = \sigma \langle a, H(z) \rangle \, a 
$$
Hence, one can use this solution in (\ref{alg1}) because it is known that for this solution $d_k$, we have $\langle \nabla \psi(z_k) , d_k \rangle < 0$ \cite[Lemma 11.3.3]{FaPa03}. Indeed, in this case, $d_k$ becomes
\begin{align}
d_k = \left(\nabla H(z_k)\right)^{-1} \left(\sigma_k \langle a, H(z_k) \rangle \, a  - H(z_k)\right) \label{exact}
\end{align}
Hence, our update becomes 
$$
z_{k+1} := z_k + t_k \, \left(\nabla H(z_k)\right)^{-1} \left(\sigma_k \langle a, H(z_k) \rangle \, a - H(z_k) \right)
$$
Moreover, since $\langle \nabla \psi(z_k) , d_k \rangle < 0$, one can always find $t_k$ that satisfies (\ref{alg3}) and (\ref{alg4}). The following convergence result follows from \cite[Theorems 4.3 and 4.10]{AxFaKaSa11}.

\begin{theorem}\label{com-theorem}
Suppose that assumption (h) holds. Moreover, pick $\sigma_k$ and $\eta_k$ so that 
$$
\limsup_{k\rightarrow\infty} \sigma_k < 1, \,\,\,\, \lim_{k\rightarrow\infty} \eta_k=0
$$
Then, the sequence $\{z_k\} := \{(\zeta_k,\mu_k,\lambda_k,\gamma_k,{\bar \lambda}_k,{\bar \gamma}_k)\}$ is bounded and any accumulation point $z^* = (\zeta^*,\mu^*,\lambda^*,\gamma^*,{\bar \lambda}^*,{\bar \gamma}^*)$ of this sequence is a solution to the constrained root finding problem $H(z^*) = 0$; that is, $H(z_k) \rightarrow 0$ as $k\rightarrow\infty$. Hence, $(\mu^*,\pi^*)$ is MFE, where $\zeta^*(dx,da) = \pi^*(da|x) \, \hat{\zeta}^*(dx)$. 
\end{theorem}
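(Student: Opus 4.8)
The plan is to reduce the statement to the convergence analysis of the potential-reduction interior-point scheme of \cite{AxFaKaSa11} and then translate the resulting root of $H$ back into a mean-field equilibrium via Lemma~\ref{com-lemma3}. First I would verify that the GNEP, written in the $(h_1,h_2)$-form, meets the structural hypotheses (their A1--A2) under which \cite[Theorems 4.3 and 4.10]{AxFaKaSa11} apply. The objective $\langle \zeta,c_{\mu}\rangle$ is bilinear, $g$ is twice continuously differentiable by assumption, and $h_1,h_2$ are continuously differentiable since $c_{\mu}$ and $p_{\mu}$ are linear in $\mu$ and $\zeta\,p_{\mu}$ is bilinear in $(\zeta,\mu)$. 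Moreover, for fixed $\mu$ Player~1 solves a linear program, and for fixed $\zeta$ Player~2 minimizes the convex function $g(\zeta,\cdot)$ over the polyhedron $\{\mu\ge 0:\ \mu-\zeta\,p_{\mu}\ge 0,\ \langle\mu,{\bf 1}\rangle\ge 1\}$, whose defining inequalities are affine in $\mu$; hence each player faces a convex subproblem. Assumption~(h) is precisely the Jacobian nonsingularity required so that a direction $d_k$ obeying (\ref{alg1})--(\ref{alg2}) exists, and step~(b) is the usual Armijo backtracking on the potential $\psi$.

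With these checks in hand, I would invoke \cite[Theorems 4.3 and 4.10]{AxFaKaSa11} directly. Under (h) and the stepsize conditions $\limsup_{k}\sigma_k<1$, $\lim_{k}\eta_k=0$, the algorithm is well defined: the exact direction (\ref{exact}) satisfies (\ref{alg1}) with $\eta_k=0$ and, by \cite[Lemma 11.3.3]{FaPa03}, also the descent inequality (\ref{alg2}), so a stepsize $t_k$ satisfying (\ref{alg3})--(\ref{alg4}) exists because $d_k$ is a descent direction for $\psi$ keeping the iterate in the open set $Z_I$. The cited theorems then give that $\{z_k\}$ is bounded and every accumulation point $z^*$ satisfies $H(z^*)=0$; accumulation points exist by Bolzano--Weierstrass. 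To upgrade this to $H(z_k)\to 0$, I would argue by contradiction: if $\|H(z_{k_j})\|\ge\varepsilon$ along some subsequence, boundedness yields a further subsequence $z_{k_{j_l}}\to z^*$, and continuity of $H$ forces $\|H(z^*)\|\ge\varepsilon>0$, contradicting that $z^*$ is a root.

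It remains to identify the limit with an MFE. For any accumulation point $z^*=(\zeta^*,\mu^*,\lambda^*,\gamma^*,{\bar\lambda}^*,{\bar\gamma}^*)$ with $H(z^*)=0$, the equivalence already recorded in Section~\ref{sec4} shows that $(\zeta^*,\mu^*,\lambda^*,\gamma^*)$ solves the joint KKT system (\ref{kkt1})--(\ref{kkt5}). Because Player~1's and Player~2's subproblems are convex in their own variables when the other is frozen, these KKT conditions are \emph{sufficient} for optimality; hence $\zeta^*$ is optimal for Player~1 given $\mu^*$ and $\mu^*$ is optimal for Player~2 given $\zeta^*$, i.e.\ $(\zeta^*,\mu^*)$ is an equilibrium of the inequality-constrained GNEP. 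Lemma~\ref{com-lemma3} then yields that $(\mu^*,\pi^*)$ is an MFE, where $\zeta^*(dx,da)=\pi^*(da|x)\,\hat\zeta^*(dx)$.

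I expect the main obstacle to be bookkeeping rather than a genuinely hard argument: one must check, line by line, that the abstract standing assumptions of \cite{AxFaKaSa11} hold for the concrete data $(h_1,h_2,g)$ here --- in particular the convexity of each player's subproblem, which is exactly what makes the joint KKT system sufficient for a GNEP equilibrium rather than merely necessary. Once that verification is recorded, boundedness and subsequential convergence to a root are quoted verbatim from \cite{AxFaKaSa11}, and the passage from a root of $H$ to an MFE is immediate via Lemma~\ref{com-lemma3}.
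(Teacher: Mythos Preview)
Your approach is essentially the same as the paper's: reduce to \cite[Theorems 4.3 and 4.10]{AxFaKaSa11}, then translate the resulting root of $H$ into an MFE via Lemma~\ref{com-lemma3}. However, you misidentify which hypotheses of \cite{AxFaKaSa11} require verification. The paper's proof does not dwell on A1--A2 or on KKT sufficiency via convexity; instead it checks the two \emph{additional} conditions (b) and (c) of \cite[Theorem 4.10]{AxFaKaSa11}: the coercivity-type condition
\[
\lim_{\|(\zeta,\mu)\|\to\infty}\|\max\{0,{\bf h}(\zeta,\mu)\}\|=\infty,
\]
which is what actually yields boundedness of $\{z_k\}$, and the extended Mangasarian--Fromovitz constraint qualification, which the paper dispatches by noting that $h_1(\zeta,\mu)$ is linear in $\zeta$ for fixed $\mu$ and $h_2(\zeta,\mu)$ is linear in $\mu$ for fixed $\zeta$. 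Your outline invokes boundedness as a direct consequence of (h) plus A1--A2, but that is not enough; without the coercivity check your appeal to Theorem~4.10 is incomplete. The rest of your write-up --- well-definedness of the step via \cite[Lemma 11.3.3]{FaPa03}, the subsequence argument for $H(z_k)\to 0$, and the passage to an MFE through Lemma~\ref{com-lemma3} --- is correct and somewhat more explicit than the paper, which simply absorbs these points into the theorem statement.
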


\begin{proof}
Here, we need to check the conditions (b) and (c) in  \cite[Theorem 4.10]{AxFaKaSa11}. Obviously 
$$
\lim_{\|(\zeta,\mu)\| \rightarrow \infty} \|\max\{0,{\bf h}(\zeta,\mu)\}\| = \infty
$$
Hence condition (b) is true. Since $h_1(\zeta,\mu)$ is linear in $\zeta$ given $\mu$ and $h_2(\zeta,\mu)$ is linear in $\mu$ given $\zeta$, one can also establish condition (c), which is an extended Mangasarian-Fromovitz constraint qualification condition. 
\end{proof}

\subsection{A numerical example}

We consider the malware spread model studied in \cite{SuAd19}. In this model, we suppose that there are large number of agents, where each agent has a local state $x_i(t) \in \{0,1\}$. Here $x_i(t) = 0$ represents the "healthy" state and $x_i(t) = 1$ represents the "infected" state. Each agent can take action $a_i(t) \in \{0,1\}$, where $a_i(t) = 0$ represents "do nothing" and $a_i(t) = 1$ represents "repair". The dynamics are given by 
\begin{align*}
x_i(t+1) =
\begin{cases}
x_i(t) + (1-x_i(t)) \, w_i(t), & \text{if $a_i(t)=0$} \\
0, & \text{if $a_i(t)=1$}
\end{cases}
\end{align*}
where $w_i(t) \in \{0,1\}$ is a Bernoulli random variable with success probability $q$, which gives the probability of an agent getting infected. In this setting, if an agent chooses to not take any action, they may be infected with probability $q$, but if they choose to take a repair action, they return to the healthy state. Each agent pays a cost 
$$
\bc(x_i(t),a_i(t),z_i(t)) = (k+z_i(t)) \, x_i(t) + \theta \, a_i(t)
$$
where $z_i(t) \sim \e[\,\cdot\,|\,\bx(t)]$, $\theta$ is the cost of repair, and $(k+z_i(t))$ represents the risk of being infected. In the infinite population limit, the stationary version of the problem is studied and the model is formulated as a generalized Nash equilibrium problem (GNEP). In this GNEP, the cost function for player 2 is taken to be the same as that of player 1. However, note that player 1 controls the distribution of $(x(t),a(t))$ and player 2 controls the distribution of $z(t)$. For numerical experiments, we use the following system parameters $k=0.2$, $\theta =0.5$, $\beta = 0.9$, $q=0.9$. We use MATLAB to do the numerical experiments. The algorithm runs for $10000$ iterations and uses the following parameters $\sigma_k = 0.4$, $\eta_k =0$, $\kappa=0.001$. Here, we take $\eta_k=0$ because we use 
\begin{align*}
d_k = \left(\nabla H(z_k)\right)^{-1} \left(\sigma_k \langle a,H(z_k) \rangle \, a - H(z_k)\right) 
\end{align*}
to update $z_k$. To perform step (\ref{alg4}) in the algorithm, we use Armijo line search. 

Note that in this example, $H$ function has $28$ outputs. The first $6$ of them represent ${\bf F}(\zeta,\mu,\lambda,\gamma)$, whose evolution is shown in Figure~\ref{H_F}.

\begin{figure}[H]
\centering
	\includegraphics[scale=0.25]{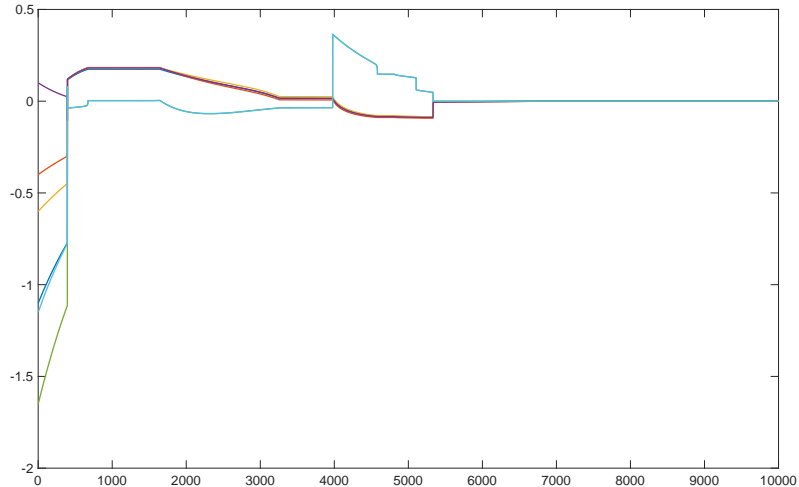}
	\caption{The evolution of ${\bf F}(\zeta,\mu,\lambda,\gamma)$} \label{H_F}
\end{figure}

The next $6$ of them represent $h_1(\zeta,\mu)+\lambda$, whose evolution is shown in Figure~\ref{H_lambda}.

\begin{figure}[H]
\centering
	\includegraphics[scale=0.25]{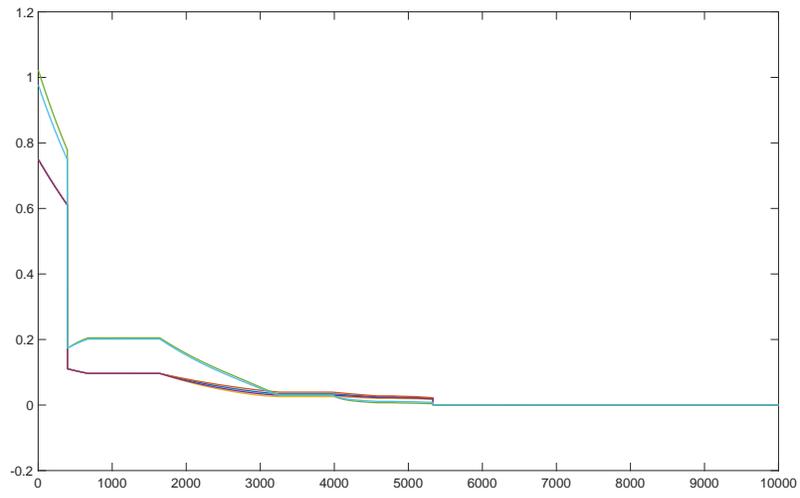}
	\caption{The evolution of $h_1(\zeta,\mu)+\lambda$} \label{H_lambda}
\end{figure}

The next $5$ of them represent $h_2(\zeta,\mu)+\gamma$,  whose evolution is shown in Figure~\ref{H_gamma}.

\begin{figure}[H]
\centering
	\includegraphics[scale=0.25]{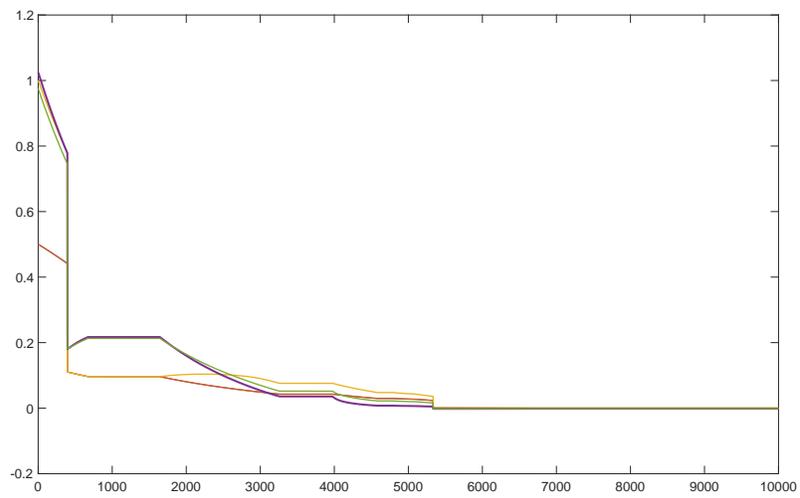}
	\caption{The evolution of $h_2(\zeta,\mu)+\gamma$} \label{H_gamma}
\end{figure}

The next $6$ of them represent $\lambda \circ {\bar \lambda}$, whose evolution is shown in Figure~\ref{H_blambda}.

\begin{figure}[H]
\centering
	\includegraphics[scale=0.25]{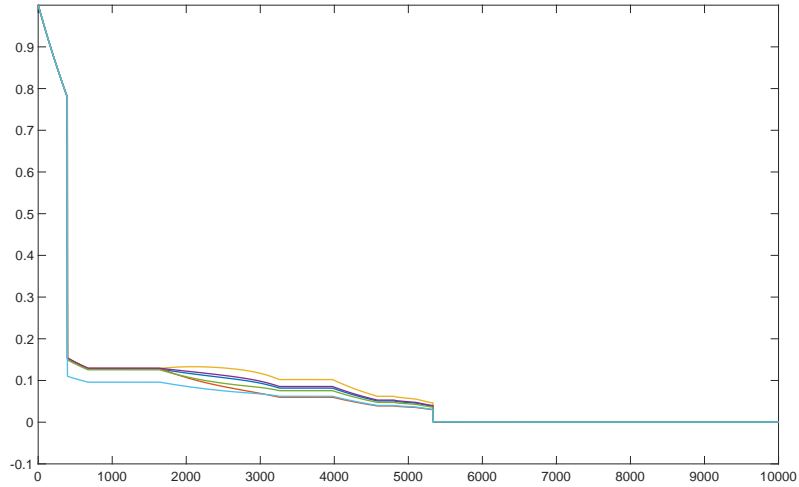}
	\caption{The evolution of $\lambda \circ {\bar \lambda}$} \label{H_blambda}
\end{figure}

The final $5$ of them represent $\gamma \circ {\bar \gamma}$, whose evolution is shown in Figure~\ref{H_bgamma}.

\begin{figure}[H]
\centering
	\includegraphics[scale=0.25]{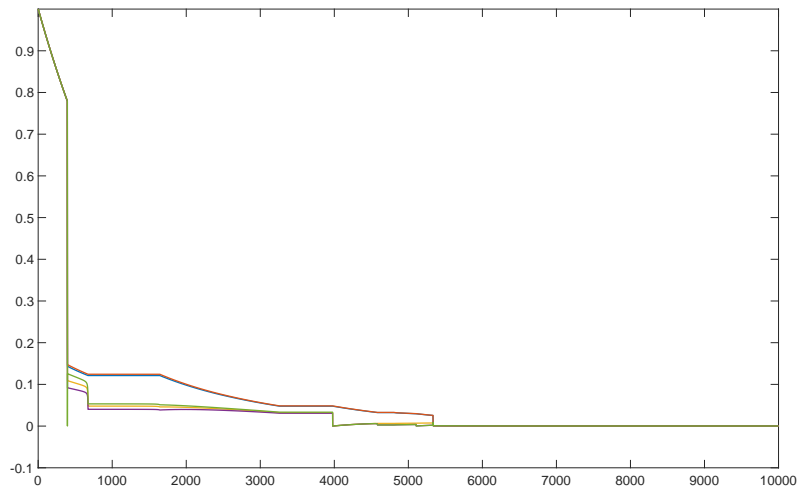}
	\caption{The evolution of $\gamma \circ {\bar \gamma}$} \label{H_bgamma}
\end{figure}

As one can see, the outputs of $H$ converge to zero as expected.

Now let us look at the behavior of the mean-field term. It can be seen in Figure~\ref{H_mu} that mean-field term converges to the following distribution $[0.59, 0.41]$. Hence, at the equilibrium, $59\%$ of the states are healthy. 

\begin{figure}[H]
\centering
	\includegraphics[scale=0.25]{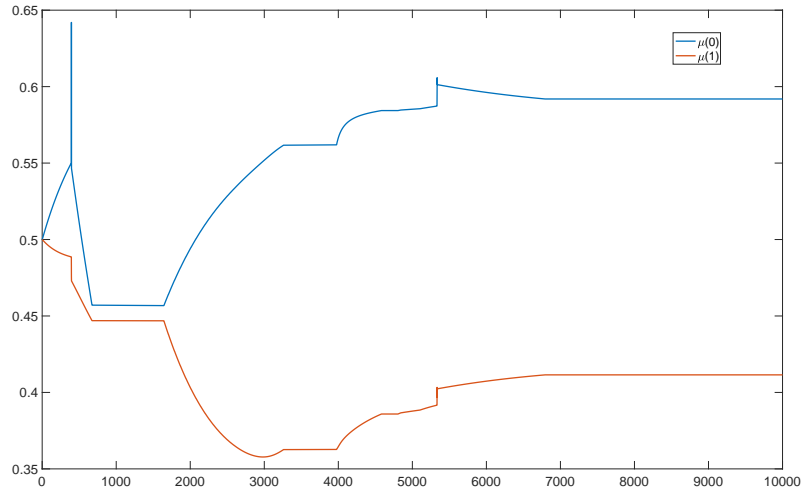}
	\caption{The evolution of mean-field term $\mu$} \label{H_mu}
\end{figure}

If we also analyze the behavior of the equilibrium policy, it can be seen in Figure~\ref{pi_0} and Figure~\ref{pi_1} that equilibrium policy converges to the following conditional distribution $\pi(\,\cdot\,|0) = [0.76,0.24]$ and  $\pi(\,\cdot\,|1) = [0.02,0.98]$. Hence, once an agent is infected, then with probability $0.98$, it should apply repair action. However, if the agent is healthy, then it should do nothing with probability $0.76$. This is probably because of the fact that the cost of repair is more expensive than the risk of infection. 

\begin{figure}[H]
\centering
	\includegraphics[scale=0.25]{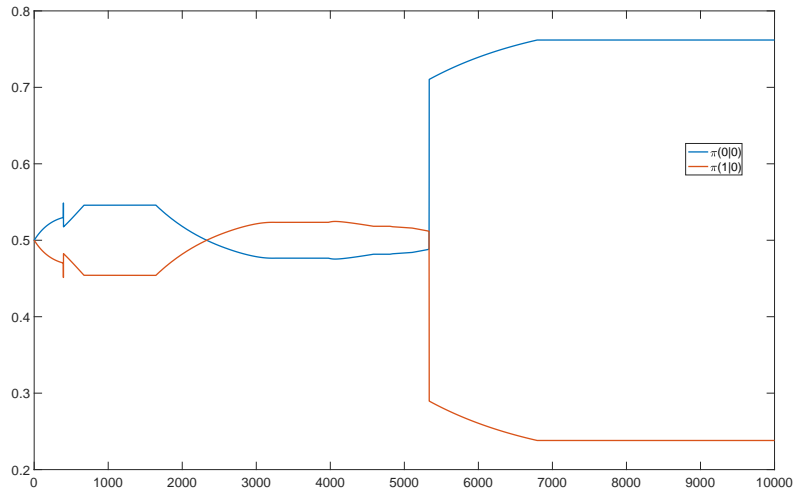}
	\caption{The evolution of equilibrium policy $\pi(\,\cdot\,|0)$} \label{pi_0}
\end{figure}

\begin{figure}[H]
\centering
	\includegraphics[scale=0.25]{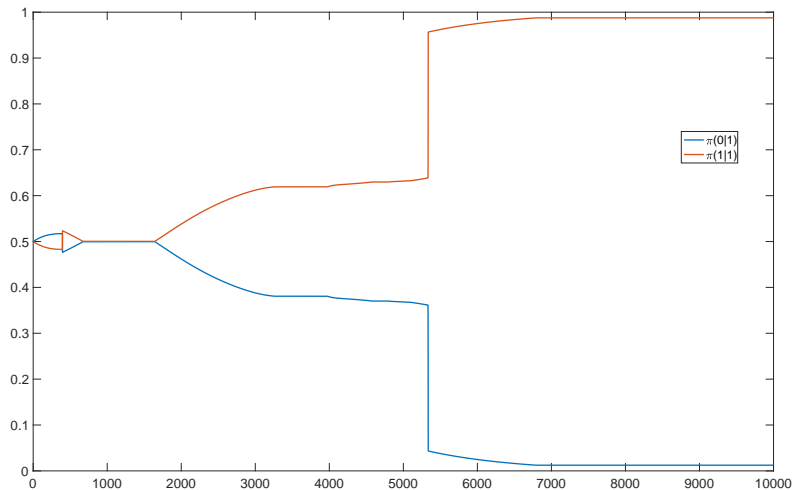}
	\caption{The evolution of equilibrium policy $\pi(\,\cdot\,|1)$} \label{pi_1}
\end{figure}

\section{Conclusion}\label{final_section}

In this paper we have introduced linear mean-field games, in which the interaction between agents is determined by the empirical distribution of their states. Using the mean-field approach, we have demonstrated the existence of approximate Nash equilibria for finite-population games when the number of agents is sufficiently large. Under mild technical conditions, it can be shown that the limiting mean-field problem has an equilibrium. We have then applied the policy obtained from this equilibrium to the finite population game and proved that it constitutes an $\varepsilon(N)$-Nash equilibrium for games with $N$-agents, where an explicit relation between $\varepsilon(N)$ and $N$ has been established. Then, we have used linear programming and the linearity of transition probabilities in the mean-field term to convert the game into a generalized Nash equilibrium problem in the limit of an infinite number of agents. We have also developed an algorithm for finding a mean-field equilibrium with a guarantee of convergence.

\begin{acks}[Acknowledgments]
The author would like to thank Professor Tamer Basar and Professor Serdar Y\"{u}ksel for their constructive comments that improved the quality of this paper.
\end{acks}

\bibliographystyle{imsart-number}

\end{document}